\documentclass[12pt]{article}
\usepackage[T1]{fontenc}
\usepackage{amsfonts,amsmath,amsthm,amssymb,verbatim,bm}
\usepackage[bottom]{footmisc}
\usepackage{chngcntr}
\usepackage[pdftex,letterpaper,left=1.2in,top=1in,bottom=1in,right=1.2in]{geometry}
\usepackage[round,longnamesfirst]{natbib}
\usepackage{graphicx, color}
\graphicspath{{Figures/}} 
\usepackage{setspace}
\usepackage{epsf}
\usepackage[margin=10pt,labelfont=bf,labelsep=endash]{caption}
\usepackage{subcaption}
\usepackage{floatrow}
\usepackage{fullpage}
\usepackage[super]{nth}
\usepackage{enumitem}
\setlist[enumerate]{itemsep=1pt, topsep=0pt}
\setlist[itemize]{itemsep=1pt, topsep=0pt}
\usepackage[usenames,dvipsnames]{xcolor}
\usepackage{titlesec}
\usepackage[titletoc]{appendix}
\usepackage[nottoc,notlot,notlof]{tocbibind}
\usepackage[colorlinks,pagebackref=false]{hyperref}

\setcounter{MaxMatrixCols}{10}

\definecolor{dark-red}{rgb}{0.60,0.15,0.15}
\definecolor{dark-blue}{rgb}{0.15,0.15,0.65}
\definecolor{light-blue}{rgb}{.2,1,1}
\definecolor{medium-blue}{rgb}{0,0,0.5}
\definecolor{dark-green}{rgb}{0,.5,0}
\definecolor{indigo}{rgb}{0.29, 0.0, 0.51}

\hypersetup{
    pdftitle={Test-Optional Admissions},
    pdfauthor={Dessein and Frankel and Kartik},
    citecolor={dark-blue},
    bookmarksnumbered=true,
    urlcolor=indigo,
    linkcolor=dark-red}

\AtBeginDocument{} 

\makeatletter
  \renewcommand\@seccntformat[1]{\csname the#1\endcsname.{\hskip.7em\relax}} 
\makeatother

\newcommand{\appendixref}[1]{\hyperref[#1]{Appendix \ref{#1}}}
\newcommand{\onlineappendixref}[1]{\hyperref[#1]{Supplementary Appendix \ref{#1}}}


\newtheorem{lemma}{Lemma}

\newtheorem{proposition}{Proposition}

\newtheorem{corollary}{Corollary}

\theoremstyle{remark}

\theoremstyle{definition}
\newtheorem{example}{Example}

\onehalfspacing
\parindent.2in 
\setlength{\parskip}{6pt plus 1pt minus 1pt}
\setlength{\footnotesep}{0.15in}
\titlespacing\section{0pt}{10pt plus 2pt minus 2pt}{4pt plus 2pt minus 2pt} 
\titlespacing\subsection{0pt}{6pt plus 2pt minus 2pt}{2pt plus 2pt minus 2pt} 
\titlespacing\subsubsection{0pt}{6pt plus 2pt minus 2pt}{0pt plus 2pt minus 2pt} 


\newcommand{\ind}{{1}\hspace{-2.5mm}{1}}
\newcommand{\Col}{^c}
\newcommand{\Soc}{^s}

\newcommand{\E}{\mathbb{E}}
\renewcommand{\epsilon}{\varepsilon}
\renewcommand{\phi}{\varphi}
\renewcommand{\bar}{\overline}
\DeclareMathOperator*{\supp}{Supp}

\def\u{\underline}
\def\o{\overline}
\newcommand{\Reals}{\mathbb R}

\newcommand{\usoc}{u^s}

\newcommand{\mailto}[1]{\href{mailto:#1}{\texttt{#1}}} 
\def\bi{\begin{itemize}}
\def\ei{\end{itemize}}
\newcommand{\citepos}[1]{\citeauthor{#1}'s (\citeyear{#1})}


\let\oldfootnote\footnote
\renewcommand\footnote[1]{\oldfootnote{\hspace{.4mm}#1}}

\makeatletter
\renewenvironment{proof}[1][\proofname] {\par\pushQED{\qed}\normalfont\topsep6\p@\@plus6\p@\relax\trivlist\item[\hskip\labelsep\bfseries#1\@addpunct{.}]\ignorespaces}{\popQED\endtrivlist\@endpefalse}
\makeatother

\setlength {\marginparwidth }{2cm} 

\begin{document}

\title{Test-Optional Admissions\thanks{We thank Nageeb Ali, Peter Arcidiacono, Chris Avery, Joshua Goodman, Michelle Jiang, Adam Kapor, Frances Lee, Annie Liang, Brian McManus, Victoria Mooers, Jacopo Perego, Marcin P\c{e}ski, Jonah Rockoff, Miguel Urquiola, Congyi Zhou, Seth Zimmerman, and various seminar and conference audiences for helpful comments and discussions. Guo Cheng, Tianhao Liu, and Yangfan Zhou provided excellent research assistance.}}

\author{Wouter Dessein\thanks{Columbia University, Graduate School of Business. Email: \mailto{wd2179@columbia.edu}.} \and Alex Frankel\thanks{University of Chicago, Booth School of Business; Email: \mailto{afrankel@chicagobooth.edu}.} \and Navin Kartik\thanks{Columbia University, Department of Economics; Email: \mailto{nkartik@columbia.edu}.}}
\maketitle
\thispagestyle{empty}

\begin{abstract}
 Many U.S. colleges now use test-optional admissions. A frequent claim is that by not seeing standardized test scores, a college can admit a student body it prefers, say with more diversity. But how can observing less information improve decisions? This paper proposes that test-optional policies are a response to social pressure on admission decisions. We model a college that bears disutility when it makes admission decisions that ``society'' dislikes. Going test optional allows the college to reduce its ``disagreement cost''. We analyze how missing scores are imputed and the consequences for the college, students, and society.
 \end{abstract}

\newpage
\setcounter{page}{1}
\setcounter{section}{0}
\setcounter{proposition}{0}

\section{Introduction}
\label{sec:intro}

With college admissions in the United States under increasing scrutiny, there is a vibrant debate about the role of standardized test scores.  The last decade has seen an increase in colleges going \emph{test optional}, i.e., not requiring applicants to submit standardized test scores. The University of Chicago \href{https://www.washingtonpost.com/local/education/a-shake-up-in-elite-admissions-u-chicago-drops-satact-testing-requirement/2018/06/13/442a5e14-6efd-11e8-bd50-b80389a4e569_story.html}{made waves} when it adopted this policy in 2018. By 2019, one third of the 900+ colleges that accepted the Common Application \href{ https://www.diverseeducation.com/students/article/15114774/report-testoptional-trends-in-pandemic-year-for-college-admissions}{did not require test scores}.

For obvious reasons, the Covid-19 pandemic dramatically increased the adoption of test-optional policies: in the 2021--22 application season, 95\% of Common-Application colleges \href{https://www.highereddive.com/news/share-of-common-app-colleges-requiring-admissions-tests-continues-to-plumme/610629/}{did not require test scores}. But even after the pandemic, notwithstanding a few prominent reversals, most U.S. colleges \href{https://www.highereddive.com/news/over-1700-colleges-wont-require-sat-act-for-fall-2023-up-from-same-poin/628267/}{have stayed test optional}. Furthermore, although our paper emphasizes college admissions, the shift away from requiring standardized tests 
is also pervasive in other segments of education.\footnote{According to \href{https://www.forbes.com/sites/akilbello/2022/01/14/did-harvard-just-signal-the-end-of-the-testing-era-it-started/?sh=645f36ff55b1}{Forbes magazine} in January 2022, ``The most public break-up [with standardized tests] has been in undergraduate admissions and the SAT/ACT, but kindergarten, high school, and graduate school admission offices have also been rejecting standardized tests \ldots [there is a] near-universal shift away from standardized tests that started before the pandemic but has accelerated in the last eighteen months.''}

Proponents of test-optional admissions often cite concerns that standardized testing may disadvantage low-income students and students of color. Indeed, many schools that go test optional claim to do so in order to increase the racial and income diversity on campus.\footnote{For example, in 2015 a George Washington University \href{https://gwtoday.gwu.edu/standardized-test-scores-will-be-optional-gw-applicants}{school official said that} ``The test-optional policy should strengthen and diversify an already outstanding applicant pool and will broaden access for those high-achieving students who have historically been underrepresented at selective colleges and universities, including students of color, first-generation students and students from low-income households”.}  But a test-mandatory college is free to admit students with low test scores if they are strong on other dimensions. Only with the 2023 U.S.~Supreme Court's ruling on \emph{SFFA v.~Harvard} has the consideration of race been banned. Moreover, other components of applications may also be subject to racial and income disparities,\footnote{In a 2016 \href{https://www.washingtonpost.com/news/grade-point/wp/2016/03/03/letters-of-recommendation-an-unfair-part-of-college-admissions/}{Washington Post opinion} titled `Letters of recommendation: An unfair part of college admissions,' John Boeckenstedt from DePaul University argues that: ``If you wanted to ensure that kids from more privileged backgrounds have a better chance to get into the schools with the most resources, letters of recommendation would be one of the things you’d start with.''} and test scores are unlikely to be completely uninformative about college preparedness. Indeed, MIT reinstated its testing requirement for the 2022-23 admissions cycle, \href{https://mitadmissions.org/blogs/entry/we-are-reinstating-our-sat-act-requirement-for-future-admissions-cycles/}{arguing that} ``standardized tests help us identify socioeconomically disadvantaged students who lack access to advanced coursework or other enrichment opportunities
that would otherwise demonstrate their readiness for MIT.''  Similarly, a  \href{https://senate.universityofcalifornia.edu/_files/committees/sttf/sttf-report.pdf}{2020 report} by the University of California found that standardized test scores help predict student success, across demographic groups and disciplines, even after controlling for high school GPA \citep{UCreport2020}.

Hence a puzzle: if a college can use test scores as it would like, why choose not to have access to a student's score? Why throw away potentially valuable information? Indeed, in \citet{DFK24-PP}, we show that under a broad set of conditions, a college that can freely use information---and commit to how it will do so---cannot benefit from going test optional. The conditions allow for differential costs of test preparation and different distributions of test scores for reasons unrelated to ability.\footnote{The conditions do preclude prohibitive costs of sitting, as opposed to studying, for the test. It is not clear to us that, outside of pandemics, the costs of sitting for a test are a compelling rationale for going test optional. In fact, prior to the Covid-19 pandemic, \href{https://www.edweek.org/teaching-learning/which-states-require-students-to-take-the-sat-or-act}{25 U.S. states required} either the SAT or ACT for high school graduation.}

There are a few ways out of the puzzle. Students may not trust admission policy pronouncements. Being test optional could then be a credible signal that a college does not seek to put much weight on test scores. Relatedly, a college may be limited in how effectively it can control its admissions officers, who may put too much weight on test scores due to moral hazard or other reasons. An alternative explanation is that students may simply dislike taking tests, and hence apply to otherwise-similar colleges that do not require them. Yet another possibility is that test optional allows colleges to report higher average (submitted) standardized test scores and thereby improve their rankings; \citet{CDCC13} provide empirical evidence of  strategic admission decisions in this regard.

Our paper proposes, instead, that test-optional policies are driven by social pressure on admission decisions. 
When a selective college admits a low-scoring student while rejecting a high-scoring student with an otherwise similar GPA, it may be subject to social pressure from a community that disagrees with the weight it puts on tests versus legacy status or racial diversity. Such pressure may come from the broader US public; indeed, polls show that the general public believes race and legacy status should be deemphasized relative to test scores.\footnote{In a \href{https://www.pewresearch.org/fact-tank/2022/04/26/u-s-public-continues-to-view-grades-test-scores-as-top-factors-in-college-admissions/}{2022 PEW research survey}, only 26\% of respondents thought that race or ethnicity should be even a minor factor in college admissions, with 25\% for legacy status. By contrast, 39\% thought that test scores should be a major factor, and an additional 46\% thought they should be a minor factor.} Public pressure is exemplified by the  lawsuits that led to the aforementioned 2023 U.S.~Supreme Court ruling against affirmative action in college admissions. 
Alternatively, social pressure may come from a college's internal stakeholders with the opposite views; for instance, current students may have a stronger preference for diversity in admissions than the college administration.

Our broad argument is that by hiding score disparities among students who do not submit their test scores, a college lowers the cost of disagreement with ``society''. The lower disagreement cost may also lead the college to admit students it likes more, based on diversity, extracurriculars, or legacy preferences---but this is not necessary for the college to benefit from not seeing test scores. Notably, our argument does not rely on any naivety: we assume that society is Bayesian and understands that students who don't submit scores tend to have lower scores. Also important, we show that being test optional can help a college regardless of whether, for any given group of students, it wishes to be less selective than society (i.e., to use a lower test-score threshold) or  more selective (a higher threshold).  

In more detail, our model in \autoref{sec:model} has a college with preferences over which students to admit, based on both their non-test observable characteristics (e.g., GPA, race, SES, extracurriculars, and legacy status)~and test scores. Society has its own preferences. Society does not make any decisions, but the college
places some value on minimizing disagreement between its admission decisions and those that society would make.  The college commits to an admissions policy: an acceptance rule mapping observables and test scores into an admission decision, and, in a test-optional regime, an \emph{imputed} test score that it assigns to students who don't submit scores (as a function of non-test observables). A student submits their test score if and only if it is higher than the score the college would impute. Society assesses test scores in a Bayesian manner:
non-submitters are evaluated based on their expected test score, given non-test observables and submission behavior. 

Whenever society disagrees with the college's admission decision, the college incurs a \emph{disagreement cost}. Importantly, this cost is only based on the admission decision given the available information; society does not judge the college for its choice of information.\footnote{In their experimental study concerning the use of test scores, \citet{LiangXu24} emphasize direct preferences over what information to observe.}
If the college accepts an applicant that society wants to reject, the disagreement cost is proportional to society's disutility from acceptance. If the college rejects an applicant society wants to accept, this cost is proportional to society's disutility from rejection.
The college chooses its admissions policy---both the imputation and acceptance rules---to maximize its ex-ante expected utility from admissions decisions less disagreement costs.  

When a college can freely choose its imputation rule, the college can't be worse off under test optional than test mandatory. It could simply replicate the test-mandatory outcome by imputing a low enough test score that all students submit. Our key insight, though, is that the college can benefit---strictly---from going test optional. 

To see how, consider the case of a student with non-test observables such that the college is less selective than society: the college has a lower test-score bar than society to admit this type of applicant. For instance, take students who excel in fencing and suppose the college values able fencers more than society.\footnote{According to the \href{https://www.nytimes.com/2022/10/17/us/fencing-ivy-league-college-admissions.html?smid=url-share}{New York Times} in October 2022, ``a way with the sword can help students stand out in the college admissions game\ldots because each good school, especially Ivy League schools, have fencing.''}  
One option for the college is to impute a very high test score for 
fencers, with the policy of admitting all those with the imputed score (or higher). Then none of the 
fencers submit their scores, and all of them are admitted. The cost for the college is that it admits some very low-scoring fencers. The benefit, though, is that bringing high-scoring fencers
into the non-submission pool reduces disagreement costs from admitting some fencers that the college wanted but society did not. Indeed, if society is willing to accept 
fencers with average test scores, then imputing a very high score allows the college to accept all of these now-undifferentiated fencers at \emph{zero} disagreement cost. 
At the extreme, if the college prefers to admit every 
fencer regardless of test score, it obtains its first best for this group---they are all admitted, with no disagreement cost.

Now consider students with observable characteristics at which the college is \emph{more} selective than society. Suppose the college prefers to admit applicants from New Jersey only if they score above 55, whereas society loves the Garden State and would like to admit any of its students with a score above 25. If test scores are submitted, the college incurs a disagreement cost for any rejected applicant with a score above 25. Consequently, under test mandatory, the college uses a score threshold between 25 and 55, say 40. Under test optional, however, the college can do strictly better among New Jerseyans by imputing a score between 40 and 55 and then rejecting non-submitters. Imputing the score of 40 would replicate the test-mandatory admissions outcome but lower the disagreement cost because all New Jerseyans with scores below 40 don't submit; now there is no differentiation between those below 25, where there is no disagreement, and those in the 25--40 range, where there is disagreement. The college may do even better by imputing a score strictly above 40, which would reject more students and thus improve, from its perspective, its New Jerseyan student body.

We show in \autoref{sec:analysis} that the above examples encapsulate the general logic for how a college can benefit from going test optional. Notice that in these examples, fencers  benefit---some weakly and some strictly---from a school going test optional, whereas New Jerseyans are hurt. \autoref{sec:flexible} establishes that these consequences for student welfare hold generally: student groups for whom the college is less selective than society benefit from test optional, while student groups for whom the college is more selective are hurt. Furthermore, at every student observable, society is (weakly) harmed by test optional.

For test optional to never harm a college, the imputation rule must be judiciously chosen. In practice, 
many schools promise that non-submitters will be treated ``fairly''. The \href{https://admission.usc.edu/apply/test-optional-policy-faq/}{University of Southern California's statement} is representative:  ``applicants will not be penalized or put at a disadvantage if they choose not to submit SAT or ACT scores.'' While the meaning of such policies is ambiguous, one interpretation is that of
a \emph{no adverse inference} imputation rule: a student who does not submit a test score is imputed their expected score given other observables, but crucially, not conditioning on non-submission.  \autoref{sec:restricted} studies test-optional outcomes under this or some other fixed imputation rule. We establish a sense in which students with good non-test observables (and low test scores) benefit when a college goes test optional because it increases their admission rate. Students with intermediate observables (and intermediate scores) are harmed. Other students are unaffected.

When constrained to use an imputation rule like no adverse inference, colleges may be worse off under test optional than test mandatory (by contrast with flexible imputation). Determining the college's preferred testing regime requires more structure on the environment.  We show that, under mild conditions, for any fixed imputation rule, the college is worse off under test optional if either the social-pressure intensity is low or the college's preferences are similar to society's. We use this result to discuss the recent reversal by a few colleges---e.g., \href{https://president.dartmouth.edu/news/2024/02/reactivating-satact-requirement-dartmouth-undergraduate-admissions}{Dartmouth} and \href{https://hub.jhu.edu/2024/08/16/undergrad-admissions-standardized-test-requirement/}{John Hopkins}---to test mandatory.

Towards further insight on restricted imputation, we also discuss an extended example in \autoref{sec:AA}. There, we study how the college's choice of testing regime can depend on whether affirmative action---that is, an admission rule that directly conditions on race---is allowed. Our interest stems from the public and legal debate around the use of affirmative action in the US, culminating in the 2023 Supreme Court ruling that severely limited race-based admissions. In the example, we show that banning affirmative action can push a college from preferring test mandatory to preferring {test blind}.\footnote{Test blind is when students simply cannot submit tests scores, or the college ignores test scores entirely. In our model, this is equivalent to test optional in which non-submission is imputed as the highest test score.} The intuition is that if students in the college's favored group have lower test scores, then the college values tests less when it cannot condition on group membership. We explain how a ban on affirmative action may thus ``backfire'' on society, which prefers that the college be test mandatory.

Finally, \autoref{sec:competition} briefly explores competition between colleges. A test-optional college facing a test-mandatory competitor may face a form of adverse selection: if the college admits students who do not submit test scores, the students with unobserved low scores are more likely to accept this admission offer, because they are less likely to be admitted somewhere else. This adverse selection may push a college to match its competitor's testing regime, choosing test-optional only when the other is test-optional. As we show, however, this force may also go in the opposite direction, meaning that a college sometimes wants to mismatch its competitor's testing regime.

\paragraph{Related literature.} There are several empirical papers studying test-optional (or test-blind) college admissions using data from prior to the Covid-19 pandemic \citep[e.g.,][]{belasco2015test,saboe2019sat,bennett2022untested}. In a  review, \citet[pp.~53--54]{dynarski2022survey} conclude that test-optional policies had limited effect on increasing diversity and applications, but may have helped colleges boost their public rankings by raising the average (submitted) standardized test score of enrolled students.
Using data from a sample of student test-takers in the 2021-22 admission cycle, \citet{McManus23} document sophisticated submission behavior. Not only did students withhold low scores, but they conditioned their choice on their other academic characteristics as well as colleges’ selectivity and testing policy statements.

The use of standardized tests in college admissions has been studied in economic theory as well. \citet{krishna2022pareto} propose pooling test scores into coarse categories to reduce the wasteful costs of test preparation. \citet{lee2021gaming} study how low-powered selection---such as putting less weight on test scores---may help a college by reducing students' incentives to improve their scores.\footnote{More broadly, in a ``muddled information'' framework \citep{FK19}, \citet{FK22} and \citet{ball2022scoring} explore how a decisionmaker should commit to underutilize manipulable information to improve decision accuracy.} \citet{GargLiMonachou20} assume that some students have no access to standardized tests, which means that a test-optional/blind policy broadens the applicant pool even though it provides less information about those who do apply. \citepos{borghesan2022heterogeneous} structural analysis of college admissions also emphasizes students' costs of taking standardized tests: going test blind reduces a college's information but allows students with high test-taking costs to apply. He predicts that this policy would reduce student quality at top schools without increasing diversity.
Related to costly test-taking is \citepos{ottaviani2020grantmaking} model of (grant) allocation with costly application. They show that using more noisy measures of applicant quality can enlarge an applicant pool.

In contrast to the papers in the preceding paragraph, our argument for why colleges benefit from going test optional does not rely on the cost of obtaining or improving test scores, nor on the cost of applying to a college. Our model assumes that students are simply endowed with a test score and application is costless.\footnote{In their empirical studies, \citet{goodman2016learning} and \citet{hyman2017act}) find that government policies mandating high school students to take standardized tests increase college enrollment rates of low-income students, either because the students discover they are higher-achieving than they thought or because colleges discover and then recruit students through such testing. More generally, scholars have suggested that eliminating application barriers for low-income students can increase the number of students that apply to and enroll in selective colleges \citep{hoxby2012missing,hoxby2013expanding,goodman2020take}.}

Among theoretical papers on affirmative action in college admissions, a topic we take up in \autoref{sec:AA}, most related is \citet{chan2003does}.\footnote{Various other papers model aspects of college admissions that we do not address, such as early admissions \citep[e.g.,][]{avery2010early}, managing enrollment uncertainty \citep[e.g.,][]{che2016decentralized}, college tuition determination \citep[e.g.,][]{fu2014equilibrium}, and which colleges a student should apply to \citep[e.g.,][]{ChadeSmith06,alishorrer2023}.} They  model a college that values both student quality and diversity. When affirmative action is banned, the college may adopt an admission rule that puts less weight on academic qualifications, such as standardized test scores,  in order to promote diversity. The logic is related to that of statistical discrimination \citep{phelps1972statistical,arrow2015theory}, except that instead of race serving as a signal of qualification, qualification serves as a signal of race. 
Notably, \citet{chan2003does} do not provide a rationale for why a college strictly benefits from not observing test scores; in their model, being test blind is equivalent to being test mandatory and putting zero weight on tests. In our model, social pressure can lead a college to strictly prefer test blind.

Our paper also connects to the large literature on voluntary disclosure of verifiable information. The canonical result here is that of ``unraveling'' \citep{Grossman81,Milgrom81}, which corresponds to all students submitting their scores even when it is optional.  \href{https://thehill.com/changing-america/enrichment/education/3758713-in-college-admissions-test-optional-is-the-new-normal/}{It is reported}, however, that fewer than half of U.S. college applicants who applied early decision in Fall 2022 submitted test scores. Unraveling does not arise in our model because the college can commit to how it will treat students who do and do not submit their score.

Finally, in our model, the college's and society's information depends on which students submit test scores. This is determined by the testing regime and, under test optional, the college's imputation rule. Our work thus relates to Bayesian persuasion and information design \citep{KG11,bergemann2019information}. Unlike in much of that literature, our college cannot choose arbitrary information structures.
One paper on information design that connects to our themes is 
\citet{LiangLuMuOkumura24}. They explore how a designer may ban the use of certain inputs, such as test scores, because of disagreement with how a decisionmaker would use those inputs. While they explain why society may choose to prevent a college from using test scores, we show why a college may itself choose to not see test scores. Like us, \citet{LiangLuMuOkumura24} also discuss why a college may choose to not see test scores if society bans affirmative action; however, unlike our mechanism of social pressure, theirs entails conflicting preferences between the college and its admissions officers.

\section{An Illustrative Example} \label{sec:motivatingex}

Consider a single student who has applied to a college. (An alternative interpretation is that of a mass of students who share common observable characteristics.) The student's test score $t$ is drawn from a uniform distribution between 0 and 100. Society's utility from admitting the student is $\usoc(t) = t-40$, and its utility from not admitting the student is normalized to $0$. So, ignoring indifference, society wants to admit the student if and only if their test score is above 40. 
The college receives some information about the student's test score---we will consider different possibilities below---and then chooses whether to accept or reject the student. Society then judges the college's decisions given the available information. Importantly, the college and society have the same information; information asymmetry between them is not our driving mechanism. Rather, what is crucial is that the college faces disagreement costs from social pressure for making decisions that society disagrees with.

\paragraph{Disagreement cost.}  The disagreement cost is proportional to the extent of society's disagreement with the college's decision, given the available information. Concretely, disagreement equals the increase in society's expected utility if society were to make admission decisions as opposed to the college. If the college accepts the student and society would also prefer to accept them (i.e., $\mathbb{E}[\usoc(t)] > 0)$, or the college rejects the student and society would also prefer to reject ($\mathbb{E}[\usoc(t)] < 0$), then the college bears no disagreement cost. That is, in each of those cases, the respective disagreement costs $d_{A=1}$ and $d_{A=0}$ are both $0$, where $A=1$ denotes acceptance and $A=0$ denotes rejection. However, if the college rejects the student when society prefers to accept, the college bears a disagreement cost of $d_{A=0}=\E[\usoc(t)]>0$. Likewise, if the college accepts a student that society prefers to reject, the disagreement cost is $d_{A=1}=-\E[\usoc(t)]>0$.   See \autoref{fig:example40}.

\paragraph{Why not observe test scores?}We now illustrate how the college can reduce disagreement costs by not observing test scores. 

\begin{figure} \caption{Disagreement cost from accepting ($A=1$) and rejecting ($A=0$) an student.\label{fig:example40}}
	\[\includegraphics[width = 5.5 in]{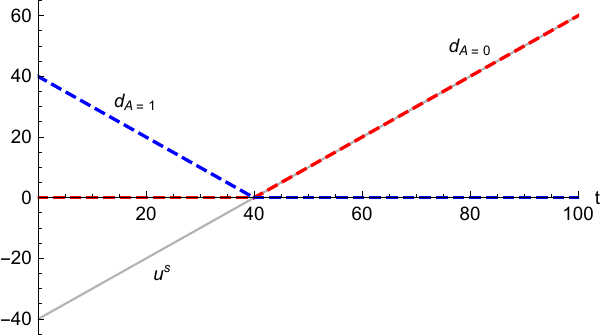}\]
\end{figure}

First consider \emph{test mandatory}: the student's test score is observed. If the college chooses to accept regardless of the test score, it bears a disagreement cost of $40-t$ whenever the score is below $40$ (and $0$ otherwise), and so the expected disagreement cost is $\int_0^{40}\frac{1}{100} (40-t)\mathrm{d} t = 8$. Analogously, if the college instead chooses to reject regardless of test score, it bears an expected disagreement cost of $\int_{40}^{100} \frac{1}{100} (t-40)\mathrm{d}t = 18$.

Now consider \emph{test blind}: the student's test score is not observed. Here, with no information beyond the uniform prior over the test score, society evaluates the student as if their score were equal to the expected value $\E[t]=50$. If the college accepts the student, it now faces a disagreement cost of 0: absent test score information, society agrees that the student should be accepted. So if the college were going to accept the student regardless of their test score, then hiding the score reduces its expected disagreement cost from 8 to 0.

If the test-blind college rejects the student,  it does face a disagreement cost: society's expected utility from admitting the student is $\E[t]-40 = 10$, and so the college's disagreement cost from rejection is 10. Nonetheless, hiding the test score reduces the expected disagreement cost of rejecting all applicants from $18$ to $10$.

The upshot is that for either decision the college makes---so long as it is independent of the test score when that is observed---the college can reduce expected disagreement cost by hiding the test score, i.e., going test blind. The fundamental reason is that both disagreement cost curves $d_{A=1}(t) = \max\{40-t, 0\}$ and $d_{A=0}(t) = \max\{t-40, 0\}$ are convex, as seen in \autoref{fig:example40}. Mathematically, the reduction of expected disagreement cost by going test blind is a consequence of Jensen's inequality: the disagreement cost evaluated at the expected test score is smaller than the expectation of the  disagreement costs across test scores.

We highlight two points. First, social pressure induces a form of non-consequentialist preferences: the college has preferences over what information it has even when its decision is independent of the test score. Second, society judges the college only on its admission decision given the available information. 
That is, the disagreement cost is evaluated using the information the college has, not the information it could have had.\footnote{Mathematically, if the disagreement cost from admitting a student without a test score were $\E\left[\max\{-u^s(t),0\}\right]$ instead of our assumed $\max\{-\E[u^s(t)],0\}$, then admitting the student under test blind would lead to the same disagreement cost as under test mandatory.} 
Moreover, the college is not directly penalized for its choice not to require test scores.

\paragraph{Test optional.} If the college seeks to admit
only some students---rather than accepting or rejecting all of them---it might improve upon test blind by going \textit{test optional}: the student can choose whether to submit their score.  For example, suppose the college wants to admit students with test scores above 60, while society's preferred threshold remains 40. Under test optional, the college could commit to treat non-submitters ``as if'' they have a score of 60, and only accept students with scores (strictly) above 60. 
If students with scores below 60 (optimally) do not submit their score,\footnote{A small acceptance probability for students with a score of 60, including non-submitters, would make this strategy strictly optimal for students.} this policy implements the college's desired threshold with zero disagreement cost. (Society's expected utility from admitting a non-submitting student is  $\E[t|\text{non-submission}]-40 = -10$, so it agrees with rejecting non-submitters.)

\paragraph{ A tradeoff.} In general, a college faces a tradeoff between using information to make better decisions and not seeing information to reduce disagreement costs. We explore this tradeoff in the rest of the paper. We study how test-optional colleges decide which applicants to admit, how students choose whether to submit test scores, and how the resulting outcomes differ from a test-mandatory benchmark.

\section{A Model of Admissions under Social Pressure}
\label{sec:model}

We model a student applying to a college, with a broader ``society''  playing a passive role.  The student can be viewed as a representative applicant; we will sometimes use the plural students for exposition. Society represents any external group that might scrutinize admission decisions and has preferences over who ought to be admitted: alumni, parents, local governments, the popular press, and even the judicial branch.

The student is endowed with some publicly observable characteristics and a test score, which is their private information. In a test-mandatory regime, the student mechanically submits their test score, making it public to the college and society. In a test-optional regime, the student chooses whether to submit their score.  In either regime, the college chooses whether to admit the student based on their observable characteristics and, if submitted, their test score. Both the college and society have preferences over whether the student should be admitted as a function of their observables and their true test score. The college also places some weight on reducing disagreement between its admission decision and the decision society would want it to make, given all available information. 

\subsection{Model Primitives}
\label{sec:primitives}

\paragraph{Observables and test scores.} Formally,  the student/applicant has a \emph{type} $(x,t)\in \mathcal X \times \Reals$, where $x$ is an \emph{observable} (or vector of \emph{observables}) and $t$ is the \emph{test score}. The distribution of observables is given by $F_x$ and the test score has conditional distribution $F_{t|x}$.\footnote{More precisely, $\mathcal X$ is a measurable space and $F_x$ is a probability measure on that space. To simplify some technicalities, we assume that for each $x$, $F_{t|x}$ is either continuous or is discrete with no accumulation points, and that all relevant expectations exist. \label{fn:continuousdiscrete}} 

The observable $x$ is public information to all players. The test score $t$ is private information to the student, which may be submitted ($S=1$) or not ($S=0$). Submitting the score makes it observable to all other players.  Our primary interest is in two college admission regimes: \emph{test mandatory}, in which test scores must be submitted, and \emph{test optional}, in which scores may be submitted. We will also talk about \emph{test blind}, wherein the score cannot be submitted.

\paragraph{Preferences.} The college decides whether to admit the student (denoted $A=1$) or not ($A=0$), based on observables $x$ and, if submitted, the test score $t$. The student strictly prefers a higher probability of being admitted. Society's utility and the college's material or ``underlying'' utility if the student is accepted are given, respectively, by
\begin{align*}
    u^s(x,t)&:=v^s(x)+w^s(x) t,\\
    u^c(x,t)&:=v^c(x)+w^c(x) t,
\end{align*}
where the superscripts have the obvious mnemomic (\emph{s}ociety and \emph{c}ollege), and each $w^i(\cdot)>0$ for $i=s,c$. We view monotonicity of these preferences in the test score as natural; the affine specifications aid subsequent interpretation and tractability. Both society's and the college's underlying utility are normalized to $0$ if the student is not admitted.

In addition to its underlying utility, the college suffers disutility from social pressure on its admission decision. To formalize that disutility, let $t^s$ denote the test score society treats the student as having; this will be determined endogenously. 
Anticipating equilibrium, think of $t^s=t$ if the score is submitted, and $t^s=\E[t|x,S=0]$ under non-submission. For any $t^s$, society's \emph{disagreement} with the college's decision is given by
\begin{equation}
\label{eq:d}
d(x,t^s,A):=\begin{cases}
\max\{u^s(x,t^s),0\} & \text{if } A=0,\\
\max\{-u^s(x,t^s),0\} & \text{if } A=1.
\end{cases}
\end{equation}
This disagreement captures society's benefit if it were to decide on the admission decision instead of the college. There is no disagreement if, given the available information, society's preferred decision is the same as the college's decision; but when there is a conflict in preferred decisions, then disagreement is linear in the magnitude of society's expected benefit from its preferred decision. As before, the monotonicity here is natural; linearity is for tractability. The assumed linearity of $u^s(x,t)$ in the test score $t$ is what allows \eqref{eq:d} to be written with $u^s(x,t^s)$ instead of $\E[u^s(x,t)]$, where the expectation is with respect to the distribution of $t$ given $x$ and the student's (non-)submission.

The college's overall payoff $U^c$ is its underlying utility less the (scaled) disagreement:
\begin{equation}
\label{eq:U^c}
U^c(x,t,t^s,A):= 
A u^c(x,t)-\delta d(x,t^s,A),
\end{equation}
where $\delta>0$ is a parameter capturing the intensity of social pressure on the college. We refer to $\delta d (\cdot)$ as the \emph{disagreement cost} to the college.

\paragraph{Admissions policies.}
The college's admissions policy has two components, one of which---how to treat students who don't submit test scores---is irrelevant under test mandatory.  

First, given the student's observable $x$, we assume that the college treats non-submission of a test score as equivalent to some specific test score, which we call the \emph{imputation}. More precisely, there is an \emph{imputation rule} $\tau:\mathcal X\to [-\infty,+\infty]$,\footnote{The co-domain is the extended reals for technical convenience when test scores can be arbitrarily small or large; if test scores lie in a compact set, then we could take the co-domain of $\tau$ to be that compact set.} with $\tau(x)$ the imputation for observable $x$. We will be interested in two settings: either the college can choose the imputation rule arbitrarily, which we call \emph{flexible imputation}, or the imputation rule is exogenously given, which we call \emph{restricted imputation}. 

Second, the college chooses an \emph{acceptance rule} $\alpha:\mathcal X \times [-\infty,+\infty] \to [0,1]$, where $\alpha(x,\hat t)$ is the probability of admitting a student with observable $x$ and imputed/submitted test score $\hat t$. We stress that the acceptance rule cannot (directly) condition on the student's true test score, and it does not distinguish between imputed and submitted scores---this captures our notion that imputing a score means treating a non-submitting student as if they have  submitted that imputed score. As in \citet{chan2003does}, we assume that $\alpha$ must be \emph{monotonic} in the sense that for any $x$, $\alpha(x,\cdot)$ is weakly increasing.

\paragraph{College's problem.} Since the college's acceptance rule is monotonic, there is a simple best response for the student: submit their score if $t>\tau(x)$ and don't submit if $t\leq \tau(x)$. We restrict attention to the student playing this strategy. Given this student strategy, we assume society is Bayesian in evaluating the student. In particular, if the student submits their  test score, then $t^s=t$; if the student does not submit, then $t^s=L(\tau(x)| x)$, where $L$ (mnemonic for ``lower expectation'') is defined by 
$$L(t'|x):= \E[t | t\leq t', x].\footnote{For $t' \leq \inf \supp[ F_{t|x}]$, we set $L(t'|x):=\inf \supp[F_{t|x}]$.}$$

The college's problem is to choose---commit to---its imputation rule $\tau$ (under test optional with flexible imputation) and its acceptance rule $\alpha$, to maximize its expected payoff $U^c$, anticipating the student's best response and society's Bayesian inferences. Note that when $\tau(x)=\infty$, the college is effectively test blind among students with observable $x$, and analogously it is effectively test mandatory when $\tau(x)=-\infty$.

\subsection{Discussion of the Model}

\subsubsection{Imputation and acceptance rules}

A test-optional admissions policy in our model is an imputation rule paired with a monotonic acceptance rule. We view the framework of imputation as an appealing and versatile way to capture how colleges may actually treat missing test scores. For example, it allows us to discuss cultural or legal norms about how non-submitters should be treated (as elaborated below).  Monotonicity of the acceptance rule is without loss if students can ``freely dispose'' of test scores---a student with test score $t$ can costlessly reduce it to any value less than $t$. 

At a theoretical level, however, the natural alternative would be to specify an admissions policy as an arbitrary mapping from observables, whether the student submits their score, and the score if submitted, to an admissions probability. We show in \autoref{sec:general_policies} that the outcome under this alternative is the same as that under flexible imputation. In other words, given flexible imputation, it is without loss of generality to stipulate that the college treats missing test scores via imputation and uses a monotonic acceptance rule.

\subsubsection{Restricted imputation rules}
\label{sec:model-restrictedimputation}

With flexible imputation, the college can arbitrarily choose how to impute missing test scores. With restricted imputation, we consider the other extreme, in which an imputation rule is exogenously specified. 
Although our analysis will not have any results tied to particular restricted imputation rules, we allow for them to cover some colleges' practice of publicly promising not to ``penalize'' or ``disadvantage'' students who don't submit scores. We interpret such promises as mapping to some version of what we call the \emph{no adverse inference} imputation rule, $\tau(x) = \mathbb{E}[t|x]$. Contrast this expression to the Bayesian imputation rule used by society, in which $t^s = \mathbb{E}[t|x, S=0]$: no adverse inference updates based on observables but not on the choice not to submit.  That is, the college imputes test scores as if students who did not submit chose to do so non-strategically.\footnote{After switching to test optional in 2020, Dartmouth \href{https://web.archive.org/web/20220706131809/https://admissions.dartmouth.edu/follow/blog/lee-coffin/dartmouth-adopts-test-optional-policy-class-2025}{announced} ``Our admission committee will review each candidacy without second-guessing the omission or presence of a testing element.''}

Even when ignoring the submission decision, the college might condition its expectation only on some subset of observables.  For instance, if the observable vector $x = (x_0, x_1)$ has component $x_1$ corresponding to ``grades'' and component $x_0$ corresponding to ``demographics'', the college might impute $\tau(x) = \mathbb{E}[t|x_1]$ rather than $\tau(x)=\mathbb{E}[t|x]$. Indeed, certain demographic features such as race are legally protected categories, and it may be forbidden to impute scores differently based on these factors---even if they are in fact predictive. In the limiting case, a college might deem \emph{all} observables irrelevant, in which case it would impute $\tau(x) = \mathbb{E}[t]$ identically for all applicants.

\subsubsection{Key assumptions}

\paragraph{Simplifications.} Our model makes a number of simplifying assumptions in order to focus on the channel of social pressure as an explanation for going test optional. For instance, we abstract away from a student's decision of how much to study for, or whether to even take, the test. Instead, we endow students with a test score. We then give the college and society a reduced form preference over these test scores rather than microfounding any inference over underlying ability. We also don't model the student's application decision. 

Another simplification is that our college has a fixed underlying utility threshold for admission. In particular, even if a switch from test mandatory to test optional leads to a different number of admitted students, the college does not raise or lower its  threshold for admission in order to keep its class size constant.\footnote{\label{fn:capacity}Our model can be consistent with a capacity constraint if we interpret the zero utility level as the utility both the college and society get from admitting students from a group in excess supply, with no social pressure. For example, many highly-selective colleges claim they could fill their entire class with a group of fairly homogeneous students who submit near-perfect SAT scores.} We return to this point in the \hyperref[sec:conclusion]{Conclusion}.

\paragraph{Student submission behavior.}

\cite{McManus23} show that student submission behavior does appear to vary with their belief about how colleges might impute missing test scores. We assume that students submit a test score if their true score $t$ is strictly above the college's imputed value $\tau(x)$, and they withhold the score if $t$ is weakly below $\tau(x)$. Higher test scores can only help admission chances. So, as discussed, this strategy guarantees a student the highest chance of admission. While there may be other optimal student strategies (when submitting a test score $t$ would lead to the same acceptance probability as not submitting),\footnote{In particular, when $t=\tau(x)$, the student is necessarily treated identically regardless of whether they submit; the behavior of these student types is immaterial if there are no mass points in the score distribution at $t=\tau(x)$.} a student can safely follow the strategy we focus on even if they do not know which (monotonic) acceptance rule the college is using.

We note that although the strategy is robust to a student's uncertainty over the college's acceptance rule, it is sensitive to the student's belief about their imputed test score. Of course, in the real world, students face uncertainty about how colleges treat missing test scores; we return to this point in the \hyperref[sec:conclusion]{Conclusion}.

\subsection{Ex-Post Utility} 
\label{sec:expost}
When $t^s=t$, as will be the case if the student submits their score, \autoref{eq:d} and \autoref{eq:U^c} imply that the college's net benefit from admitting the student is given by
\begin{align}
U^c(x,t,t,1)-U^c(x,t,t,0)&=u^c(x,t)-\delta\left[d(x,t,1)-d(x,t,0)\right] \notag\\
&=u^c(x,t)+\delta u^s(x,t) \notag\\
&\propto \frac{1}{1+\delta}u^c(x,t)+\frac{\delta}{1+\delta} u^s(x,t) \notag\\
&=:u^*(x,t).\label{eq:u^*} 
\end{align}
We refer to $u^*(x,t)$ as the college's \emph{ex-post utility}. For a score-submitting student, our disagreement cost formulation implies that the college's net benefit  from admission is equivalent to (i.e., proportional to) a convex combination of the college's underlying utility and society's utility. If the student submits their score, the college's payoff is maximized by admitting the student if and only if (modulo indifference) $u^*(x,t)>0$.

For $i\in\{c,s\}$, we refer to $\u t^i(x)$ such that $u^i(x,\u t^i(x))=0$ as the college/society's test-score \emph{bar} for admission: it is the score threshold such that each would---if unencumbered by social pressure---prefer to admit the student with observable $x$ if and only their score is above that threshold. We denote the ex-post utility bar by $\u t^*(x)$; it is defined by $u^*(x,\u t(x))=0$ and is the threshold above which, accounting for social pressure, the college wants to admit the student.\footnote{More explicitly, since $u^i(x,t)=v^i(x)+w^i(x) t$ and $u^*(x,t)=\left(u^c(x,t)+\delta u^s(x,t)\right)/(1+\delta)$, we compute $\u t^i(x)=-v^i(x)/w^i(x)$ and $\u t^*(x)=-\left(v^c(x)+\delta v^s(x)\right)/\left(w^c(x)+\delta w^s(x)\right)$.} We say that the college is \emph{less selective} than society at observable $x$  if $\u t^c(x)<\u t^s(x)$, while it is \emph{more selective} if $\u t^c(x)>\u t^s(x)$.  In either case, the ex-post utility bar $\u t^*(x)$ is in between the two parties' bars, and it monotonically shifts from $\u t^c(x)$ to $\u t^s(x)$ as the social-pressure intensity parameter $\delta$ increases.

\autoref{fig:prefs} illustrates with a leading specification in which $x \in \Reals$, and for each $i\in\{c,s\}$,  $u^i(x,t)=a^i+x+w^i \times t$. In this specification, the college weights test scores more than society when $w^c>w^s$, and weights test scores less than society when $w^c<w^s$. The three lines indicate the respective test-score bars  at each $x$. 
When the college weights test scores less, as in the figure's left panel, at low $x$ it is more selective (has a higher bar) than society, but at high $x$ it is less selective (has a lower bar); and the reverse when the college weights test scores more than society, as in the right panel.

\begin{figure}[h]
\caption{Test score admission bars for society ($\u t^s$), the college's underlying utility ($\u t^c$), and ex-post utility ($\u t^*$). For this figure, $x \in \Reals$
and $u^i=a^i+x+w^i \times t$.
\label{fig:upics}}

\begin{centering}

\begin{tabular}{cc}
\begin{subfigure}{0.45\textwidth}
\includegraphics[width = 2.75 in]{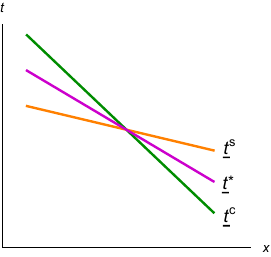}
\caption{College weights tests less than society: $w^c<w^s$.}
\end{subfigure}
&
\begin{subfigure}{0.45\textwidth}
\includegraphics[width = 2.75 in]{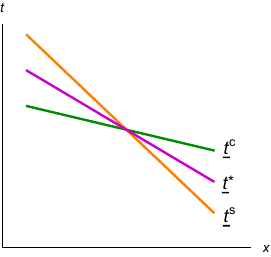}
\caption{College weights tests more than society: $w^c>w^s$.}
\end{subfigure}
\end{tabular}

\end{centering}
\label{fig:prefs}
\end{figure}

\subsection{Test-Mandatory Admissions}
\label{sec:mandatory}

In a test-mandatory regime, both the college and society always know the student's score. In light of social pressure, the college simply maximizes its ex-post utility for each $(x,t)$; its admission decision is determined by the ex-post bar.

\begin{proposition}
\label{prop:mandatory}
    Under test mandatory, the college admits a student with observable $x$ if $u^*(x,t) > 0$ (equivalently, $t > \u t^*(x)$) and rejects the student if $u^*(x,t) < 0$ (equivalently, $t < \u t^*(x)$).
\end{proposition}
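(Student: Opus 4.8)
The plan is to leverage the ex-post utility computation already carried out in \autoref{sec:expost}, since the test-mandatory regime is precisely the case in which the student's true score is always observed, so society treats the student as having $t^s=t$ for every type. The first step is to observe that because everyone submits, there is no inference to be drawn from (non-)submission: $t^s=t$ is pinned down by the true score and is \emph{independent} of the college's acceptance rule $\alpha$. Consequently the college's expected payoff separates additively across types $(x,t)$, and I can optimize the admission probability $\alpha(x,t)\in[0,1]$ type-by-type, without any informational externality linking one type's decision to another's. This separability is special to test mandatory and is what makes pointwise optimization legitimate.

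The second step is the pointwise optimization itself. Holding $(x,t)$ fixed with $t^s=t$, the college's contribution to expected payoff is linear in $\alpha(x,t)$, namely $\alpha(x,t)\,U^c(x,t,t,1)+(1-\alpha(x,t))\,U^c(x,t,t,0)$, so the sign of the coefficient on $\alpha(x,t)$ dictates the optimizer. That coefficient is exactly the net benefit $U^c(x,t,t,1)-U^c(x,t,t,0)$, which \autoref{eq:u^*} shows is proportional to $u^*(x,t)$. Hence the college sets $\alpha(x,t)=1$ when $u^*(x,t)>0$ and $\alpha(x,t)=0$ when $u^*(x,t)<0$ (with indifference when $u^*(x,t)=0$), which is the admit/reject rule in the statement.

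The third step is to verify that this pointwise optimizer is admissible, i.e., that the required monotonicity of $\alpha(x,\cdot)$ does not bind. Because $u^*(x,t)=\tfrac{1}{1+\delta}u^c(x,t)+\tfrac{\delta}{1+\delta}u^s(x,t)$ is a convex combination of two functions each affine and strictly increasing in $t$ (the slopes $w^c(x),w^s(x)$ are positive), $u^*(x,\cdot)$ is itself strictly increasing in $t$. Thus $\{t:u^*(x,t)>0\}$ is an upper interval, so the threshold rule is automatically weakly increasing and the constraint is slack. Strict monotonicity of $u^*(x,\cdot)$ also delivers the threshold reformulation: since $\u t^*(x)$ is by definition the unique root of $u^*(x,\cdot)$, we have $u^*(x,t)>0\iff t>\u t^*(x)$ and $u^*(x,t)<0\iff t<\u t^*(x)$.

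I do not anticipate a genuine obstacle, since \autoref{eq:u^*} already performs the essential algebra. The only points meriting care are making the additive separability across types explicit—this is what justifies optimizing the acceptance probability type-by-type and is particular to the test-mandatory regime, where no inference from submission arises—and confirming that the monotonicity restriction on the acceptance rule is not violated by the pointwise optimizer, which follows immediately from strict monotonicity of $u^*(x,\cdot)$ in $t$.
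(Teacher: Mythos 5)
Your proof is correct and follows essentially the same route as the paper's: both reduce the test-mandatory problem to pointwise maximization of $A\,U^c(x,t,t,1)+(1-A)\,U^c(x,t,t,0)$ over each type $(x,t)$ and invoke the identity in \autoref{eq:u^*} showing the net benefit is proportional to $u^*(x,t)$. Your additional checks---that separability across types justifies pointwise optimization and that the monotonicity constraint on $\alpha(x,\cdot)$ is slack because $u^*(x,\cdot)$ is strictly increasing (slope $(w^c(x)+\delta w^s(x))/(1+\delta)>0$)---are sound details the paper leaves implicit.
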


As the social-pressure intensity parameter $\delta$ increases, the college becomes less selective at observable $x$ if, based on its underlying utility, it is more selective than society ($\u t^c(x)>\u t^s(x)$), and conversely if it is less selective than society. Plainly, the student with observable $x$ benefits in the former case and is harmed in the latter case.\footnote{Benefit/harm here is in the sense of set inclusion. For example, suppose the college is more selective than society at $x$. Then a student with that observable may be rejected when social pressure intensity is low, and admitted when intensity is high; or they may receive the same outcome at both intensities.
}

\section{Test-Optional Admissions}
\label{sec:analysis}

\subsection{Optimal Acceptance Rule}

In a test-optional regime, our college has two instruments: the imputation rule and the acceptance rule. Only the imputation  rule affects students' score submission, and in turn the college's and society's information. Moreover, the only decision that students make is whether to submit their score. So, no matter the imputation rule, the college's optimal acceptance rule simply maximizes its ex-post utility given students' submission behavior. Formally, recalling that $L(\tau(x)|x)$ is the average test score of non-submitters with observable $x$ given the imputation $\tau(x)$:

\begin{lemma}
\label{lem:prelim}
Consider test optional with any imputation rule $\tau$. The college has an optimal acceptance rule in which a student with observable $x$ and imputed/submitted score $\hat t$ is accepted if (i) $\hat t>\tau(x)$ and $u^*(x,\hat t)>0$ or if (ii) $\hat t=\tau(x)$ and $u^*(x,L(\tau(x)|x))>0$, and is rejected otherwise. 
\end{lemma}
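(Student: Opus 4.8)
The plan is to optimize the college's objective pointwise in the observable $x$, since the acceptance rule $\alpha(x,\cdot)$ may be chosen separately for each $x$ and the college's payoff is an expectation over $x$. Fix $x$ and recall the student's submission strategy: submit iff $t>\tau(x)$. Hence the realized imputed/submitted score $\hat t$ takes only two kinds of values — $\hat t = t > \tau(x)$ for a submitter (whom society treats as $t^s = t$), and $\hat t = \tau(x)$ for the entire pool of non-submitters $\{t \le \tau(x)\}$ (treated as $t^s = L(\tau(x)|x)$). Any $\hat t < \tau(x)$ is never reached, so $\alpha$ there is payoff-irrelevant and matters only through the monotonicity constraint. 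I would split the conditional-on-$x$ objective into a submitter part and a non-submitter part and optimize each.

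For submitters, each type $t>\tau(x)$ has its own distinct value of $\hat t$, so $\alpha(x,t)$ can be chosen independently across them. Since $t^s=t$ here, \autoref{eq:u^*} gives that the net benefit of accepting rather than rejecting such a student is $U^c(x,t,t,1)-U^c(x,t,t,0) = (1+\delta)\,u^*(x,t)$. As $1+\delta>0$, the pointwise-optimal choice is $\alpha(x,t)=1$ when $u^*(x,t)>0$ and $\alpha(x,t)=0$ when $u^*(x,t)<0$, which is exactly clause (i).

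For the non-submitter pool, all types $t\le\tau(x)$ must receive the common acceptance probability $p:=\alpha(x,\tau(x))$ and the common treated score $t^s=L:=L(\tau(x)|x)$. Writing the pool's contribution and differentiating in $p$, the per-unit-mass marginal benefit of acceptance is $\E[u^c(x,t)\mid t\le\tau(x),x] - \delta\,[\,d(x,L,1)-d(x,L,0)\,]$. The first term equals $u^c(x,L)$ because $u^c(x,\cdot)$ is affine, so its conditional expectation is its value at the conditional mean $L$; and from \autoref{eq:d} one has $d(x,L,1)-d(x,L,0) = \max\{-u^s(x,L),0\} - \max\{u^s(x,L),0\} = -u^s(x,L)$. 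Thus the marginal benefit is $u^c(x,L)+\delta\,u^s(x,L) = (1+\delta)\,u^*(x,L)$, which, scaled by the nonnegative pool mass $F_{t|x}(\tau(x))$, is linear in $p$ with sign equal to that of $u^*(x,L)$. Hence $p=1$ when $u^*(x,L)>0$ and $p=0$ when $u^*(x,L)<0$, which is clause (ii).

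The remaining step, which I expect to be the only real obstacle, is to check that this pointwise-optimal rule is feasible — i.e., that it can be realized by a weakly increasing $\alpha(x,\cdot)$, so that optimizing pointwise does not conflict with monotonicity. Here the key facts are that $u^*(x,\cdot)$ is strictly increasing (its slope is proportional to $w^c(x)+\delta w^s(x)>0$) and that $L(\tau(x)|x)\le\tau(x)$. Together these give $u^*(x,L)\le u^*(x,\tau(x)) < u^*(x,\hat t)$ for every submitter $\hat t>\tau(x)$, so whenever the pool is accepted ($u^*(x,L)>0$) all higher submitters are accepted too — ruling out a downward jump at $\hat t=\tau(x)$ — while if the pool is rejected a jump up among higher scores is permitted. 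Defining $\alpha(x,\hat t)$ on the unreached region $\hat t<\tau(x)$ to equal the pool's acceptance value then yields a genuinely monotonic rule that attains the pointwise optima, completing the proof. (Ties, where $u^*=0$ on the pool or on a submitter type, leave the college indifferent and so do not affect optimality; mass points at $t=\tau(x)$ are absorbed into the pool since such students withhold.)
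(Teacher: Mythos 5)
Your proposal is correct and follows essentially the same route as the paper's proof: optimize pointwise given the student's submission strategy, handle submitters via the ex-post utility as in the test-mandatory case, reduce the non-submitter pool to the sign of $u^*(x,L(\tau(x)|x))$ using linearity of $u^c$ and the identity $d(x,L,1)-d(x,L,0)=-u^s(x,L)$, and verify monotonicity from $L(\tau(x)|x)\le\tau(x)$ and $u^*(x,\cdot)$ increasing. The only (payoff-irrelevant) deviation is off path: you assign $\hat t<\tau(x)$ the pool's acceptance value, whereas the lemma's stated rule rejects such submissions --- which your own monotonicity argument already shows is equally feasible and optimal.
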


Any optimal acceptance rule must have the college making ex-post optimal decisions on path. The lemma also specifies rejecting any student who has a test score below the imputed level but who chooses, off path, to submit. When the non-submitters are accepted, 
we could replace this behavior with any other monotonic rule and the outcome would be the same. When the non-submitters are rejected, though, monotonicity of the admission rule requires the college to also reject any score submission below the imputed score. In this latter case, commitment to the policy may be necessary: off path, the college may be rejecting students that it ex-post prefers to accept.   For example, suppose test scores at some observable $x$ are distributed uniformly between 0 and 100, and the imputation is $\tau(x) = 50$. Students with scores between 0 and 50 don't submit, leading to an average score of 25 for non-submitters.  If the college's ex-post bar for acceptance is in between 25 and 50, say $\u t^*(x) = 40$, then the college will reject the non-submitters. The college must then reject all off-path submissions of scores below 50, including---ex-post suboptimally---those above its ex-post bar of 40.

\autoref{lem:prelim} allows one to deduce, at any given observable $x$, the college's value from inducing any belief (the expected test score), say $\tilde t$. This is because the lemma tells us what acceptance decision $A$ the college would make, and we can plug that into \eqref{eq:U^c} with $t^c=t^s=\tilde t$. But even so, the college's problem cannot be solved with standard information-design tools because of a constraint on what information the college can generate: it can only choose an imputation and then rely on the students' score disclosure.

\subsection{Flexible Imputation}
\label{sec:flexible}

We now turn to studying optimal admission policies under flexible imputation. Clearly, the college can ensure that it is no worse off than under test mandatory: after all,  the imputation rule $\tau(\cdot)=-\infty$ ensures that all students submit their scores. But when and how can the college do better?

In choosing its imputation $\tau(x)$ for some observable $x$, the college trades off making better admission decisions with reducing disagreement cost. Raising $\tau(x)$ leads fewer students to submit their test scores. The cost is that the college now has less information with which to make admissions decisions. The benefit is that by pooling together a larger set of test scores (those of the non-submitters), the college can reduce the disagreement cost it bears with society, as we saw in \autoref{sec:motivatingex}. In particular, consider two students who are both rejected or both accepted. If their test scores are either both below society's bar $\u t^s(x)$ or both above, the disagreement cost is the same regardless of whether these students submit their scores or are pooled together. But if these students are on opposite sides of society's bar, then the disagreement cost is lower when the students are pooled together.

When solving for the optimal admissions policy, the college's problem is separable across observables. That is, we can optimize at each observable $x$ and then ``stitch'' together the solutions across $x$'s to get the globally optimal admission policy. 

Given some fixed $x$, it is useful to consider separately the case in which the college is less selective than society ($\u t\Col(x) <\u t^*(x) < \u t\Soc(x)$) and the case in which it is more selective ($\u t\Soc(x) < \u t^*(x) < \u t\Col(x)$).\footnote{The remaining case, $\u t^c(x)=\u t^*(x)=\u t^s(x)$, is trivial, as there is no disagreement at the observable $x$. The first-best is achieved when the college uses imputation $\tau(x)=\u t^*(x)$ and accepts a student if and only if they submit a score $t>\tau(x)$.} For both cases, we will assume without loss that the imputation level $\tau(x)$ is set as $\tau(x) \geq \u t^*(x)$, and that any submitted score $t > \tau(x)$ is accepted.\footnote{Suppose the college were to reject imputed/submitted scores up to some threshold $t'>\tau(x)$. Then it could instead raise the imputation level to $t'$, still reject non-submitters, and now accept all submitted scores. This alternative policy leads to the same admission decisions but weakly lowers disagreement costs by pooling a superset of scores. Given that the college accepts any submitted score $t>\tau(x)$, \autoref{lem:prelim} implies that $\tau(x)\geq \u t^*(x)$.}

\paragraph{College is less selective than society.} 

When the college is less selective, setting $\tau(x) = \u t^*(x)$ and rejecting non-submitters replicates not only the test-mandatory admission decisions, but also the college's test-mandatory payoff. This is because all of the scores being pooled together are below society's acceptance threshold $\u t^s(x)$. Furthermore, the college does worse if it sets $\tau(x)>\u t^*(x)$ and then rejects non-submitters: it is now rejecting students that it preferred to accept even if it had to pay a disagreement cost to do so. Altogether, if the college rejects non-submitters, then it cannot improve on setting $\tau(x)= \u t^*(x)$ and replicating the test-mandatory outcome.

The college might improve on test mandatory, however, by {accepting} non-submitters at some observable. Monotonicity of the acceptance rule means that the college would then accept all students with this observable. With all of these students being accepted, the college would minimize disagreement costs by setting the imputation level to infinity, so that none of these students submit scores.\footnote{If $\E[t|x]>\u t\Soc(x)$, then any large enough $\tau(x)$ would also be optimal as that would ensure that society prefers to accept the pool of non-submitters, resulting in zero disagreement cost.} Of course, relative to test mandatory, the college would then be admitting too many low-scoring students. Hence:

\begin{proposition}
\label{prop:lessselective}
Consider flexible imputation and some observable $x$. When the college is less selective than society ($\u t\Col(x) < \u t^*(x) < \u t\Soc(x)$), it is optimal for the college to either:
\begin{enumerate}
\item \label{prop:lessselective-admitall} Impute $\tau(x)=\infty$ and accept students regardless of imputed/submitted score $\hat t$; or 
\item \label{prop:lessselective-replicate} Replicate the test-mandatory outcome by imputing $\tau(x) = \u t^*(x)$, rejecting students with imputed/submitted score $\hat t \le \u t^*(x)$,  and accepting students with $\hat t > \u t^*(x)$.
\end{enumerate}
\end{proposition}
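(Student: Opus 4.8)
The plan is to exploit the reduction already noted before the statement: it is without loss to restrict attention to imputations $\tau(x)\ge \u t^*(x)$ under which every submitted score $t>\tau(x)$ is accepted. Given such a policy, monotonicity of $\alpha(x,\cdot)$ leaves exactly one binary choice---whether the non-submitter pool (which carries imputed score $\tau(x)$) is accepted or rejected. If it is rejected we are in the \emph{reject-non-submitters} family, which admits precisely the students with $t>\tau(x)$; if it is accepted then, since all submitted scores exceed $\tau(x)$, monotonicity forces \emph{every} student at $x$ to be admitted, i.e.\ the \emph{accept-everyone} family. These two families, each optimized over $\tau(x)$, therefore exhaust the candidate policies, and the proposition amounts to showing that the best member of the first family is $\tau(x)=\u t^*(x)$ (option 2) and the best member of the second is $\tau(x)=\infty$ (option 1). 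I would then compare the two optima.

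For the reject-non-submitters family I would first verify \emph{exact} payoff replication at $\tau(x)=\u t^*(x)$, not merely replication of admission decisions. Because the college is less selective, $L(\u t^*(x)\mid x)=\E[t\mid t\le \u t^*(x),x]\le \u t^*(x)<\u t\Soc(x)$, so $u^s(x,L(\u t^*(x)\mid x))<0$ and the rejected pool generates zero disagreement; the accepted submitters have $t^s=t$ and hence exactly the test-mandatory per-student payoff; and the students rejected under test mandatory (those with $t<\u t^*(x)<\u t\Soc(x)$) also carry zero disagreement there. Summing gives equality with the test-mandatory payoff of \autoref{prop:mandatory}. I would then show that raising $\tau(x)$ above $\u t^*(x)$ (still rejecting non-submitters) is weakly worse: it drops from the admitted set exactly the students with $t\in(\u t^*(x),\tau(x)]$, each of whom, as a submitter, contributes $u^c(x,t)-\delta\max\{-u^s(x,t),0\}$, which equals $(1+\delta)u^*(x,t)>0$ on $(\u t^*(x),\u t\Soc(x))$ and equals $u^c(x,t)>0$ above $\u t\Soc(x)$; meanwhile the pooled rejected mass can only gain disagreement cost. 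Both effects push the same direction, so $\tau(x)=\u t^*(x)$ is optimal within this family with no genuine tradeoff to resolve.

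For the accept-everyone family the underlying utility $\int u^c(x,t)\,\dI F_{t|x}$ is fixed, so the college simply minimizes expected disagreement cost. Writing $g(s):=\max\{-u^s(x,s),0\}$, which is convex in $s$ because $u^s(x,\cdot)$ is affine, the expected disagreement under imputation $\tau(x)$ is $\int_{t>\tau(x)}g(t)\,\dI F_{t|x}+g(L(\tau(x)\mid x))\,\Pr[t\le\tau(x)\mid x]$. The key step is a Jensen argument showing this is minimized by maximal pooling, $\tau(x)=\infty$, which yields $g(\E[t\mid x])$: partitioning at any finite $\tau(x)$ into the pooled low group (mean $m_1$, weight $p$) and the high group (mean $m_2$, weight $1-p$), convexity of $g$ gives $g(\E[t\mid x])\le p\,g(m_1)+(1-p)\,g(m_2)$, while a second application of Jensen on the high group gives $\int_{t>\tau(x)}g\,\dI F_{t|x}\ge (1-p)\,g(m_2)$; chaining these yields $g(\E[t\mid x])\le \int_{t>\tau(x)}g\,\dI F_{t|x}+p\,g(m_1)$, exactly the claim. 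This is the same convexity/Jensen logic as in the motivating example of \autoref{sec:motivatingex}.

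I expect the main obstacle to be bookkeeping rather than a deep difficulty: getting the \emph{exact} (not merely decision-level) replication right in the first family, and handling the boundary and indifference conventions (mass points at $\tau(x)$, and the footnoted case $\E[t\mid x]>\u t\Soc(x)$ in which finite large imputations tie with $\tau(x)=\infty$). Once both families are optimized, the conclusion follows by taking the larger of the two optimal payoffs, which is precisely the dichotomy in the statement.
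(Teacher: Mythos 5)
Your proof is correct, and it reaches the dichotomy by a genuinely different organization than the paper. The paper attaches the optimal acceptance rule to each imputation level (via \autoref{lem:prelim}) and then studies the college's payoff as a single function of $\tau$: it is constant on $[-\infty,\u t^*]$, decreasing on $[\u t^*,t^\dagger)$ by a right-derivative computation (the integrand $-(u\Col(x,\tau)+\delta u\Soc(x,\tau))f(\tau)$ is nonpositive since $u^*\ge 0$ there), and increasing on $[t^\dagger,\infty]$ by a pooling lemma (\autoref{lem:breakupcosts}, part 1), where $t^\dagger$ solves $L(t^\dagger\mid x)=\u t^*(x)$ and marks where the pool-acceptance decision flips. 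You instead split the (reduced) feasible set into the reject-pool and accept-all families and make two \emph{global} comparisons: in the reject family, raising $\tau$ above $\u t^*$ drops students whose accepted-submitter contribution is $(1+\delta)u^*(x,t)>0$ on $(\u t^*,\u t\Soc)$ and $u\Col(x,t)>0$ above $\u t\Soc$, while the rejected pool's disagreement can only rise; in the accept-all family, your double application of Jensen shows $\tau=\infty$ minimizes disagreement, replacing the paper's \autoref{lem:breakupcosts} with an inline convexity argument. The two routes use the same economic ingredients (convexity of disagreement in the pooled score; positivity of ex-post utility above the bar), but they buy different things: the paper's monotonicity analysis characterizes the entire payoff curve in $\tau$ (this is what \autoref{fig:EUontauLessSelective} plots) and pins down the pivot $t^\dagger$, whereas your family decomposition avoids derivatives and density assumptions entirely, works even for pairings that are ex-post suboptimal (e.g., rejecting a pool with $L(\tau\mid x)>\u t^*$, which your per-student accounting handles but the paper's Lemma-1-based value function never visits), and so is somewhat more robust to general score distributions, modulo the mass-point conventions you flag. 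Your verification of \emph{exact} payoff replication at $\tau=\u t^*$ (zero disagreement on the rejected pool since $L(\u t^*\mid x)<\u t\Soc$, and identical submitter payoffs) matches the paper's in-text claim and is handled there by parts 2 and 3 of \autoref{lem:breakupcosts}.
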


\autoref{fig:more_eager_accept} and \autoref{fig:more_eager_reject} illustrate the two possibilities. 

\begin{figure}[p]

\caption{College's test-optional payoff as a function of the imputed test score, 
when the college is less selective than society. \label{fig:EUontauLessSelective}}

\begin{centering}

\begin{subfigure}{1 \textwidth}

\caption{College's payoff is maximized by setting $\tau =\infty$ (or any $\tau \geq 80$)
and accepting non-submitters. 
\label{fig:more_eager_accept}}

\[\includegraphics[width = 4.55 in]{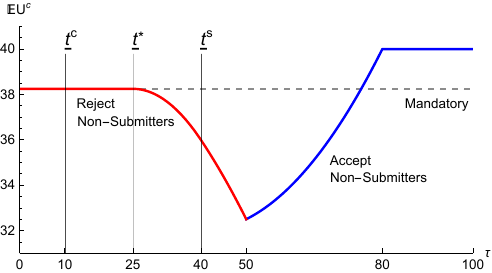}\]

\end{subfigure}

\bigskip \bigskip

\begin{subfigure}{1 \textwidth}

\caption{College's payoff is maximized by setting $\tau\leq 55=\u t^*$ 
and rejecting non-submitters.
\label{fig:more_eager_reject}}

\[\includegraphics[width = 4.55 in]{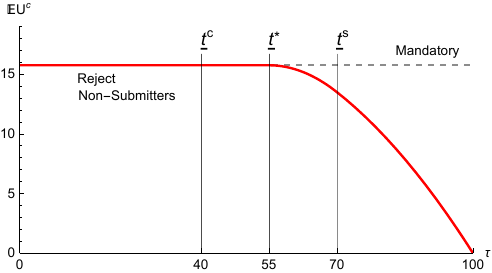}\]

\end{subfigure}

\bigskip

\parbox{.95 \textwidth}{
\footnotesize Fix  some observable $x$. The distribution of test scores given $x$ is $t\sim U[0,100]$. Utilities are $u\Col(x,t) = t - \underline t \Col$, $u\Soc(x,t) = t - \underline t \Soc$, and $\delta = 1$, implying $\underline t^* = (\underline t \Col + \underline t \Soc)/2$. Along the red portion of the curves, the college prefers to reject non-submitters; along the blue portion, the college prefers to accept them.
}

\end{centering}

\end{figure}

\paragraph{College is more selective than society.} Let us turn to observables at which the college is more selective than society. Unlike when the college is less selective, the college can improve on test mandatory by imputing the ex-post optimal bar, rejecting non-submitters, and accepting submitters. Pooling together the scores of all the rejected students now reduces disagreement cost because society prefers to reject some of those students (those with  $t< \u t^s(x)$) and accept others ($t \in (\u t^s(x),\u t^*(x)$). In general, the college might do even better by choosing a higher imputation, altering the set of admitted students.

\begin{proposition}
\label{prop:moreselective}
Consider flexible imputation and some observable $x$. When the college is more selective than society ($\u t\Soc(x) < \u t^*(x) < \u t\Col(x)$), the college optimally chooses imputation $\tau(x)\in [\u t^*(x), \u t\Col(x)]$; it rejects students with imputed/submitted score $\hat t \le \tau(x)$ and it accepts students with $\hat t>\tau(x)$. 
\end{proposition}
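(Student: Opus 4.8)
The plan is to exploit the separability across observables (already noted before the proposition) and reduce the problem at a fixed $x$ to a one-dimensional optimization over the imputation level $\tau(x)$. Following the footnote preceding the proposition, I would first argue the two \emph{without loss} reductions: (i) any optimal policy can be taken to set $\tau(x)\geq \u t^*(x)$ and accept all submitted scores $t>\tau(x)$; and (ii) since the college is more selective ($\u t^s(x)<\u t^*(x)<\u t^c(x)$), \autoref{lem:prelim} tells us that with $\tau(x)\geq\u t^*(x)$ the non-submitters (who have average score $L(\tau(x)|x)<\tau(x)$, hence below $\u t^*(x)$) are rejected. So the candidate policies are exactly indexed by a single threshold $\tau(x)\geq \u t^*(x)$: reject $\hat t\leq \tau(x)$, accept $\hat t>\tau(x)$. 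The proposition then amounts to showing the optimal such threshold lies in $[\u t^*(x),\u t^c(x)]$.

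Next I would write the college's expected payoff at $x$ as an explicit function of the threshold $\tau=\tau(x)$. Admitted students (those with $t>\tau$) each contribute their ex-post utility $u^*(x,t)$; rejected non-submitters (those with $t\leq\tau$) contribute a disagreement cost. Because the non-submitters are rejected, the disagreement cost for a non-submitter with society-imputed score $t^s=L(\tau|x)$ is $\delta\max\{u^s(x,L(\tau|x)),0\}$, borne by the whole rejected mass. Concretely, letting $G(\tau):=F_{t|x}(\tau)$ denote the mass of non-submitters,
\begin{equation*}
\Pi(\tau)=\int_{\tau}^{\infty} u^*(x,t)\,\dI F_{t|x}(t)-\delta\,G(\tau)\max\{u^s(x,L(\tau|x)),0\}.
\end{equation*}
I would then analyze how $\Pi$ changes as $\tau$ moves. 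The key observations are monotonicity-of-margin facts: raising $\tau$ above $\u t^c(x)$ strictly hurts, because it drops marginal students whose ex-post utility $u^*(x,t)>0$ (for $t\in(\u t^*(x),\u t^c(x))$, $u^*>0$ since $\u t^*(x)$ is the ex-post bar, and indeed all the way up to $\u t^c(x)$ these are students the college values even net of society), while the disagreement term can only be reduced by pooling scores already below society's bar—so no compensating gain is available once $\tau\geq \u t^c(x)$. Symmetrically, keeping $\tau\geq \u t^*(x)$ (rather than dipping below) is forced by the reduction in (i)–(ii). Hence the maximizer sits in $[\u t^*(x),\u t^c(x)]$.

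The cleanest route for the upper bound is an exchange/dominance argument rather than differentiating through the $\max$ and the $L(\tau|x)$ term: for any candidate $\tau'>\u t^c(x)$, compare with $\tau=\u t^c(x)$. Lowering the threshold from $\tau'$ to $\u t^c(x)$ \emph{adds} the students with $t\in(\u t^c(x),\tau']$ to the accepted set; each such student has $u^*(x,t)>0$ (in fact $u^c(x,t)>0$ too), so their direct contribution is positive, while moving them out of the rejected pool weakly lowers the disagreement cost because $L(\cdot|x)$ is weakly increasing and the pooled rejected scores shift weakly downward. Thus $\Pi(\u t^c(x))\geq \Pi(\tau')$, with the inequality strict whenever positive mass lies in $(\u t^c(x),\tau']$. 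Combined with the reductions pinning $\tau(x)\geq\u t^*(x)$, this yields $\tau(x)\in[\u t^*(x),\u t^c(x)]$ and the stated accept/reject structure.

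The main obstacle I anticipate is handling the disagreement term carefully when $L(\tau|x)$ crosses society's bar $\u t^s(x)$: the $\max\{\cdot,0\}$ introduces a kink, and one must check that the exchange argument's claim—that lowering $\tau$ toward $\u t^c(x)$ only weakly increases the pooled-rejection disagreement cost—holds across that crossing. I would address this by noting that for $\tau\geq \u t^*(x)>\u t^s(x)$ the non-submitter average $L(\tau|x)$ may exceed $\u t^s(x)$, but the comparison I need is between two thresholds \emph{both} at least $\u t^c(x)$ (for the upper-bound step) versus $\u t^c(x)$ itself, where the monotonicity of $G$ and of $L(\cdot|x)$ suffices to sign the change without resolving the kink explicitly; the technical footnote assumption on $F_{t|x}$ (continuous, or discrete without accumulation points) guarantees $L(\cdot|x)$ is well-behaved and the relevant expectations exist, so the exchange inequalities are valid.
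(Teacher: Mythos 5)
There is a genuine gap, and it sits at the foundation of your reduction. In step (ii) you claim that for any $\tau(x)\geq \u t^*(x)$ the non-submitters are rejected because their average score satisfies ``$L(\tau(x)|x)<\tau(x)$, hence below $\u t^*(x)$.'' That ``hence'' is a non sequitur: $L(\tau(x)|x)\leq\tau(x)$ does not imply $L(\tau(x)|x)\leq \u t^*(x)$ once $\tau(x)>\u t^*(x)$. For large imputations---in the extreme, $\tau(x)=\infty$, where $L(\infty|x)=\E[t|x]$---the pool's average score can exceed the ex-post bar, in which case \autoref{lem:prelim} says the optimal acceptance rule \emph{accepts} the non-submitters, i.e., the college accepts everyone. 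Such accept-all policies are genuinely feasible and are in fact sometimes optimal in the less-selective case (\autoref{prop:lessselective}, part 1), so they cannot be dismissed here without argument; indeed, ruling them out is part of what the proposition asserts (``it rejects students with imputed/submitted score $\hat t\le\tau(x)$''). Your one-dimensional objective $\Pi(\tau)$ hard-codes rejection of non-submitters and therefore understates the college's best payoff at high imputations, so your exchange argument showing $\Pi(\u t\Col(x))\geq\Pi(\tau')$ for $\tau'>\u t\Col(x)$ does not eliminate the imputation $\tau'$ paired with acceptance. The paper's proof handles exactly this: in each subcase it carries out a separate ``Case 2'' comparison for deviations $\tau^h$ with $L(\tau^h)>\u t^*$, computing the accept-all payoff ($\E[u^c(x,t)|x]$, with zero disagreement cost) and showing it is dominated by the candidate optimum---e.g., when $\u t^*<t^\circ<\u t\Col$, because $\int_{-\infty}^{t^\circ}u^c(x,t)f(t)\,\dI t\leq 0$ since $t^\circ<\u t\Col(x)$. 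Your proof needs this branch added; the monotonicity-of-$L$ bookkeeping in your final paragraph does not substitute for it.

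A secondary inaccuracy: your objective integrates $u^*(x,t)$ over accepted students, but since every accepted submitter has $t>\tau(x)\geq\u t^*(x)>\u t\Soc(x)$, society agrees with those acceptances and each contributes $u^c(x,t)$, not $u^*(x,t)$ (the ex-post utility is the \emph{difference} in payoff between accepting and rejecting, not the level; the paper's proof correspondingly writes $\int_\tau^\infty u^c(x,t)f(t)\,\dI t$ minus disagreement terms). Your upper-bound exchange survives this slip only because $u^c$ and $u^*$ happen to agree in sign on $(\u t\Col(x),\tau']$, as you note parenthetically---but on $(\u t^*(x),\u t\Col(x))$ they have opposite signs, so the stated $\Pi$ would give wrong marginal tradeoffs anywhere an interior analysis is needed. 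Your step (i) reduction (accept all submitted scores above the imputation, WLOG $\tau(x)\geq\u t^*(x)$) and the lower-bound argument do match the paper's footnote and its use of the pooling lemma, and the upper-bound exchange is sound \emph{within} the reject-branch; the missing piece is the comparison against accept-all policies.
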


The proposition's proof establishes that the optimal $\tau(x)$ is determined by comparing the function $L(\cdot|x)$, which gives the average test score of non-submitters, with society's bar $\u t\Soc(x)$. Specifically, letting $t^\circ$ be a score at which $L(t^\circ|x)=\u t\Soc(x)$,\footnote{If $L(\cdot|x)$ is everywhere below $\u t\Soc(x)$ then let $t^\circ = \infty$, and if $L(\cdot|x)$ is everywhere above $\u t\Soc(x)$ then let $t^\circ = -\infty$. Otherwise, for simplicity of discussion, we assume there is a solution to 
$L(t^\circ|x)=\u t\Soc(x)$, as is guaranteed when the distribution of $t|x$ is atomless.} the college sets
$$
\tau(x)=
\begin{cases}
\u t^*(x) & \text{ if } t^\circ \leq \u t^*(x)\\
t^\circ & \text{ if } t^\circ \in (\u t^*(x),\u t\Col(x))\\
\u t\Col(x) & \text{ if } t^\circ \geq \u t\Col(x)
\end{cases}.
$$

For the intuition behind \autoref{prop:moreselective}, consider the case in which $t^\circ \in (\u t^*(x), \u t^c(x))$. The optimal admissions policy then involves setting $\tau(x) = t^\circ$, rejecting non-submitters, and accepting submitters.\footnote{To see why this acceptance policy is optimal given the imputation $\tau(x) = t^\circ$, notice that disagreement cost is zero regardless of whether non-submitters are accepted or rejected, because $L(\tau(x)|x) = \u t\Soc(x)$. Since $t^\circ<\u t\Col(x)$, it is better for the college to reject non-submitters at this imputation level. It is better to accept submitters, on the other hand, because $t^\circ>\u t^*(x)$.}  This imputation makes society indifferent over whether to accept the pool of non-submitters, as their expected test score is $L(\tau(x)|x) = \u t\Soc(x)$. Moreover, society wants to accept any submitter, since their score is $t\geq \tau(x)>\u t\Soc(x)$. So the disagreement cost is zero.
Now consider a marginal change of the imputation level $\tau(x)$ from $t^\circ$ to $ t'$.
On the one hand, raising the imputation level $\tau(x)$ to $ t'> t^\circ$ cannot help. Doing so and then rejecting the larger pool\footnote{For any marginal change, the college will still prefer to reject the pool, since the expected test score of non-submitters is strictly below $\u t^*(x)$.} yields the same set of admitted students and the same disagreement cost as setting $\tau(x) = t^\circ$ and then rejecting students with scores $t \in (t^\circ,  t']$; there is no benefit from pooling the scores of these marginal students with those below $t^\circ$ since society does not strictly prefer to reject the pool of non-submitters. But the latter policy is dominated by the originally proposed policy of setting $\tau(x) = t^\circ$ and accepting students with scores $t \in (t^\circ,  t']$, as they provide positive ex-post utility.  On the other hand, lowering the imputation level to $ t'< t^\circ$ also cannot help. Doing so and then rejecting students with $t \in (t', t^\circ]$ yields the same set of admitted students but higher disagreement cost, since society strictly prefers to reject the pool of non-submitters when $\tau(x)= t'$;  doing so and then accepting students with $t \in (t', t^\circ]$ yields a worse set of admitted students from the college's perspective, as $t<\u t\Col(x)$, but identical (zero) disagreement cost.

\autoref{fig:EUontauMoreSelective} illustrates two examples of \autoref{prop:moreselective}. Panel \ref{fig:less_eager_interior} shows a case in which the optimal $\tau(x)$ is in $(\u t^*(x), \u t^c(x))$. Panel \ref{fig:lesseagerfirstbest} shows a case in which the optimal $\tau(x)$ is equal to $\u t^c(x)$, and the college achieves its first best: it accepts students if and only if $t >\u t^c(x)$, and it incurs no disagreement cost. 
Although not illustrated in the figure, it is also possible that the optimal $\tau(x)=\u t^*(x)$.

\begin{figure}[p]

\caption{College's test-optional payoff as a function of the imputed test score, 
when the college is more selective than society.\label{fig:EUontauMoreSelective}}

\begin{centering}

\begin{subfigure}{1 \textwidth}

\caption{College's payoff is maximized by setting $\tau=50\in (\u t^*, \u t^c)$
and rejecting non-submitters. 
\label{fig:less_eager_interior}}

\[\includegraphics[width = 4.55 in]{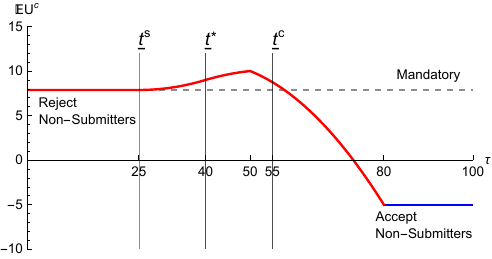}\]

\end{subfigure}

\bigskip \bigskip

\begin{subfigure}{1 \textwidth}

\caption{College achieves its first best by setting $\tau=70=\u t^c$ 
and rejecting non-submitters.
\label{fig:lesseagerfirstbest}}

\[\includegraphics[width = 4.55 in]{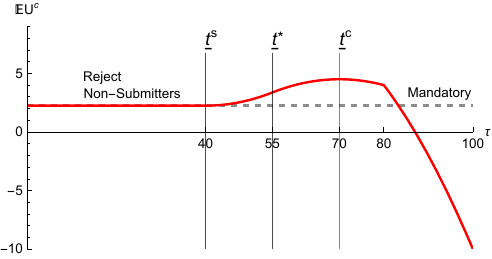}\]

\end{subfigure}

\bigskip

\parbox{.95 \textwidth}{
\footnotesize Fix  some observable $x$. The distribution of test scores given $x$ is $t\sim U[0,100]$. Utilities are $u\Col(x,t) = t - \underline t \Col$, $u\Soc(x,t) = t - \underline t \Soc$, and $\delta = 1$, implying $\underline t^* = (\underline t \Col + \underline t \Soc)/2$. Along the red portion of the curves, the college prefers to reject non-submitters; along the blue portion, the college prefers to accept them.
}

\end{centering}

\end{figure}

\paragraph{How are students affected?} The outcomes of a college-optimal admissions policy under test-optional admissions have clear-cut and intuitive implications for student welfare relative to the outcomes of test-mandatory admissions.

 Students benefit from test optional at observables where the college is less selective than society. Specifically, at these observables, \autoref{prop:lessselective} implies that either the college replicates the test-mandatory admissions, or it admits all students. In the latter case, high-scoring students (with $t>\u t^*(x)$)) are indifferent between test optional and test mandatory, but low-scoring students ($t < \u t^*(x)$) strictly benefit.

By contrast, students are harmed by test optional at observables where the college is more selective than society. Specifically, \autoref{prop:moreselective} implies that when the optimal imputation is $\tau(x)=\u t^*(x)$, the test-mandatory outcome is replicated for all students. But when the optimal imputation is $\tau(x)>\u t^*(x)$, intermediate-scoring students (with $t \in (\u t^*(x),\tau(x)]$) are rejected under test optional while they would have been accepted under test mandatory, whereas the outcomes for low- and high-scoring students ($t<\u t^*(x)$ and $t>\tau(x)$, respectively) are unchanged.

\paragraph{How is society affected?} Relative to test mandatory, a move to test optional with flexible imputation harms society (at least weakly) at every observable $x$. At observables where the college is less selective than society, additional students are now admitted, while the college was already admitting all students society wants admitted. At observables where the college is more selective, fewer students are now admitted, while the college was already rejecting all students society wants rejected. 

In addition, under test optional with flexible imputation, society can be worse off when the social-pressure intensity $\delta$ increases (unlike with test mandatory). 
Consider the example from \autoref{fig:more_eager_accept}, where $\delta=1$ and at the relevant observable the college optimally sets $\tau=\infty$ (effectively test blind), accepting all students. If the intensity drops to $\delta\approx 0$, then it becomes optimal for the college to set $\tau=-\infty$ (effectively test mandatory) and reject students with scores below the ex-post bar. Since society is more selective than the college, society is better off when $\delta\approx 0$ than $\delta=1$. So, concretely, if social pressure reflects the current student body's preference to admit fewer legacy applicants, then increasing pressure could be counterproductive.

\subsection{Restricted Imputation}
\label{sec:restricted}

We now turn to test-optional admissions when the imputation rule $\tau(\cdot)$  is exogenously given. The college only optimizes its acceptance rule. As discussed in \autoref{sec:model-restrictedimputation}, many colleges announce a policy that we interpret as no adverse inference imputation. Restricted imputation also subsumes test-blind admissions, as that is equivalent to $\tau(\cdot)=\infty$.

\paragraph{The optimal acceptance rule.} 

As with flexible imputation, we can solve for an optimal acceptance rule under restricted imputation separately for each observable $x$. An optimal acceptance rule readily follows from \autoref{lem:prelim}:

\begin{proposition} \label{prop:restrictedopt}
Consider some observable $x$ and imputation level $\tau(x)$. An optimal acceptance rule for the college is as follows. A student with submitted score $t>\tau(x)$ is accepted if and only if $t>\u t^*(x)$; a student with submitted score $t<\tau(x)$ is rejected; and a student with imputed/submitted score $\tau(x)$ is accepted if and only if $L(\tau(x) | x)>\u t^*(x)$.
 \end{proposition}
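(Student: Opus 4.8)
The plan is to derive Proposition \ref{prop:restrictedopt} directly from Lemma \ref{lem:prelim}. That lemma already exhibits an optimal (monotonic) acceptance rule for \emph{any} imputation rule $\tau$, and restricted imputation merely fixes $\tau$ exogenously rather than letting the college choose it; since the lemma places no demands on how $\tau$ is determined, no new optimization is required. The only remaining task is to rewrite the lemma's ex-post-utility conditions $u^*(x,\cdot)>0$ in the threshold form asserted by the proposition.

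The one substantive observation is that $u^*(x,\cdot)$ is strictly increasing: being the convex combination $\tfrac{1}{1+\delta}u^c(x,t)+\tfrac{\delta}{1+\delta}u^s(x,t)$, it is affine in $t$ with slope $(w^c(x)+\delta w^s(x))/(1+\delta)>0$ (as $w^c(x),w^s(x)>0$), so it has a unique root $\u t^*(x)$ and satisfies $u^*(x,s)>0 \iff s>\u t^*(x)$ for every $s$. Applying this with $s=\hat t$ turns condition (i) of Lemma \ref{lem:prelim} — accept a submitter with $\hat t>\tau(x)$ when $u^*(x,\hat t)>0$ — into ``accept iff $t>\u t^*(x)$''. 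Applying it with $s=L(\tau(x)|x)$ turns condition (ii) — accept the non-submission pool at $\hat t=\tau(x)$ when $u^*(x,L(\tau(x)|x))>0$ — into ``accept iff $L(\tau(x)|x)>\u t^*(x)$''. Finally, a submitted score $t<\tau(x)$ is off path (students submit iff $t>\tau(x)$) and meets neither (i) nor (ii), so the lemma's rule rejects it, matching the proposition. I would also remark that at the knife-edge values ($u^*(x,\hat t)=0$ or $L(\tau(x)|x)=\u t^*(x)$) the college is indifferent, so the stated rule is one optimal rule among possibly several.

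If instead one wanted a self-contained argument, the plan would be to write the college's objective as a sum over the on-path submitters and the pooled non-submitters, using that expected disagreement at acceptance probability $\alpha$ is $\alpha\,d(x,t^s,1)+(1-\alpha)d(x,t^s,0)$, so each contribution is linear in its acceptance probability. For a submitter the slope is $u^c(x,t)+\delta u^s(x,t)\propto u^*(x,t)$; for the whole non-submission pool, using the affineness of $u^c$ and $u^s$ to replace $\E[t\mid t\le\tau(x),x]$ by $L(\tau(x)|x)$, the aggregate slope is proportional to $u^*(x,L(\tau(x)|x))$. The pointwise maximizer then sets $\alpha=1$ exactly where these slopes are positive. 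The only genuine thing to check is that this pointwise-optimal rule respects monotonicity of $\alpha(x,\cdot)$, and this is where I would spend the care: whenever the pool is accepted we have $L(\tau(x)|x)>\u t^*(x)$, hence (since $L(\tau(x)|x)\le\tau(x)$) also $\tau(x)>\u t^*(x)$, so every submitter is accepted as well and the monotonicity constraint never binds. That monotonicity verification is the only place an obstacle could hide, and it resolves cleanly.
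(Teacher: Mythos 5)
Your proposal is correct and matches the paper's approach exactly: the paper's own proof of \autoref{prop:restrictedopt} is simply ``Follows from \autoref{lem:prelim},'' and you derive it the same way, with the added (routine) step of converting the lemma's conditions $u^*(x,\hat t)>0$ and $u^*(x,L(\tau(x)|x))>0$ into threshold form via the strict monotonicity of the affine function $u^*(x,\cdot)$. Your supplementary self-contained argument and the monotonicity check are sound but unnecessary, since \autoref{lem:prelim} already verified that the stated rule is monotonic.
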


The proposition says that the college's acceptance rule on path is determined by comparing a student's expected score---the score if submitted, or $L(\tau(x)|x)$ if not submitted---with the ex-post bar. (Submission of $t\leq \tau(x)$ only occurs off path.) Whether the college is more or less selective than society does not affect the college's optimal acceptance rule; the distinction matters under flexible imputation (\autoref{sec:flexible}) only because it affects the optimal imputation.

To better understand the admissions policy under restricted imputation, we can consider exogenously varying the imputation $\tau(x)$ at a given $x$. In that case, there is a threshold $T(x)$ such that if $\tau(x) < T(x)$, then it is optimal to reject non-submitters, whereas if $\tau(x)>T(x)$, then it is optimal to accept non-submitters.\footnote{$T(x)\geq \min\{\underline t{\Soc}(x),\underline t{\Col}(x)\}$, implying that if the imputation is below both the college's and society's bars, then it is optimal to reject non-submitters. In fact, $T(x)=\infty$ if $\E[t|x]\leq \u t^*(x)$. If $\E[t|x] > \u t^*(x)$, then so long as the distribution of test scores conditional on $x$ has full support and is atomless, $T(x)$ is the unique solution to $L(T(x)|x)=\u t^*(x)$.} \autoref{fig:EUontauLessSelective} and \autoref{fig:EUontauMoreSelective} illustrate, at some fixed observable $x$, how the college's payoff and its decision of whether to accept non-submitters may depend on the imputation level.

\paragraph{How are students affected?}  Whether students at an observable $x$ benefit from test optional under restricted imputation (relative to test mandatory) depends on how the imputation level $\tau(x)$ and the lower expectation $L(\tau(x)|x)$ compare with the ex-post bar $\u t^*(x)$. To understand how these vary with observables, we must make further assumptions.

\label{increasing_observables}Accordingly, say that there is an ordered \emph{subset of increasing observables} $\mathcal X'$, which we (re-)label to be on the real line ($\mathcal X'\subset \Reals$), if on $x \in \mathcal X'$ it holds that: (i) $u\Col(x,t)=v\Col(x)+t$ and $u\Soc(x,t) = v\Soc(x)+t$, with  $v\Col(x)$ and $v\Soc(x)$ both increasing; (ii) the distribution of test scores has the monotone likelihood ratio property (MLRP),\footnote{That is, there is a test-score density/probability mass function $f(t|x)$ such that for each $x'>x''$ in  $\mathcal X'$ and each $t'>t''$, it holds that $f(t'|x')f(t''|x'')\geq f(t'|x'')f(t''|x')$.} and (iii) $\tau(x)$ is increasing. 
Property (i) guarantees that the ex-post bar $\u t^*(x)$ is decreasing in the observable, while properties (ii) and (iii) guarantee that the expected test score conditional on not submitting, $L(\tau(x) | x)$, is increasing. Property (iii) is implied by property (ii) when $\tau$ is the no adverse inference rule defined by $\tau(x) = \mathbb{E}[t|x]$. 

A subset of increasing observables yields straightforward implications for which students benefit or are harmed by test-optional admissions with restricted imputation, as can be seen using \autoref{fig:increasingobs}. As detailed in the figure's subcaption, students with ``low'' observables (those with $x$ such that $\tau(x)<\u t^*(x)$) are unaffected. Those with ``medium'' observables ($x$ such that $\tau(x)>\u t^*(x)$ but $L(\tau(x)|x) < \u t^*(x)$) are harmed: within this group, high- and low-scorers ($t > \tau(x)$ and $t\le \u t^*(x)$, respectively) are unaffected while medium-scorers lose  (they are only admitted under test mandatory). Students with ``high'' observables ($x$ such that $L(\tau(x)|x) > \u t^*(x)$) benefit: in this group, low-scorers win (they are only accepted under test optional) while the rest are unaffected.

\begin{figure}
\caption{How test-optional with restricted imputation affects students on a subset of increasing observables. 
\label{fig:increasingobs}}

	\[\includegraphics[width = 3.5 in]{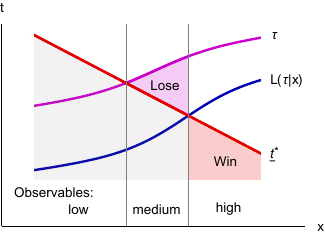}\]

\parbox{.95 \textwidth}{
\footnotesize Under test optional, a student submits their score if $t >\tau$, and a non-submitter is admitted if $L>\underline t^*$. Under either test optional or mandatory, a submitter is admitted if $t>\underline t^*$. So students in the lightly shaded region are rejected under both regimes, while students in the unshaded region are accepted under both. Students in the ``Win'' region are only accepted under test optional; students in the ``Lose'' are only accepted under test mandatory.
}

\end{figure}

\paragraph{Restricted vs.~flexible imputation.} Under restricted imputation, given a subset of increasing observables, students with good observables benefit under test optional while students with medium observables are harmed. By contrast, under flexible imputation, it is students with observables at which the college is less selective than society that benefit and those with observables at which the college is more selective that are harmed. Looking back at \autoref{fig:upics}, we see that these predictions may go in the same qualitative direction, or may go in opposite directions.\footnote{For the example in \autoref{fig:upics}, utilities were defined as $u^i(x,t) = a^i+x + w^i t$, with $x \in \mathbb{R}$ and $w^i>0$; we can rescale these utilities as $u^i(x,t) = \frac{a^i}{w^i}+\frac{x}{w^i} + t$.
So observables are increasing when test scores satisfy the MLRP and $\tau(x)$ is increasing.} In the figure's left panel, where the college weights tests less than society, the college is less selective than society at higher observables. Hence, students with higher observables benefit from test optional under both flexible and restricted imputation. In \autoref{fig:upics}'s right panel, where the college weights tests more than society, we have the reverse: the college is more selective at higher observables. In this case, the predictions about which students benefit from test optional flip depending on whether imputation is flexible or restricted.

Recall that under flexible imputation, a switch from test mandatory to test optional (weakly) benefits the college and harms society.
These effects hold at every observable $x$. By contrast, under restricted imputation, either party may benefit or be harmed by the switch
at any specific $x$---it depends on the imputation level $\tau(x)$.\footnote{The effects need not go in opposite directions for the two parties. Consider the example in \autoref{fig:lesseagerfirstbest}. Society is harmed by any imputation $\tau(x)>\underline{t}^*(x)$, as that leads to fewer students being admitted under test optional. When $\tau(x)$ is sufficiently large, the college is also harmed.} The effects are thus also ambiguous when aggregating across observables. Nevertheless, we can make two general observations. First, under test optional with a fixed imputation rule, society is better off at every observable when the social-pressure intensity $\delta$ increases. This is not true under flexible imputation because it can trigger changes in the imputation rule and alter which students submit scores. But when the set of submitting students is held fixed, an increase in $\delta$ only pulls the college's admission decisions towards what society prefers.

Second, there is a sense in which the college strictly prefers test mandatory when either social pressure is limited or the college's and society's preferences are similar. To make that precise, say that \emph{information is valuable} for the college at a given imputation rule if in the absence of any disagreement cost ($\delta=0$), the college would strictly prefer test mandatory to test optional with that imputation rule.\footnote{This is a mild requirement because the college always weakly prefers test mandatory absent disagreement cost. A strict preference only requires that there is a positive-probability set of observables for which test optional with the restricted imputation does not replicate the admissions outcome of test mandatory.}

\begin{proposition}
\label{prop:alignedprefs}
    Fix any imputation rule. The college strictly 
    prefers test mandatory to test optional if either:
\begin{enumerate}
\item \label{aligned1} Information is valuable for the college, and social-pressure intensity $\delta$ is sufficiently close to zero; or
 \item \label{aligned2} Information would be valuable for the college if it shared society's preferences, and the college's preferences are sufficiently close to society's.\footnote{Sufficiently close in the metric $\sup_{x \in\mathcal{X}} |\u t\Col(x) - \u t\Soc(x)|$.}
\end{enumerate}
\end{proposition}

The result is mathematically straightforward, as it is no more than a statement of continuity. But its forces may speak to why a small number of colleges \href{https://www.usnews.com/education/best-colleges/applying/articles/some-colleges-are-requiring-test-scores-again-what-it-means-for-applicants}{have recently reverted} to test-mandatory admissions. Part \ref{aligned1} of \autoref{prop:alignedprefs} offers one explanation in terms of diminished social-pressure intensity. More interestingly,  some of the reverting institutions---e.g., \href{https://president.dartmouth.edu/news/2024/02/reactivating-satact-requirement-dartmouth-undergraduate-admissions}{Dartmouth} and \href{https://hub.jhu.edu/2024/08/16/undergrad-admissions-standardized-test-requirement/}{John Hopkins}---have referenced how their own data from their test-optional period has informed them about the value of test scores; see \citet{FCT24} for a study using multiple Ivy-Plus colleges. We can interpret these colleges' preferences on how to weigh test scores as becoming more similar to society's (recall the \href{https://www.pewresearch.org/fact-tank/2022/04/26/u-s-public-continues-to-view-grades-test-scores-as-top-factors-in-college-admissions/}{2022 PEW research survey} mentioned in our introduction). The reversal to test mandatory is then consistent with \autoref{prop:alignedprefs} part \ref{aligned2}, so long as their imputation rule is fixed---e.g., to no adverse inference or some variant thereof. 

\paragraph{Constrained imputation.} In practice, colleges may not be fully restricted nor fully flexible in their imputation. They may optimize subject to constraints reflecting another dimension of social pressure or internal politics. One natural constraint would be that a college has to use a ``monotonic'' imputation rule, say imputing a higher test score to an otherwise-identical student who has a higher GPA. An additional constraint could be that the imputation rule has to be ``continuous'' in observables.

Formally, consider our preceding formulation of \hyperref[increasing_observables]{increasing observables}. Assume the primitives satisfy those properties (i) and (ii), and the college can choose from some set of imputation rules that satisfies property (iii).  In words, both the college and society prefer higher observables, students with higher observables have better test scores, and the college must impute higher scores for non-submitters with higher observables. As long as the college can choose the constant imputation $\tau(\cdot)=-\infty$ that replicates test mandatory, the college benefits from being test optional, as under flexible imputation. But in terms of student welfare, our conclusions under restricted imputation apply.

\section{Further Results}
\label{sec:extensions}

\subsection{Banning Affirmative Action}

\label{sec:AA}

In this section, motivated by public debates about affirmative action, we apply our framework to show how banning affirmative action can push a college from test-mandatory admissions to test-blind admissions.\footnote{We study test blind rather than test optional for simplicity; as noted previously, test blind is equivalent to test optional when non-submitters are imputed sufficiently high test scores.} Conceptually, this serves as an extended example in which we can say whether a college benefits from test optional for a given restricted imputation. The formal statements and proofs of the results for this section are in \appendixref{app:AA}.

The setup is as follows. Students come from one of two groups, with group identity labeled as $x_0 \in \{r, g\}$---red or green. Each student also has an observable $x_1 \in \mathbb{R}$, which can represent some aggregate of GPA and/or extracurriculars, and a test score $t \in \{0,1\}$. The college and society agree on the importance of $t$ and $x_1$, but the college  has a preference for admitting group $g$ and society does not. Specifically, 
\begin{align*}
u\Soc(x,t) &= x_1 +  t, \\
u\Col(x,t) &= x_1 +  t + \beta  \ind_{x_0 = g}-c,
\end{align*}
with  $\beta>c>0$, and $\ind_{x_0=g}$ an indicator for $x_0=g$. So the college has a lower bar than society for green types and a higher bar for red types.
Our leading interpretation is that group identity $x_0$ corresponds to race, with the college putting an explicit weight on enrolling underrepresented minorities ($x_0=g$) while society prefers race-blind admissions.

We assume that the observable $x_1$ has the same distribution for both groups; for tractability, we stipulate a uniform distribution over a large enough interval. The test score, by contrast, can be correlated with the group identity $x_0$. We assume the distribution of test scores depends on student characteristics only through $x_0$: $\Pr(t=1 | x=(x_0,x_1))=p_{x_0} \in (0,1)$. Following the interpretation of group identity as race, assume the underrepresented group $g$ has a lower average test score: $p_g<p_r$.

We consider two affirmative-action regimes. If affirmative action is allowed, the college may condition admissions on group identity $x_0$. If affirmative action is banned, group identity is hidden and therefore cannot be used in admissions.

We find that when affirmative action is allowed, the college prefers test mandatory (\autoref{prop:aaallowed}). Given the test score and other observables, the college sets a lower admission threshold on $x_1+t$ for its preferred group.  When affirmative action is banned, identity in its admissions rule. But 
differential thresholds by group are no longer feasible. Test scores now become a signal of group identity (as in  \citet{chan2003does} and others). Because the college's preferred group $g$ has a lower average test score, the college now effectively wants to put a lower weight on tests than does society. Thus, banning affirmative action can push a college to switch to test blind (\autoref{prop:aabanned}). The switch is more likely when social-pressure intensity $\delta$ is larger, the gap in test scores $p_r-p_g$ is larger, or the college puts a larger weight $\beta$ on group identity (\autoref{cor:aacompstats}). 

We also find that holding fixed the college's testing regime (mandatory or blind), society prefers to ban affirmative action. After all, society does not want group identity factored in admissions. Moroever, given any affirmative-action regime, society prefers the college to be test mandatory rather than test blind. But since banning  affirmative action can trigger the college to ignore test scores, a ban can backfire and make society worse off (\autoref{prop:society}).

As of June 2023, affirmative action by race has been banned nationwide by the US Supreme Court.
Even prior to this decision, some commentators had suggested that banning affirmative action might induce colleges to avoid seeing test scores 
(e.g., \href{https://www.newyorker.com/news/daily-comment/the-supreme-court-appears-ready-finally-to-defeat-affirmative-action}{New Yorker, January 2022}). One rationale is that going test blind can suppress evidence of score differentials across groups, which could have been used in lawsuits alleging that a college makes illegal decisions based on group identity. Our story is distinct, but complementary. We assume that when a college cannot directly condition on race, it wants to put less weight on tests that are correlated with race. Hiding test scores allows the college to do so in a way that generates less disagreement with society. This can be interpreted as protecting the college from criticism of how much weight it places on different elements of an application.

We conclude this subsection by noting that group identity in our extended example can also  represent other attributes besides race. For example, it could be legacy status. Colleges often give a leg up in admission to students whose family members have attended the college. The above analysis then applies for evaluating a ban on the use of legacy status in admissions, with one possible twist. For legacy status, the preferred group might have a higher average test score than other applicants, instead of a lower one, as legacy students tend to come from advantaged backgrounds.\footnote{For example, the \href{https://features.thecrimson.com/2023/freshman-survey/academics/}{Harvard Crimson reports} that legacy students in Harvard's entering class of 2027 had an average SAT score of about 28 points higher than their non-legacy counterparts.} A ban on legacy admissions---which has been implemented by a few US states, most recently California in September 2024---would then mean that the college now wants to put \emph{extra} weight on tests. Hence, the college's desire to see test scores would increase.

\subsection{Competition Between Colleges}
\label{sec:competition}

So far, we have considered a single college. We now extend our model to multiple colleges, making two points. First, when a test-optional college competes against a test-mandatory counterpart, it may encounter adverse selection. Among its nonsubmitting applicants, the ones with lower test scores are less likely to be admitted by the competitor. So if the test-optional college were to admit nonsubmitters, the worse applicants in the pool would be more likely to matriculate. Second, the consequences of such adverse selection are ambiguous. The test-optional college now has a lower benefit of accepting nonsubmitters. But, more subtly, the social-pressure cost of rejecting nonsubmitters can also decrease. A college may thus be more or less inclined to go test optional when its competitor is test mandatory.

To develop these points formally, we introduce the notion of \emph{yield}.
From the perspective of any one college, a student's yield $y\in [0,1]$ is the probability that the student attends that college conditional on being admitted. 
This yield will depend on whether a student is admitted to other colleges, and thus on the admission policies of a college's competitors. To incorporate yield into payoffs, we stipulate, naturally, that admitting a student with observables $x$, score $t$, and yield $y$ provides the college underlying (expected) utility $y\times u^c(x,t)$. 
Our key assumption is that social pressure also scales analogously.\footnote{Formally, we modify \autoref{eq:U^c} so that the college's payoff for making admission decision $A$ for a student of type $(x,t)$ with yield $y(x,t)$ is $\left(A u^c(x,t)-\delta d(x,t^s,A)\right) y(x,t)$. Assuming for convenience that for each observable $x$, the distribution of test scores has a density $f(t|x)$, society now evaluates a nonsubmitting student by $t^s=L^Y(\tau(x)|x)$, where $$
L^Y(\tau(x)|x):= \frac{\int_{t\le \tau(x)} t y(x,t) f(t|x) \mathrm dt}{\int_{t\le \tau(x)} y(x,t) f(t|x) \mathrm dt}$$ is the yield-weighted average test score of nonsubmitters. Note that if $y(x,t)$ is constant on $t\leq \tau(x)$, then $L^Y(\tau(x)|x)$ simplifies to $L(\tau(x)|x)\equiv \E[t|t\leq \tau(x),x]$, which is what we had without yield.}
For applicants who would certainly attend a college  if admitted ($y=1$), we recover our original social pressure costs. At the opposite limit, society doesn't judge a college's decision at all for a student  who never would have come anyway ($y=0$).

Since underlying benefits and social pressure costs both scale proportionally with yield, a test-mandatory college will admit a student with arbitrary yield $y\in[0,1]$ if and only if it would admit that student at yield $y=1$. Consequently, a test-mandatory college's admission decisions are independent of its competitors' policies. 

By contrast, yield can affect a college's admissions when it is test optional. The college must then consider the possibility of differential yield within an observably identical pool of nonsubmitters. The college might be willing to admit a pool based on the {average} student, but not the \emph{yield-weighted average} student, if students with higher unsubmitted test scores have lower yield. Such adverse selection can happen when the college has test-mandatory competitors who only admit students with higher test scores.

To illustrate the adverse selection concretely, consider two identical colleges. These colleges have common underlying utility $u^c$, are judged by the same societal utility $u^s$, and face the same social pressure intensity $\delta$. Both colleges are required to use the same restricted imputation rule $\tau$ if test optional. The colleges are also equally attractive to students: any student admitted at only one college will attend that college, while a student admitted at both will choose uniformly at random between the two.

\begin{figure}[h]
\caption{A test-optional college's yield varies with its competitor's testing regime. 
\label{fig:compet}}

\begin{centering}

\begin{tabular}{cc}
\begin{subfigure}{0.45\textwidth}
\includegraphics[width = 2.75 in]{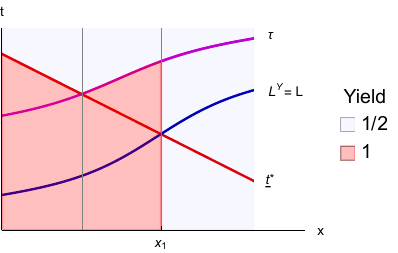}
\caption{Test-optional competitor \label{subfig:competTO}}
\end{subfigure}
&
\begin{subfigure}{0.45\textwidth}
\includegraphics[width = 2.75 in]{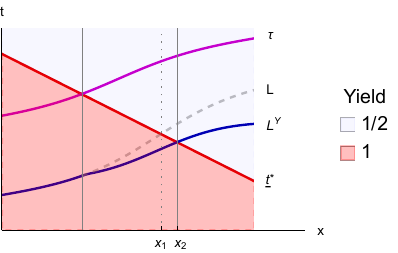}
\caption{Test-mandatory competitor \label{subfig:competTM}}
\end{subfigure}
\end{tabular}

\end{centering}
\label{fig:prefs}

\medskip
\parbox{.95 \textwidth}{
\footnotesize The yield-weighted average test score of nonsubmitters is $L^Y$. For comparison, the unweighted average $L$ is also shown. Students withhold test scores when $t\le \tau$. At  observable $x$, the college admits nonsubmitters if $L^Y$ is above the ex-post bar $\underline t^*$. So nonsubmitters are admitted for $x>x_1$ in the left panel and $x>x_2$ in right panel.
}
\end{figure}

\autoref{fig:compet} illustrates how a test-optional college's admission decisions can vary with its competitor's admission policy. 
The figure assumes increasing observables in the sense of \autoref{sec:restricted} and \autoref{fig:increasingobs}.
The left panel, \autoref{subfig:competTO}, considers the college's problem when the competitor is test optional.\footnote{We assume the competitor uses its optimal admission policy based on the unweighted yield ($L$ in the figure), which is equivalent to the optimal policy absent competition. This policy is indeed a best response.}
At any given observable $x$, all of the nonsubmitters  have the same yield because the competitor admits them all or rejects them all. Since there is no differential yield, there is no adverse selection: the yield-weighted average test score of nonsubmitters $L^Y$ is the same as the unweighted average test score $L$. So the college makes admission decisions as in our single-college analysis, unaffected by yield. It accepts nonsubmitters with observables $x>x_1$.

Now consider the right panel, \autoref{subfig:competTM}, which depicts a test-mandatory competitor. Here there are observables at which the competitor selectively admits nonsubmitters  with relatively high test scores---only those above the ex-post bar $\underline t^*$. So the college faces a low yield of $1/2$ for high-scoring students and a high yield of $1$ for low-scoring students. This adverse selection lowers the yield-weighted average test score, bringing $L^Y$ below $L$. The college is now less inclined to admit nonsubmitters than when it faced a test-optional competitor; the threshold $x$ at which it accepts nonsubmitters increases from {$x_1$ to $x_2$}.

\autoref{subfig:competTM} also helps us understand why adverse selection isn't necessarily bad for a test-optional college. For nonsubmitters who are accepted both before and after a pool gets worse ($x>x_2$), making the pool worse clearly hurts the college: it decreases underlying payoffs and cannot decrease social pressure. However, consider nonsubmitters who are rejected both before and after ($x<x_1$). Now, making the pool worse has no impact on underlying payoffs but it can decrease social pressure if society had wanted to admit the pool. So at observables where nonsubmitters are rejected and a less-selective society wants the nonsubmitters accepted, a test-optional college can benefit from its competitor being test-mandatory.

The foregoing discussion suggests that, under restricted imputation, a college's preferred testing regime can depend on its competitor's choice. One possibility is strategic complementarity: a college prefers test mandatory when its competitor is test mandatory, and prefers test optional when its competitor is test optional. Another is strategic substitutability: a college prefers test optional when its competitor is test mandatory, and prefers test mandatory when its competitor is test optional. 

In \appendixref{app:competition}, we provide examples to illustrate how both of those possibilities can arise from adverse selection. In our complementarity example, a test-optional college admits nonsubmitters, and test optional becomes less appealing if its competitor goes test mandatory and makes the nonsubmission pool worse. In our substitutability example, a test-optional college does not admit the nonsubmitters, but the less-selective society wants it to admit them. So test optional becomes \emph{more} appealing for the college if its test-mandatory competitor worsens the nonsubmission pool.

Finally, a third example illustrates a distinct force that can push towards substitutability. Consider a college that prefers to be test optional when its competitor is test mandatory. If the competitor switches to test optional and rejects nonsubmitters, those rejected students may include some high-scoring ones. The college now has a higher yield for these desirable students. It reacts by going test mandatory and ``cherry-picking'' these students.

\section{Conclusion}
\label{sec:conclusion}

Our paper offers a resolution to the puzzle of why a college would choose to obtain less information about students by using a test-optional (or test-blind) admissions policy \citep{DFK24-PP}. 
We propose that going test optional helps a college alleviate social pressure regarding the students it admits. We introduce and solve a model of college admissions in which a college faces costs from making admission decisions that an external observer, society, disagrees with. Society is Bayesian and judges the college based on the available information, which is common to the college and society. 
The college commits to an imputation rule---stipulating, as a function of a student's observable characteristics, the test score assigned to non-submitters---and an acceptance rule specifying whether a student with any given observables and test score is admitted.

Our results in \autoref{sec:flexible} establish that when a college can flexibly choose its imputation rule, a test-optional regime is always weakly better for the college than a test-mandatory one. Test optional is often strictly better, reducing the college's cost from social pressure and/or delivering a student body it likes more. In \autoref{sec:restricted}, we study restricted imputation rules. Here, we find that going test optional may or may not benefit a college. For both flexible and restricted imputation, we identify which students benefit and which students are hurt by test optional. In \autoref{sec:AA} we explore an extended example of restricted imputation, illustrating that our framework can explain how a ban on affirmative action can result in a college choosing to go test blind. In \autoref{sec:competition}, we discuss how competition between colleges can affect the incentives to be test optional.

We close by discussing some alternative modeling assumptions and broader issues.

\subsection{Capacity Constraints and Interactions Across Students}

Our model assumes that a college admits any student that provides it a utility above some fixed threshold, normalized to zero. Modulo the caveat in \autoref{fn:capacity}, this abstracts away from capacity constraints: if a college accepts more applicants of one type, it may mechanically have to accept less applicants of other types.

Incorporating a capacity constraint would not, we believe, fundamentally change the point that test optional can be used to combat social pressure. But the analysis would be more complicated, because unlike in our model the college would have to maintain the same number of students under test optional as under test mandatory. Consequently, if students from one group benefit from test optional, then students from some other group would necessarily be harmed. This externality could raise important equity concerns in practice.

Our formulation also abstracts from the college's or society's preferences for one student depending on the admission outcome of another. In that sense, payoffs are additively separable across students. This omits considerations such as class balance or diversity (on legally-permissible dimensions).

\subsection{Alternative Restricted Imputation Rules}

The restricted imputation rule we have highlighted is that of no adverse inference: $\tau(x)=\E[t|x]$. There are at least two other rules that appear salient. 

First, colleges' claims to not punish non-submitters can be interpreted as a promise to impute missing test scores as equal to those of an average submitted score: $\tau(x)=\E[t|x,S=1]$. Notice that for any observable $x$, 
we cannot have a range of scores being submitted: a student with the lowest such score would, instead, not submit their score. Hence, this form of ``equal treatment'' effectively unravels to no student submitting their score.

Second, the reason that a college may not be able to flexibly impute missing scores is that it lacks commitment power, and instead it can only impute via Bayes rule: $\tau(x)=\E[t|x,S=0]$. Now, for any $x$, if there is a range of scores not being submitted, a student with the highest such score would instead submit. Hence, this imputation rule effectively unravels to every student submitting their score.

The upshot, then, is that under either of these alternative forms of restricted imputation, test optional would collapse to either test blind or mandatory.

\subsection{No Commitment to the Acceptance Rule}

Suppose the college cannot commit to its acceptance rule, instead admitting students ex-post optimally given their imputed/submitted scores.

In this case, a college's acceptance decision is simply determined by whether a student's imputed/submitted score is above or below the ex-post optimal bar, $\u t^*(x)$. Under flexible imputation, \autoref{prop:lessselective} implies that the outcome is unchanged at observables at which the college is less selective than society. But when the college is more selective, it can no longer set $\tau(x)>\u t^*(x)$ and reject non-submitters, which could have been optimal (\autoref{prop:moreselective}); the problem now is that the college must accept students who submit scores above $\u t^*(x)$. Consequently, the college now optimally sets $\tau(x)=\u t^*(x)$ and rejects non-submitters.

This means that a test-optional college now accepts all the students it would under test mandatory, and possibly additional ones (if, for some observable $x$ at which it is less selective than society, it chooses $\tau(x)=\infty$ and accepts all students with observable $x$). Hence, all students benefit, at least weakly, from test optional. Of course, this conclusion relies crucially on the college not having a capacity constraint.

\subsection{Non-Bayesian Society}
\label{sec:nonBayesian}

We have assumed that society is Bayesian. This means that if students from certain groups have lower test scores on average than others, society accounts for that in evaluating non-submitters. We view Bayesian updating as a way of tying our hands, showing that our mechanism goes through even when society can't be systematically misled. In practice, though, society might not take into account all information contained in the observables. For instance, if the observable vector $x = (x_0, x_1)$ has component $x_1$ corresponding to grades and component $x_0$ corresponding to race, society might evaluate non-submitters race-neutrally: $t^s = \mathbb{E}[t|x_1, S=0]$ rather than $t^s = \mathbb{E}[t|x_1,x_0, S=0]$, where $S=0$ indicates non-submission.

A college that faces such a non-Bayesian society may get an additional benefit from not seeing test scores. In particular, recall that in the specification of \autoref{sec:AA}, the college always prefers test mandatory when affirmative action is allowed. If society is instead constrained to race-neutral updating, one can show that there are parameters at which the college benefits from going test blind even when it can condition admissions on race. Intuitively,
when red types have higher test scores than green types ($p_r>p_g$), going test blind effectively makes society value green types more and red types less, on average. Going test blind thus brings the non-Bayesian society's preferences even closer to the college's.

\subsection{Non-Equilibrium Behavior by Students}

Our model assumes that students correctly understand a college's admissions policy---in particular, its imputation rule---and best respond in their score submission. In practice, students are often uncertain about whether to submit their score, leading to mistakes. With flexible imputation, such  mistakes can only reduce a college's benefit from being test optional; effectively, the college's problem transforms from one of flexible imputation to restricted imputation. The upshot can be that test optional does not actually benefit the college. 

Colleges' inability to fine-tune student submission behavior through their imputation rules provides another explanation, in addition to that after \autoref{prop:alignedprefs}, for why some colleges have recently reinstated their testing requirements. Indeed, one of the factors in Dartmouth's 2024 decision was that under test optional some applicants did not submit their scores even when submission would have ``helped that student tremendously, maybe tripling their chance of admissions,'' \href{https://www.thedartmouth.com/article/2024/02/college-to-reinstate-standardized-test-requirement-for-class-of-2029}{according to} Professor Bruce Sacerdote.

\subsection{Why Test Scores?}
\label{sec:test-scores} 

Our paper is silent as to why colleges choose to make test scores, rather than other application components, optional. One reason---outside of our model---may be that \emph{standardized} test scores are easier for society to evaluate, whereas the personal essay, the GPA at a particular high school, or extra-curricular achievements require more specialized expertise to evaluate. As such, while these other components may be informative of college success, they are subject to less outside scrutiny
and generate less disagreement costs.
 Indeed, every lawsuit opposing affirmative action has used standardized test scores as evidence of discrimination (\href{https://www.newyorker.com/news/daily-comment/the-supreme-court-appears-ready-finally-to-defeat-affirmative-action}{New Yorker, January 2022}).

\bibliographystyle{ecta}
\bibliography{DFK-bib}

\newpage

\appendix

\begin{centering}

\LARGE \textbf{Appendix}

\end{centering}

\section{Proofs}

\subsection{Proof for \autoref{sec:mandatory}}

\begin{proof}[Proof of \autoref{prop:mandatory}]
Under test mandatory, the student's score is always observed by both the college and society. So the college's problem for any observed $(x,t)$ is to choose $A\in \{0,1\}$ to maximize $$AU^c(x,t,t,1)+(1-A)U^c(x,t,t,0).$$
From the definition of the ex-post utility function $u^*(x,t)$ in \eqref{eq:u^*}, it is equivalent for the college to maximize $Au^*(x,t)$, which implies the result.
\end{proof}

\subsection{Proofs for \autoref{sec:analysis}}

\begin{proof}[Proof of \autoref{lem:prelim}]
Fix test optional with some imputation rule $\tau$. Consider a student with observable $x$ and imputed/submitted score $\hat t$. Given our assumption that the student submits if they have score $t>\tau(x)$ and does not submit if $t\leq \tau (x)$, whether the imputed/submitted score $\hat t$ is on path or off path depends only on the support of the score distribution $F_{t|x}$. If $\hat t$ is off path, then any college acceptance decision is optimal. Since any $\hat t<\tau(x)$ is necessarily off path, it is optimal to reject such $\hat t$. There are two remaining cases:

\begin{enumerate}
    \item $\hat t > \tau(x)$, and it is on path. Then the student must have submitted $\hat t$, and so by the logic of \autoref{prop:mandatory}, it is optimal for the college to accept the student if $u^*(x,\hat t)>0$ and reject the student otherwise.
    \item $\hat t = \tau(x)$, and it is on path. Then $\hat t$ is an imputed score. By similar reasoning to that in \autoref{prop:mandatory}, the expected utility gain from accepting these students types is proportional to the ex-post utility $u^*(x,L(\tau(x)|x))$, and so it is optimal to accept if that ex-post utility is positive and reject otherwise.
\end{enumerate}
We note that the resulting acceptance rule is monotonic, as $L(\tau(x)|x)\leq \tau(x)$ and $u^*(x,\cdot)$ is increasing.
\end{proof}

For clarity and notational ease, in the remainder of this Appendix section we assume that for each $x$, the cumulative distribution of test scores $F_{t|x}$ has a density $f(t|x)$. The proofs of Propositions \ref{prop:lessselective} and \ref{prop:moreselective} can be extended to arbitrary distributions of $t|x$ with additional notational burden.

\autoref{lem:breakupcosts} below will be used in the proofs of Propositions \ref{prop:lessselective} and \ref{prop:moreselective}.  In words, the lemma compares the disagreement from pooling  (e.g., via non-submission) versus separating (via submission) different groups of students, holding fixed the acceptance decisions. Part \ref{part:breakupincrease} says that breaking up one pool into two increases disagreement, at least weakly, which is then further increased by separating the higher pool. Part \ref{part:breakupconst} says that, if one pool is broken into two and society would make the same decision on both of the new pools---either rejecting both when $\mathbb{E}[t|x, t \in (\tau, \tau^h]] \le \u t^s(x)$, or accepting both when $\u t^s(x) \le L(\tau|x) \equiv \mathbb E [t| t \le \tau]$---then breaking up this pool does not in fact change disagreement costs. Part \ref{part:breakupseparate} establishes that turning a pooling region into a separating region doesn't change disagreement if society would make the same decision for all students in that range. Formally:

\begin{lemma}  \label{lem:breakupcosts}
Fix observables $x$. 
Given $-\infty \le \tau^l <\tau^h$, let $D^{\mathrm{pool}}(\tau^l, \tau^h; A)$ and $D^{\mathrm{sep}}(\tau^l, \tau^h; A)$ be the disagreement levels from making acceptance decision $A$ for all students with scores $t \in (\tau^l, \tau^h]$ while, respectively, pooling all students together or separating them:
\begin{align*}
D^{\mathrm{pool}}(\tau^l, \tau^h; A) & := \int_{\tau^l}^{\tau^h} d(x, \mathbb{E}\left[t | x, t \in (\tau^l, \tau^h]\right],A) f(t|x) \mathrm dt;
\\
D^{\mathrm{sep}}(\tau^l, \tau^h; A) & := \int_{\tau^l}^{\tau^h} d(x,t,A) f(t|x) \mathrm dt.
\end{align*}
\begin{enumerate}
\item \label{part:breakupincrease}  
Take any $\infty \le \tau \le \tau^h$. It holds that 
\begin{align*}
D^{\mathrm{pool}}( -\infty, \tau^h; A) & \le D^{\mathrm{pool}}( -\infty, \tau; A)  + D^{\mathrm{pool}}( \tau, \tau^h; A) \\ & \le  D^{\mathrm{pool}}( -\infty, \tau; )+ D^{\mathrm{sep}}( \tau, \tau^h).
\end{align*}
\item \label{part:breakupconst} 
 Take any $\tau <  \tau^h$. If $\mathbb{E}\left[t | x, t \in (\tau, \tau^h]\right] \le  \u t^s (x)$ or $\u t^s (x) \le L(\tau|x)$, then 
\[D^{\mathrm{pool}}(-\infty, \tau^h ; A) = D^{\mathrm{pool}}(-\infty, \tau ; A) + D^\mathrm{pool}(\tau, \tau^h; A).\]
\item \label{part:breakupseparate}
Take any $\tau^l < \tau^h$. If $\tau^h \le  \u t^s (x)$ or $\u t^s (x) \le \tau^l$, then 
\[
D^\mathrm{pool}(\tau^l, \tau^h;A) = D^\mathrm{sep}(\tau^l, \tau^h;A).
\]
\end{enumerate}

\end{lemma}

\begin{proof}[Proof of \autoref{lem:breakupcosts}]
Part \ref{part:breakupincrease} follows from convexity of the disagreement function $d(x, t^s, A)$ in $t^s$. Part \ref{part:breakupconst} and part \ref{part:breakupseparate} follow from the linearity of $d(x, t^s, A)$ on the domain $t^s \le \u t^s$ and on the domain $t^s \ge \u t^s$. For part \ref{part:breakupconst}, we also apply the fact that $L(\tau|x) \le 
\mathbb{E}[t | x, \tau < t \le \tau^h]$, and hence the assumptions guarantee that $L(\tau|x)$ and $\mathbb{E}[t | x, \tau < t \le \tau^h]$ are both on the same side of $\u t^s(x)$. For part \ref{part:breakupseparate}, the assumptions guarantee that $\tau^l$ and $\tau^h$ are both on the same side of $\u t^s$.
\end{proof}

\begin{proof}[Proof of \autoref{prop:lessselective}.] 
 Fix some observable $x$ at which the college is less selective than society. To reduce notation, the rest of this proof omits the $x$ argument in $\tau$, $\u t^c$, $\u t^*$, $\u t^s$, $f(t)$, and $L(t)$. Note that $L(t)$ is continuous in $t$ under the assumption we have made for this proof that the test score distribution has a density.

The college's payoff of imputing $\tau$ is constant over $\tau \in [-\infty, \u t^*]$: for any of these imputations, \autoref{lem:prelim} implies that the college rejects students with scores $t \le \u t^*$ and accepts those with scores $t>\u t^*$; since $\u t^*<\u t^s$, \autoref{lem:breakupcosts} (parts \ref{part:breakupconst} and \ref{part:breakupseparate}) implies that the disagreement cost does not change. 

To prove the result, then, it is sufficient to establish that the college's payoff from imputing $\tau \in [\u t^*, \infty]$ is decreasing and then increasing. In particular, take $t^\dagger\geq \u t^*$ such that $L(t^\dagger) = \u t^*$, with $t^\dagger = \infty$ if $L(t') < \u t^*$ for all $t'$. We will show that the college's expected payoff is decreasing in $\tau$ between $\u t^*$ and $t^\dagger$, then increasing in $\tau$ above $t^\dagger$.

To show that the college's payoff is decreasing in $\tau$ over the domain $\tau \in [\u t^*, t^\dagger)$, take some $\tau \in [\u t^*, t^\dagger)$. \autoref{lem:prelim} implies that it is optimal for the college to reject students with imputed/submitted score $\hat t \le  \tau$, since $L(\tau) \le L(t^\dagger)=\u t^*$. That is, non-submitters are rejected. Submitters with $t>\tau$ are admitted, since $\tau \ge \u t^*$. We now consider two cases: $\tau <  \u t^s$ and $\tau \ge \u t^s$.

\begin{itemize}
\item 
If $\tau <  \u t^s$, then the college's expected payoff $\mathbb{E}[A u^c(x,t) - \delta d(x, t^s, A)|x]$ can be written as 
\[\int_\tau^\infty u^c(x,t)f(t) \mathrm dt - \delta \int_\tau^{\u t^s} -u^s(x,t)f(t)\mathrm dt
\]
because the college only accepts students with $t >\tau$ and only incurs a disagreement cost for the students it accepts with $\tau < t \le \u t^s$. (There is no disagreement cost for rejecting the non-submission pool, which has $L(\tau) \le \u t^* < \u t^s$; and there is no disagreement cost for accepting students with $t \ge \u t^s$.) The right-derivative of this expected payoff with respect to $\tau$ is\footnote{We use the right-derivative because although the payoff is continuous in $\tau$ (under the assumption of continuous distributions) and differentiable almost everywhere, there are kinks at $\u t^*$ and $\u t^s$.}
\[
- (u^c(x,\tau) + \delta u^s(x,\tau)) f(\tau).
\]
This derivative is weakly negative because $ u^c (x,\tau) + \delta u^s(x,\tau) \ge 0$  for any $\tau \ge \u t^*$: $\tau \ge \u t^*$ implies  that $u^*(x, \tau ) \ge 0$, and $u^*(x, \tau )$ has the sign of  $ u^c(x,\tau) + \delta u^s(x,\tau)$.

\item If $\tau \ge  \u t^s$, then the college's expected payoff $\mathbb{E}[A u^c(x,t) - \delta d(x, t^s, A)|x]$ can be written as 
\[\int_\tau^\infty u^c(x,t)f(t) \mathrm dt
\]
because the college only accepts students with $t >\tau$ and does not incur a disagreement cost for any student. The right-derivative of this expected payoff with respect to $\tau$ is
\[-u^c(x, \tau) f(\tau),\]
which is weakly negative because $\tau \ge \u t^c$.
\end{itemize}

Next, we  show that the college's expected payoff is increasing in $\tau$ over the domain $\tau \in [t^\dagger, \infty]$. Note that when $\tau \in [t^\dagger, \infty]$, \autoref{lem:prelim} implies that it is optimal for the college to accept non-submitters (who have expected test score of $L(\tau) > \u t^*$) as well as submitters with $t>\tau$. Hence, all students are accepted. Moreover, pooling more students by raising $\tau$ always weakly reduces disagreement costs, if raising $\tau$ does not change acceptance decisions (\autoref{lem:breakupcosts} part \ref{part:breakupincrease}). Hence, raising $\tau$ over this domain weakly benefits the college.
\end{proof}

\begin{proof}[Proof of \autoref{prop:moreselective}.] 
 Fix some observable $x$ at which the college is more selective than society. For notational simplicity, the rest of this proof omits the $x$ argument in $\tau$, $\u t^c$, $\u t^*$, $\u t^s$, $f(t)$, and $L(t)$.

The college's payoff of imputing $\tau$ is increasing over $\tau \in [-\infty, \u t^*]$. To see this, observe that for any imputation in this range, \autoref{lem:prelim} implies that the college rejects students with scores $t \le \u t^*$ and accepts those with scores $t>\u t^*$;  \autoref{lem:breakupcosts} part \ref{part:breakupincrease} implies that the disagreement cost over the students with $t \le \u t^*$ decreases in the imputation level. (Increasing the imputation level corresponds to combining a pooling and a separating region into a single pooling region, while continuing to reject all students in that region.) Without loss, then, we can restrict attention to $\tau \ge \u t^*$.

Let $t^\circ$ be a test score at which $L(t^\circ)=\u t\Soc$, with $t^\circ = \infty$ if $L(t')<\u t\Soc$ for all $t'$ and $t^\circ =- \infty$ if $L(t')>\u t\Soc(x)$ for all $t'$. So for $t' \le t^\circ$, $L(t') \le \u t^s$; and for $t' \ge t^\circ$, $L(t') \ge \u t^s$. We will show that it is optimal for the college to set 
\[\tau=
\begin{cases}
\u t^* & \text{ if } t^\circ \leq \u t^*\\
t^\circ & \text{ if } t^\circ \in (\u t^*,\u t\Col)\\
\u t\Col & \text{ if } t^\circ \geq \u t\Col.
\end{cases}\]

\begin{itemize}
\item Suppose $t^\circ \leq \u t^*$. We seek to show that it is optimal to set $\tau = \u t^*$.

Under $\tau = \u t^*$, \autoref{lem:prelim} implies that the college rejects the pool of non-submitters with $t \le \u t^*$ (since $L(\u t^*) \le \u t^*$) and accepts submitters with $t > \u t^*$.  The college's expected payoff at $\tau = \u t^*$ can be written in the notation of \autoref{lem:breakupcosts} as 
\begin{align} \int_{\u t^*}^\infty u^c(x,t) f(t) \mathrm dt - \delta D^\mathrm{pool}(-\infty, \u t^*; A=0).
\label{eq:Payofftautstar}
\end{align}
 
Now consider, instead, $\tau = \tau^h > \u t^*$. There are two possibilities.

\underline{Case 1}: $L(\tau^h) \le \u t^*$. In this case,  the college rejects non-submitters (by \autoref{lem:prelim}), and so (applying \autoref{lem:breakupcosts} parts \ref{part:breakupconst} and \ref{part:breakupseparate}) the college's expected payoff can be written in the notation of \autoref{lem:breakupcosts} as 
\begin{align} \int_{\tau^h}^\infty &u^c(x,t) f(t) \mathrm dt  - \delta D^\mathrm{pool}(-\infty, \tau^h; A=0) \notag \\
&=\int_{\tau^h}^\infty u^c(x,t) f(t) \mathrm dt - \delta \left(D^\mathrm{pool}(-\infty, \u t^*; A=0) + D^\mathrm{sep}(\u t^*, \tau^h; A=0)\right)
 \label{eq:Payofftauh}\end{align}
The expected payoff of setting $\tau = \u t^*$ minus that of setting $\tau = \tau^h$ is given by the difference of expressions \eqref{eq:Payofftautstar} and \eqref{eq:Payofftauh}, which is
\[\int_{\u t^*}^{\tau^h} \left(u^c (x,t) + \delta u^s(x,t)\right) f(t) \mathrm dt .\]
We now observe that $u^c (x,t) + \delta u^s(x,t)$ has the sign of $u^*(x,t)$, which is positive on $t>\u t^*$, implying that the college prefers setting $\tau$ to $ \u t^*$ rather than $\tau^h$.

\underline{Case 2}:  $L(\tau^h) > \u t^*$. In this case, by \autoref{lem:prelim}, the college accepts non-submitters as well as the submitters with $t>\tau ^h$. That is, it accepts all students. Since $L(\tau^h)\geq L(\u t^*)\geq \u t^s$, it faces no disagreement costs, and its expected payoff is simply $\E[u^c(x,t)|x]$. Subtracting this from \eqref{eq:Payofftautstar} yields the expected payoff difference of setting $\tau = \u t^*$ and $\tau = \tau^h$:
\begin{align*}
-\int_{-\infty}^{\u t^*} &u^c(x,t) f(t) \mathrm dt - \delta \int_{-\infty}^{\u t^*} u^s(x, L(\u t^*))f(t)\mathrm dt\\
&=-\int_{-\infty}^{\u t^*} \left (u^c(x, L(\u t^*)) + \delta u^s(x, L(\u t^*))\right) f(t) \mathrm dt,
\end{align*}
where the equality is by the linearity of $u^c(x,t)$ in $t$.
We now observe that the above payoff difference is weakly positive because $u^c(x,t) + \delta u^s(x,t)$ is weakly negative for any $t \le \u t^*$, and because $L(\u t^*) \le t^*$. Hence, the college prefers setting $\tau$ to $ \u t^*$ rather than $\tau^h$.

\item Suppose $\u t^* < t^\circ < \u t^c$. We seek to show that it is optimal to set $\tau = t^\circ$.

At $\tau = t^\circ$, the college rejects non-submitters and faces no disagreement cost, so the college's expected payoff  $\mathbb{E}[A u^c(x,t) - \delta d(x, t^s, A)|x]$ at $\tau = t^\circ$ is

\begin{align}
\int_{t^\circ}^\infty u^c (x,t)  f(t) \mathrm dt.
\label{eq:Payofftautcirc}
\end{align}

At any $\tau=\tau^l\in [\u t^*, t^\circ)$, the college also rejects non-submitters and faces no disagreement cost, so its expected payoff is
\begin{align}\int_{\tau^l}^\infty u^c (x,t)  f(t) \mathrm dt,
\end{align}
which is clearly less than \eqref{eq:Payofftautcirc} because $u^c(x,t)<0$ on $(\tau^l, t^\circ)$. Hence the college prefers to set $\tau$ to $t^\circ$ over $\tau^l$.

Now consider setting $\tau = \tau^h > t^\circ$. There are two possibilities.

\underline{Case 1}: $L(\tau^h) \le \u t^*$. In this case, the college rejects the pool of non-submitters, and its expected payoff in the notation of \autoref{lem:breakupcosts} is
\begin{align}  
\allowdisplaybreaks
\int_{\tau^h}^\infty &u^c(x,t) f(t) \mathrm dt - \delta D^\mathrm{pool}(-\infty, \tau^h;A=0)  \notag
\\ &= \int_{\tau^h}^\infty u^c(x,t) f(t)\mathrm dt  - \delta \left (D^\mathrm{pool}(-\infty, t^\circ; A=0) + D^\mathrm{sep}(t^\circ, \tau^h; A=0) \right) \notag
\\ &= \int_{\tau^h}^\infty u^c(x,t) f(t)\mathrm dt  - \delta  D^\mathrm{sep}(t^\circ, \tau^h; A=0) \notag
\\ &= \int_{t^\circ}^\infty u^c(x,t) f(t)\mathrm dt -  \int_{t^\circ}^{\tau^h} \left( u^c(x,t)  + \delta u^s(x, t )\right) f(x|t) \mathrm dt \label{eq:Payofftauabovetcircreject},
\end{align}
where the first equality applies \autoref{lem:breakupcosts} parts \ref{part:breakupconst} and \ref{part:breakupseparate}; the second equality uses \mbox{$D^\mathrm{pool}(-\infty, t^\circ; A=0)=0$}, since $u^s(x,L(t^\circ)) = u^s(x, \u t^s) = 0$; and the third equality uses the definition of $D^\mathrm{sep}$. Observing that $u^c (x,t)  + \delta u^s(x, t )>0 $ on all $t > \u t^*$ implies that \eqref{eq:Payofftauabovetcircreject} is less than \eqref{eq:Payofftautcirc}.

\underline{Case 2}: $L(\tau^h) > \u t^*$. In this case, the college accepts the pool of non-submitters as well as the submitters and it pays no disagreement costs, so its expected payoff is 
\[\int_{-\infty}^{\infty} u^c (x,t) f(t)\mathrm dt = \int_{-\infty}^{t^\circ} u^c (x,t) f(t)\mathrm dt + \int_{t^\circ}^{\infty} u^c (x,t) f(t)\mathrm dt.
\]
This payoff is less than \eqref{eq:Payofftautcirc} since the first term is weakly negative because $t^\circ < \u t^c$.

\item Suppose $t^\circ \ge \u t^c$. We seek to show that it is optimal to set $\tau = \u t^c$.

The argument is straightforward: setting $\tau = \u t^c$ gives the college its first-best payoff. It admits students with $t > \u t^c$, and it rejects students with $t \le \u t^c$. The college faces zero disagreement cost for the accepted students, who all have  $t \ge \u t^c > \u t^s$. And the college also faces zero disagreement cost for the rejected pool of non-submitters, since $\u t^c \le t^\circ$ and therefore the pool has average test score $L(\u t^c) \le \u t^s$. \qedhere

\end{itemize}
\end{proof}

\begin{proof}[Proof of \autoref{prop:restrictedopt}]

Follows from \autoref{lem:prelim}.
\end{proof}

\begin{proof}[Proof of \autoref{prop:alignedprefs}]
Part \ref{aligned1} holds because the college's payoff from both test mandatory and test optional with the given imputation rule is continuous in $\delta$, and by assumption the college strictly prefers test mandatory when $\delta=0$. For part \ref{aligned2}, we fix society's preferences and only vary the college's. Following \autoref{fn:metric}, we write $u\Col\to u\Soc$ if $\sup_{x \in\mathcal{X}} |\u t\Col(x) - \u t\Soc(x)|\to 0$. It can be verified that, under either test mandatory or test optional with the given imputation rule, the college's payoff converges to its payoff when it shares society's preferences as $u\Col\to u\Soc$. The result follows from the hypothesis that the college would strictly prefer test mandatory if $u\Col=u\Soc$.
\end{proof}

\section{Imputation and Monotonic Acceptance Rules}
\label{sec:general_policies}

Let $\mathcal T:=\Reals \cup \{ns\}$, where $\Reals$ is the set of test scores and $ns$ denotes non-submission. A general policy for the college specifying when to admit a student, which we shall refer to as an \emph{allocation rule} to distinguish it from the admission rules considered in the main text, is a function $\pi:\mathcal X \times \mathcal T \to [0,1] $, where $\pi(x,t)$ is the probability of admitting a student with observables $x\in \mathcal X$ and score (non-)submission $t\in \mathcal T$. An \emph{outcome} under an allocation rule $\pi$ is any function $\mathcal X \times \Reals \mapsto [0,1]$ that obtains from composing $\pi$ with some student best response. Plainly, the outcome from any admission rule---a combination of an imputation rule $\tau$ and monotonic acceptance rule $\alpha$, as defined in \autoref{sec:primitives}---is the outcome under some allocation rule $\pi$.\footnote{Any admission rule has a unique outcome, given our assumption in the main text that the student submits their score if and only if it is strictly above the imputation. So, admissions rule $(\tau,\alpha)$ is outcome-equivalent to the allocation rule $\pi$ defined by $\pi(x,t)=\alpha(x,\tau(x))$ if $t\leq \tau(x)$ or $t=ns$, and $\pi(x,t)=\alpha(x,t)$ if $t>\tau(x)$.} Below, we explain why there is no loss of optimality for the college in restricting to admission rules: for any outcome under any allocation rule, there is an admissions rule whose outcome is weakly better for the college.

We start with two observations:
\begin{enumerate}
\item Without loss of optimality, the college can restrict to deterministic allocation rules, i.e., choose some 
$\pi:\mathcal X \times \mathcal T \to \{0,1\}$. This stems from the college's expected utility being linear in the admission probability, and the student's optimal action (submission or non-submission) only depending on the ordinal ranking of probabilities induced by the actions. See \citet{FK23deterministic} for further intuition and a proof.

\item (a) If $\pi(x,ns)=1$, then the college weakly prefers an outcome in which students with observable $x$ do not submit regardless of test score;\\ (b) if $\pi(x,ns)=\pi(x,t)=0$ for some score $t\in \Reals$, the college weakly prefers that students with observable $x$ and true score $t$ not submit.\\ The argument for both cases is that the outcome for the relevant students is unchanged if they don't submit---noting in part (a) that any outcome has students with observable $x$ admitted regardless of true test score---and the disagreement cost is weakly reduced by having students pool on non-submission.

\end{enumerate}

Now fix any observable $x$. The two observations imply that without loss of optimality, we can restrict to deterministic allocation rules 
and outcomes such that either (i) non-submitters are accepted and no student submits, or (ii) non-submitters are rejected and a student submits if and only if that leads to acceptance. Case (i) is outcome-equivalent to an admission rule with imputation $\tau(x)=\infty$ and acceptance $\alpha(x,\cdot)=1$, so we can focus on case (ii). 

Accordingly, suppose non-submitters are rejected, and a student submits if and only if that leads to acceptance. We will establish the claim that there is no loss of optimality in supposing the allocation rule is {monotonic}: given $x$ and any $t_L<t_H$, if $\pi(x,t_L)=1$ then $\pi(x,t_H)=1$.
Letting \mbox{$\bar{t}:=\sup\{t:\pi(x,t)=0\}$}, a monotonic allocation rule $\pi$ is outcome-equivalent to an admission rule with acceptance $\alpha(x,t)=1$ if and only if $t>\tau(x)$, where the imputation is $\tau(x)=\bar t$ if either the test score distribution $F_{t|x}$ is continuous, or $\bar t=-\infty$, or $\pi(x,\bar t)=0$;  
and $\tau(x)=\bar t-\epsilon$ for sufficiently small $\epsilon>0$ otherwise (i.e., if $F_{t|x}$ is discrete and $\pi(x,\bar t)=1$).\footnote{Recall from \autoref{fn:continuousdiscrete} that we assume a discrete $F_{t|x}$ has no accumulation points.}

To establish the monotonicity claim, first consider the case in which the distribution of test scores (at the fixed observable $x$) is continuous. Suppose there is a positive-probability set of scores that are accepted that are all lower than another positive-probability set of scores that are rejected. For any $t$, let $G_L(t)$ be the measure of accepted students with scores below $t$, and $G_H(t)$ be the measure of rejected students with scores above $t$. Since these are continuous functions and $\lim_{t\to -\infty} [G_L(t)-G_H(t)]<0<\lim_{t\to +\infty}[ G_L(t)-G_H(t)]$, there is $t'\in \Reals$ such that $G_L(t')=G_H(t')$. Consider a modification of the allocation rule to reject all scores below $t'$ and accept all scores above $t'$. The new outcome in which students submit if and only if their score is above $t'$ is weakly preferred by the college: it improves its underlying utility (because the accepted students are better) while reducing disagreement costs (because the rejected non-submitting students are worse). Therefore, the original allocation rule is improved by a monotonic one.

If the test score distribution is discrete, the same logic applies, except that we may require a public randomization device to allow splitting of the mass of students with a specific test score. Specifically, suppose that both the student's submission decision and the college's allocation rule can condition on a public random variable uniformly distributed on $[0,1]$. This allows for an arbitrary fraction of students at a certain test score to submit and be accepted, while the others with that score don't submit (and will be rejected if they do submit). The same logic as in the previous paragraph then applies.

\section{Details for Banning Affirmative Action}
\label{app:AA}

\subsection{A Model of Affirmative Action}

There are two potentially observable non-test dimensions, $x = (x_0, x_1)$. Dimension $x_0$ is binary, with realizations  in $\{r,g\}$ (red and green). Dimension $x_1$, which may represent some aggregate of GPA and/or extra-curricular achievement, takes continuous values in $\mathbb{R}$. Test scores are binary, with values normalized to $0$ and $1$.

The college and society have identical preferences over all factors except for the type dimension $x_0$. Society does not care about this dimension, but all else equal, the college wants to admit green types over red types.\footnote{We could allow for society to have preferences over a student's $x_0$ dimension as well; what is important is that the college favors green types more than society does.} Specifically, we assume that
\begin{align*}
u\Soc(x,t) &= x_1 +  t, \\
u\Col(x,t) &= x_1 +  t + \beta  \ind_{x_0 = g}-c,
\end{align*}
with  $\beta>c>0$, and $\ind_{x_0=g}$ an indicator for green types. The parameter $\beta$ is the bonus the college gives to green types over red types. The parameter $c$ is not essential to our analysis, but it allows for the college and society to have different test-score bars for both red and green students. It can be interpreted as the (opportunity) cost for a college of admitting any student. We have normalized the analogous constant in society's utility to zero.  The assumption $\beta>c>0$  implies that the college has a lower test-score bar than society for green types and a higher one for red types. Note that the the college's ex-post utility is
\[
u^*(x,t) =  x_1 +  t + \frac{\beta}{1+\delta} \ind_{x_0 = g} - \frac{c}{1+\delta}.
\]

Let $x_0=g$ with probability $q \in (0,1)$ and $x_0=r$ with probability $1-q$. We assume that the distribution of test scores depends on $x$ only through $x_0$: $\Pr(t=1 | x=(x_0,x_1))=p_{x_0} \in (0,1)$. Our primary interest is in the case of $p_r > p_g$, meaning that green types, which are favored by the college, have a worse distribution of test scores. This may correspond to green students being an underrepresented demographic group, for instance. But we also allow for the opposite case of $p_r<p_g$, in which the college's favored group has a better test score distribution. 
Here, green students may correspond to those from rich families, who have better access to test preparation, and are favored by the college because of donor considerations. If the green students correspond to legacy applicants, it may be that either $p_r<p_g$ or $p_r>p_g$.

We take $x_1$ to be independent of both $x_0$ and $t$. We also assume that $x_1$ is uniformly distributed over a large enough interval. Specifically, $x_1 \sim U[\underline x_1, \overline x_1]$, with $\underline x_1 < c - \beta - 1$ and $\overline x_1 > c$. The inequality on $\underline x_1$ guarantees that there are students with $x_1$ low enough  that neither the college nor society wants to admit them, even if they are otherwise as desirable as possible ($x_0=g$ and $t=1$). The inequality on $\overline x_1$ guarantees that there are students with $x_1$ high enough that the college and society want to admit them even if they are otherwise as undesirable as possible ($x_0=r$ and $t=0$).

We will consider the college's choice over whether to be test mandatory or test blind in two observability regimes. First, we allow both dimensions of $x$ to be observable, which we call \emph{affirmative action allowed}. Then we consider only $x_1$ to be observable, with the dimension $x_0$ unobservable; we call this regime \emph{affirmative action banned}. We interpret the switch from the first to the second regime as a policy change where society---which does not intrinsically care about $x_0$---bans the use of that dimension in admissions. 
This may represent a law or court decision forbidding the use of race or legacy status in admissions.\footnote{Note that we assume that when $x_0$ is unobservable to the college, it is also unobservable to society. While society does not value $x_0$ directly, the observability of $x_0$ to society could still matter for the calculation of the college's social costs. This is because, if society can observe $x_0$ but cannot observe test scores, then it would expect a different test score for green students ($p_g$) than red students ($p_r$). We assume that a law preventing the college from making inferences of this form also stop society from making/penalizing the college based on such inferences.}

\subsection{Results}

\paragraph{Affirmative action allowed.} \label{sec:AAresults} Consider first the case when affirmative action is allowed. 

Under test mandatory, the college can choose a distinct threshold of $x_1$ above which to admit students at each $(x_0,t)$ pair.\footnote{Since we will be comparing test mandatory with test blind, it turns out to be convenient for our analysis to take the perspective of $x_1$ admissions thresholds rather than test score thresholds.} This threshold is determined by setting the ex-post utility to $0$. Since the college favors green students, its $x_1$ threshold  will be lower by $\beta/(1+\delta)$ for green students than for red students at each score level $t$. From society's perspective, the college uses an $x_1$ threshold that is too low for green students  and too high for red students---but crucially, the gap between society's preferred threshold and what the college uses does not vary with $t$.\footnote{The gap is $(\beta-c)/(1+\delta)$ for green students and $c/(1+\delta)$ for red students.}

Under test blind, the college chooses an admissions threshold on dimension $x_1$ that depends on the student's type $x_0$ but not the test score $t$. However, $x_0$ is informative about $t$: the college and society evaluate students of type $x_0$ as if they have the expected test score $\mathbb{E}[t|x_0] = p_{x_0}$. If $p_r>p_g$, the college's preference for green students is countered by the fact that green students have lower test scores on average than red students. So the college will now use a lower $x_1$ threshold for green students than red students only if its preference parameter $\beta$ is sufficiently large: specifically, if and only if $\beta/(1+\delta)>p_{r}-p_{b}$. Regardless, the gap between the college's chosen $x_1$ threshold and society's preferred threshold is the same as under test mandatory, for any test score $t$---that gap did not depend on the test score, and utilities are linear in the test score. 

We can establish: 

\begin{proposition}
If affirmative action is allowed, then the college  prefers test mandatory to test blind. \label{prop:aaallowed}
\end{proposition}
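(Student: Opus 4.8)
The plan is to reduce the regime comparison to a single-observable computation and then integrate over $x_1$. Fix a group $x_0$ with score probability $p_{x_0}$, and define the college's ex-post value function
\[
\Phi(x,t):=\max_{A\in\{0,1\}}\left\{A\,\ucol(x,t)-\delta\, d(x,t,A)\right\}.
\]
Under test mandatory both players observe $t$, so the college earns $\Phi(x,t)$ at each realization, and its expected payoff at $x$ is $\E_t[\Phi(x,t)]=p_{x_0}\Phi(x,1)+(1-p_{x_0})\Phi(x,0)$. Under test blind the acceptance decision cannot condition on $t$, and society treats the student as having $t^s=\E[t\mid x_0]=p_{x_0}$; since $\ucol$ and $d$ are affine in the (treated) score, the expected payoff from any fixed $A$ equals $A\,\ucol(x,p_{x_0})-\delta\, d(x,p_{x_0},A)$, so the college earns $\Phi(x,p_{x_0})$. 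Hence, at each $x$, the advantage of mandatory over blind is exactly the Jensen gap $\E_t[\Phi(x,t)]-\Phi(x,p_{x_0})$, and the total advantage is obtained by integrating this over $x_1$ and averaging over $x_0$.

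Next I would use the closed form $\Phi(x,t)=(1+\delta)\max\{u^*(x,t),0\}-\delta\max\{\usoc(x,t),0\}$, which follows from $U^c(x,t,t,1)-U^c(x,t,t,0)=(1+\delta)u^*(x,t)$ (the computation behind \autoref{eq:u^*}) together with $U^c(x,t,t,0)=-\delta\max\{\usoc(x,t),0\}$. Writing $J[\phi]:=p_{x_0}\phi(1)+(1-p_{x_0})\phi(0)-\phi(p_{x_0})$ for the Jensen gap of a function $\phi$ of the binary score, the per-$x$ advantage becomes
\[
(1+\delta)\,J\!\left[\max\{u^*(x,\cdot),0\}\right]-\delta\, J\!\left[\max\{\usoc(x,\cdot),0\}\right].
\]
Each hockey-stick $t\mapsto\max\{v+t,0\}$ is convex, so both Jensen gaps are nonnegative; but the \emph{pointwise} comparison need not favor mandatory, because $\Phi(x,\cdot)$ is not convex---it has a concave kink at society's bar $\u t\Soc(x)$. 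Indeed, at an $x_1$ where $u^*$ is single-signed on $\{0,1\}$ (so the first gap vanishes) while $\usoc$ changes sign, the displayed expression is strictly negative. This non-convexity is the main obstacle, and the point of the proof is that it washes out once one integrates over $x_1$.

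The key step is therefore the integration. In this specification $u^*(x,t)=v^*+t$ and $\usoc(x,t)=x_1+t$ with $v^*=x_1+(\beta\ind_{x_0=g}-c)/(1+\delta)$, so each hockey-stick is a horizontal translate, in $x_1$, of one fixed function; consequently $J[\max\{u^*(x,\cdot),0\}]$ and $J[\max\{\usoc(x,\cdot),0\}]$ are, as functions of $x_1$, two translates of a single nonnegative ``bump.'' An elementary calculation shows this bump integrates to $\tfrac12 p_{x_0}(1-p_{x_0})$ over all of $\Reals$, independent of the translation. The hypotheses $\underline x_1<c-\beta-1$ and $\overline x_1>c$ guarantee that, for both groups and for both the $u^*$- and $\usoc$-bumps, the entire bump lies inside $[\underline x_1,\overline x_1]$ (the most negative kink location, $-1-(\beta-c)/(1+\delta)$ for the green $u^*$-bump, exceeds $c-\beta-1$, and the most positive, $c/(1+\delta)$ for the red $u^*$-bump, is below $c$). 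Hence both integrated Jensen gaps equal $\tfrac12 p_{x_0}(1-p_{x_0})$ and coincide.

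Integrating the displayed per-$x$ advantage against the uniform density $1/(\overline x_1-\underline x_1)$ then yields a net group-$x_0$ advantage of $[(1+\delta)-\delta]\cdot\tfrac12 p_{x_0}(1-p_{x_0})/(\overline x_1-\underline x_1)=\tfrac12 p_{x_0}(1-p_{x_0})/(\overline x_1-\underline x_1)>0$: the disagreement curvature, weighted by $\delta$, exactly cancels the same curvature inside the ex-post term, leaving the pure information value, weighted by $1$. Averaging over groups with weights $q$ and $1-q$ gives a strictly positive total advantage $\tfrac{1}{2(\overline x_1-\underline x_1)}\left[q\,p_g(1-p_g)+(1-q)\,p_r(1-p_r)\right]$, so the college strictly prefers test mandatory, as claimed. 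I expect the only delicate points to be verifying the value-function formula for $\Phi$ and confirming that the support hypotheses contain every bump; the Jensen-gap integral itself is routine.
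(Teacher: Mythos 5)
Your proof is correct, and it lands on exactly the paper's quantitative bottom line---a per-group advantage for test mandatory of $\frac{f}{2}p_{x_0'}(1-p_{x_0'})$ with $f=1/(\o x_1-\u x_1)$---but by a genuinely different route. The paper works in cutoff space: fixing $x_0'$ and the preference gap $h=\beta\ind_{x_0'=g}-c$ (constant in $t$ and across regimes), it decomposes the college's loss into an allocative loss plus a social-pressure cost, computes each from the quadratic formula \eqref{eq:squareloss}, and observes that the social-pressure costs \eqref{eq:allowmandsoc} and \eqref{eq:allowoptsoc} coincide across regimes while the test-blind allocative loss \eqref{eq:allowoptalloc} exceeds the test-mandatory one \eqref{eq:allowmandalloc} by $\frac{f}{2}p_{x_0'}(1-p_{x_0'})$. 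You work in payoff space: your value-function identity $\Phi(x,t)=(1+\delta)\max\{u^*(x,t),0\}-\delta\max\{\usoc(x,t),0\}$ is valid (it follows from $U^c(x,t,t,1)-U^c(x,t,t,0)=(1+\delta)u^*(x,t)$ together with $U^c(x,t,t,0)=-\delta\max\{\usoc(x,t),0\}$); the reduction of the regime comparison to a Jensen gap of $\Phi$ is sound, since affinity in $t$ makes the college's optimal test-blind payoff at each $x$ equal $\Phi(x,p_{x_0})$; and the bump computation checks out---the Jensen gap of $t\mapsto\max\{v+t,0\}$ integrates over $v\in\Reals$ to $\frac12 p(1-p)$ regardless of translation, and the hypotheses $\u x_1<c-\beta-1$ and $\o x_1>c$ do place every relevant bump inside $[\u x_1,\o x_1]$, with your two extreme kinks $-1-(\beta-c)/(1+\delta)$ and $c/(1+\delta)$ indeed the binding ones. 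Your organization has two merits the paper's proof leaves implicit: it isolates why the college can prefer blindness pointwise in $x_1$ (the concave kink of $\Phi$ at society's bar, i.e., the \autoref{sec:motivatingex} effect) yet cannot after aggregation, because the $(1+\delta)$-weighted $u^*$-gap and the $\delta$-weighted $\usoc$-gap integrate to the same area, leaving only the unit-weighted information value $\frac{f}{2}p_{x_0'}(1-p_{x_0'})$; and the identity for $\Phi$ is portable beyond binary scores. What the paper's grouping buys instead is reusability---its cutoff-gap and quadratic-loss computations are recycled nearly verbatim in the proofs of \autoref{prop:aabanned} and \autoref{prop:society}. Both arguments ultimately rest on the same structural facts: utilities affine in $t$, preference gaps independent of $t$ and regime, and the uniform full-support $x_1$ that makes the key integrals translation-invariant.
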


The reason is that going test blind leads to a set of students that the college prefers less, but in the current specification there is never a countervailing benefit of reducing disagreement cost.
The latter point stems from two sources. First, as noted above, for any given $x_0$ type (and test score, under test mandatory), the gap between society's preferred $x_1$ threshold and what the college uses is independent of the regime, even though these thresholds do shift across regimes. Second, our assumption of a uniform distribution of $x_1$ means that the total disagreement cost for students of a given $x_0$ type (at a given test score, or averaging over test scores) only depends on the size of the gap.

\paragraph{Affirmative action banned.} Now consider the case when affirmative action is banned. 

Under test mandatory, the observed test score is informative about a student's type $x_0$. Specifically, since there are a fraction $q$ of green types in the population and the probability of test score $t=0$ for a student of type $x_0$ is $1-p_{x_0}$, we compute the probability of a student being green conditional on $t=0$ as
\[P_g^0 := \Pr(x_0=g|t=0) =  \frac{q}{q+(1-q) \frac{1-p_r}{1-p_g}}.\]
Analogously, conditional on $t=1$, the probability of a green type is \[P_g^1 := \Pr(x_0=g|t=1) =  \frac{q}{q+(1-q) \frac{p_r}{p_g}}.\]
Let $\Delta := P_g^0-P_g^1$ be the difference between these two quantities, i.e., a low test score implies a $\Delta$ higher probability of $x_0=g$ than a high test score. Note that $\Delta>0$ if $p_r>p_g$, whereas $\Delta<0$ if $p_r<p_g$. Based on the inference of $x_0$ from $t$, the college's underlying utility gives a bonus of $\beta \Delta$ to students with low test scores relative to those with high scores. 
As a result, the college now values a high test score  $1-\beta \Delta$ units higher than a low score, whereas society still values it 1 unit higher. That is, unlike when affirmative action is allowed, the gap between society's preferred $x_1$ admissions threshold and what the college chooses now varies with the test score.\footnote{Absent affirmative action, it is as if the college's underlying utility from a student is $x_1+t+\beta P^t_g-c$, and so the college's gain from a student with test score $t=1$ over $t=0$ is $1+\beta P^1_g-\beta P^0_g = 1-\beta\Delta$. Given its underlying utility, the college's ex-post utility from a student is $x_1+t+\frac{\beta P^t_g-c}{1+\delta}$. The gap between the college's chosen $x_1$ admissions threshold with society's preference is the term $\frac{\beta P^t_g-c}{1+\delta}$, which varies with $t$ so long as $P^0_g \neq P^1_g$, or equivalently $\Delta \neq 0$.}  We impose the assumption that $\beta \Delta < 1$, so the college still prefers students with higher test scores.

There is now an avenue for test blind to help the college. Under test blind, since the college evaluates all students as having $\Pr(x_0=g) = q$ and $\mathbb{E}[t]=q p_g + (1-q)p_r$, it is as if the college's utility from any student is $x_1+\E[t]+q\beta-c$. Analogously, it is as if society's utility from any student is $x_1+\E[t]$. If $c=q\beta$, which means the college and the society seek to admit the same number of students overall, then it is as if their utilities agree, and the college implements its preferred admissions policy---subject to being test blind and no affirmative action---at zero disagreement cost. More generally, the disagreement cost is always lower under test blind than test mandatory. Whether the reduced disagreement cost outweighs the allocative loss from being test blind depends on parameters, specifically the intensity of social pressure $\delta$ and the college's bonus to low-scoring students $\beta \Delta$.

\begin{proposition}
Suppose affirmative action is banned. 
 If $(1+\delta)(2 \beta \Delta -1) \ge (\beta \Delta)^2$, then the college prefers test blind, and otherwise the college  prefers test mandatory.
\label{prop:aabanned}
\end{proposition}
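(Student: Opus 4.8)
The plan is to reduce the regime comparison to an explicit quadratic in a single ``threshold offset'' parameter, compute the college's optimal payoff in each regime, and show that their difference factors into the stated inequality. Throughout I would exploit that $x_1$ is uniform on $[\underline x_1,\overline x_1]$ and independent of $(x_0,t)$, and that utilities are affine in $t$, so the college's problem separates into a family of binary-admission problems indexed by the information the college holds. Since affirmative action is banned, society never observes $x_0$, so society's valuation is $u^s=x_1+t$ under test mandatory and its expectation $x_1+\bar p$ under test blind, where $\bar p:=\E[t]=qp_g+(1-q)p_r$.

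First I would derive the per-student payoff. By the same optimality logic as \autoref{prop:mandatory}, for each information-type the college admits iff its expected ex-post utility given that type is positive. Writing $m$ for society's valuation and $b$ for the college's offset (so the college admits iff $(1+\delta)m+b>0$), the optimal payoff equals the payoff from rejecting plus the net gain from admitting when that is beneficial, namely $-\delta\max\{m,0\}+\max\{(1+\delta)m+b,0\}$; this uses the cancellation $\max\{m,0\}-\max\{-m,0\}=m$ that underlies the ex-post utility identity \eqref{eq:u^*}. Integrating over the uniform $x_1$, and using that the assumed bounds on $\underline x_1,\overline x_1$ make the college's threshold interior, the integral collapses to the clean quadratic $\phi(b)=\tfrac12 M_+^2+M_+ b+\tfrac{b^2}{2(1+\delta)}$, where $M_+$ is the top of the range of $m$. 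The convex term $\tfrac{b^2}{2(1+\delta)}$ is what will ultimately carry the disagreement-cost savings from pooling.

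Next I would assemble the totals. Under test mandatory, $V_{TM}\propto \pi_0\phi(b_0)+\pi_1\phi(b_1)$ with $\pi_1=\bar p$, $\pi_0=1-\bar p$, offsets $b_t=\beta P_g^t-c$, and $M_+=\overline x_1+t$; under test blind, $V_{TB}\propto \phi(b_{TB})$ with $b_{TB}=\beta q-c$ and $M_+=\overline x_1+\bar p$. I would then compute $V_{TM}-V_{TB}$ term by term, driven by two facts: the law of total probability $\pi_0 P_g^0+\pi_1 P_g^1=q$ (equivalently $\pi_0 b_0+\pi_1 b_1=b_{TB}$) and $P_g^0-P_g^1=\Delta$. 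The constant $\phi(0)$ terms collapse to $\tfrac12\bar p(1-\bar p)$, the allocative value of conditioning on the score; the linear-in-$\overline x_1$ terms cancel because $\pi_0 b_0+\pi_1 b_1=b_{TB}$, leaving the residual $-\bar p(1-\bar p)\beta\Delta$; and the quadratic terms combine via the variance identity $\pi_0 b_0^2+\pi_1 b_1^2-b_{TB}^2=\pi_0\pi_1(b_0-b_1)^2=\bar p(1-\bar p)(\beta\Delta)^2$. Collecting gives $V_{TM}-V_{TB}\propto \bar p(1-\bar p)\big[\tfrac12-\beta\Delta+\tfrac{(\beta\Delta)^2}{2(1+\delta)}\big]$; setting this $\le 0$ and clearing the positive denominator yields precisely $(1+\delta)(2\beta\Delta-1)\ge(\beta\Delta)^2$.

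The main obstacle I anticipate is bookkeeping rather than conceptual: pinning down the disagreement cost exactly in each regime (that society is Bayesian but blind to $x_0$, so its valuation is $x_1+t$ or $x_1+\bar p$, and that the shift of the $x_1$-range by $t$ versus $\bar p$ is what produces the decisive $\tfrac12\bar p(1-\bar p)$ allocative term), together with verifying that the interval for $x_1$ is wide enough that no boundary truncation corrupts the quadratic form of $\phi$. Once $\phi$ and the two probabilistic identities are in hand, the remainder is routine algebra.
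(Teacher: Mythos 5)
Your proposal is correct: I verified the per-type value function $\phi(b)=\tfrac12 M_+^2+M_+b+\tfrac{b^2}{2(1+\delta)}$ (it follows from integrating $-\delta\max\{m,0\}+\max\{(1+\delta)m+b,0\}$ over the uniform $m$, with the interiority of both the kink at $m=0$ and the cutoff $m=-b/(1+\delta)$ guaranteed by the assumed bounds $\underline x_1<c-\beta-1$ and $\overline x_1>c$), and your three-way split of $V_{TM}-V_{TB}$ checks out exactly: the $M_+^2$ terms give $\tfrac12\bar p(1-\bar p)$, the linear terms give $-\bar p(1-\bar p)\beta\Delta$ via $\pi_0b_0+\pi_1b_1=b_{TB}$, and the variance identity gives $\tfrac{\bar p(1-\bar p)(\beta\Delta)^2}{2(1+\delta)}$, so $V_{TM}-V_{TB}\propto \tfrac12-\beta\Delta+\tfrac{(\beta\Delta)^2}{2(1+\delta)}$, which is nonpositive precisely when $(1+\delta)(2\beta\Delta-1)\ge(\beta\Delta)^2$. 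Your route differs from the paper's in organization rather than substance: the paper works with \emph{losses relative to the college's first best}, separately computing an allocative loss and a social-pressure cost in each regime by plugging cutoff gaps into the square-loss formula \eqref{eq:squareloss}, whereas you compute expected \emph{payoffs} directly, collapsing both components into the single quadratic $\phi(b)$ and dispatching the algebra with the law-of-total-probability and variance identities instead of the paper's brute-force substitution of $q=(ET)P_g^1+(1-ET)P_g^0$ and $\Delta=P_g^0-P_g^1$. The paper's decomposition buys economic transparency---it isolates which regime saves disagreement cost versus allocative accuracy, and those same cutoff-gap distributions are reused in the proof of \autoref{prop:society}---while your packaging buys shorter, more structured algebra (the convex term $\tfrac{b^2}{2(1+\delta)}$ cleanly carries the disagreement-cost savings, and no first-best benchmark is needed since it cancels in the payoff difference). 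One small caution if you write it up fully: spell out, as you flag at the end, the interiority checks for both kinks in each regime, since $\phi$'s quadratic form is exactly what fails if the $x_1$ support truncates either integral.
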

Recall we assume $\beta \Delta < 1$.  \autoref{prop:aabanned} implies that if $\beta \Delta \le 1/2$, the college always prefers test mandatory: the allocative losses (``admission mistakes'') from not observing test scores are larger than those from simply implementing society's preferred decision rule and incurring no disagreement. When  $\beta \Delta \in (1/2,1)$, there is a trade-off, and test blind will be preferred if the intensity of social pressure, $\delta$, is sufficiently large.  The following corollary develops this and other comparative statics.

\begin{corollary} \label{cor:aacompstats}
Suppose that affirmative action is banned ($x_0$ is unobservable) and that a low test score is associated with $x_0=g$  ($\Delta >0$).
\begin{enumerate}

\item 
There is some $\beta^*\in \left(\frac{1}{2\Delta},\frac{1}{\Delta}\right)$ such that the college prefers test mandatory when $\beta < \beta^*$ and prefers test blind when $\beta>\beta^*$.
\item 
There is some $\Delta^*\in \left(\frac{1}{2\beta},\frac{1}{\beta}\right)$ such that the college prefers test mandatory when $\Delta < \Delta^*$ and prefers test blind when $\Delta>\Delta^*$.
\item If $\beta \Delta \le 1/2$, then the college prefers test mandatory for all $\delta$; if $\beta \Delta \in (1/2,1)$, then there is some $\delta^*>0$ such that the college prefers test mandatory when $\delta<\delta^*$ and prefers test blind when $\delta>\delta^*$.
\end{enumerate}
\end{corollary}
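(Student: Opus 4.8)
The plan is to reduce all three parts to the behavior of a single scalar quantity, the product $m := \beta\Delta$, since the threshold condition in \autoref{prop:aabanned}---the college prefers test blind iff $(1+\delta)(2\beta\Delta-1)\ge(\beta\Delta)^2$---depends on $\beta$ and $\Delta$ only through $m$. Accordingly, I would define
\[
G(m,\delta) := (1+\delta)(2m-1) - m^2,
\]
so that the college weakly prefers test blind precisely when $G(m,\delta)\ge 0$, and study the sign of $G$ over the admissible region $m\in(0,1)$ (recall the maintained assumption $\beta\Delta<1$, and $\Delta>0$ in the corollary forces $m>0$) and $\delta>0$. The observation that everything collapses to $m$ is what lets parts 1 and 2 fall out of one and the same monotonicity argument.

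For parts 1 and 2, I would fix $\delta$ and treat $G$ as a function of $m$ alone. Two facts drive the argument. First, $G$ is strictly increasing on $(0,1)$: its $m$-derivative is $2(1+\delta)-2m = 2(1+\delta-m)>0$ because $m<1<1+\delta$. Second, $G(1/2,\delta)=-1/4<0$ while $G(1,\delta)=\delta>0$. By the intermediate value theorem there is a unique root $m^*\in(1/2,1)$, with $G<0$ (test mandatory preferred) for $m<m^*$ and $G>0$ (test blind preferred) for $m>m^*$. Part 1 follows by holding $\Delta$ fixed and setting $\beta^*:=m^*/\Delta$: since $m$ is increasing in $\beta$, the college prefers test mandatory iff $\beta<\beta^*$, and $m^*\in(1/2,1)$ yields $\beta^*\in\left(\tfrac{1}{2\Delta},\tfrac{1}{\Delta}\right)$---the upper bound also confirming that $\beta^*$ respects the admissible range $\beta\Delta<1$. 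Part 2 is entirely symmetric, holding $\beta$ fixed and setting $\Delta^*:=m^*/\beta$.

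For part 3, I would instead fix $\beta$ and $\Delta$ (hence $m$) and treat $G$ as a function of $\delta$. If $m\le 1/2$, then $2m-1\le 0$, so $G(m,\delta)\le -m^2<0$ for every $\delta>0$, giving test mandatory throughout. If instead $m\in(1/2,1)$, then $\delta\mapsto G(m,\delta)$ is strictly increasing (its slope is $2m-1>0$), equals $-(1-m)^2<0$ at $\delta=0$, and tends to $+\infty$ as $\delta\to\infty$; hence there is a unique $\delta^*>0$---explicitly $\delta^*=(1-m)^2/(2m-1)$---below which test mandatory is preferred and above which test blind is preferred.

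The computations involved are elementary quadratics, so I do not anticipate a genuine obstacle. The only thing requiring care is bookkeeping: tracking which side of each threshold corresponds to test blind versus test mandatory, and verifying that $\beta^*$ and $\Delta^*$ remain consistent with the maintained constraint $\beta\Delta<1$. If I had to single out a step, it is the upfront reduction to the one-dimensional variable $m=\beta\Delta$, after which all three parts are routine sign analyses of $G$.
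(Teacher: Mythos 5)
Your proposal is correct and follows essentially the same route as the paper: the paper likewise analyzes the sign of $(1+\delta)(2\beta\Delta-1)-(\beta\Delta)^2$, evaluating it at the same endpoints (obtaining $-1/4$ and $\delta$), handling part 2 via the $\beta$--$\Delta$ symmetry that your reduction to $m=\beta\Delta$ makes explicit, and deriving the same threshold $\delta^*=(1-\beta\Delta)^2/(2\beta\Delta-1)$ for part 3. Your one-variable monotonicity-plus-IVT packaging is a mild streamlining of the paper's concave-quadratic argument (which additionally records the closed form $\beta^*=\bigl(1+\delta-\sqrt{\delta(1+\delta)}\bigr)/\Delta$), but the substance is identical.
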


\subsection{Society's Preferences}

We now consider society's payoff under different affirmative action and testing regimes. Society's realized utility for an individual student is $A u^s(x,t)$, where the dummy variable $A$ indicates whether the student is admitted. We assume that society's objective is to maximize its expected utility across the pool of applicants.

\begin{proposition}\label{prop:society} Society's preferences over affirmative action and testing regimes are as follows:
\begin{enumerate}

\item Fixing the testing regime as mandatory or blind, society prefers banning affirmative action to allowing affirmative action. \label{part:socpayoffaa}

\item Fixing affirmative action as banned or allowed, society prefers test mandatory to test blind. \label{part:socpayofftest}

\item \label{part:stackelberg} Suppose society chooses the affirmative action regime and then the college chooses the testing regime. Then banning affirmative action can harm society. In particular, if $\beta \Delta \in \left(1/2,1\right)$, there exist thresholds $0<\underline\delta\leq \bar \delta<\infty$ such that 
 (i) if affirmative action is banned, the college chooses test blind if $\delta >\underline \delta$, and (ii) society is harmed by banning affirmative action if $\delta > \bar \delta$, while it benefits if $\delta < \bar \delta$.\footnote{If $\beta \Delta \le  1/2$, the college never goes test blind, and so, by part \ref{part:socpayoffaa} of the proposition, society always benefits from banning affirmative action.}

\end{enumerate}
\end{proposition}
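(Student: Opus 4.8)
The plan is to reduce society's realized welfare in each of the four regime combinations to a single common form and then compare. Label the four welfare levels $W_1,\dots,W_4$ for (affirmative action allowed, test mandatory), (allowed, blind), (banned, mandatory), (banned, blind). In every regime the college's rule is, at each information cell, a threshold on $x_1$ — admit iff $x_1>\theta$ — with $\theta$ pinned down by zeroing the (expected) ex-post utility exactly as in \autoref{prop:mandatory}; the relevant thresholds are those read off in the text of \autoref{sec:AAresults}. Since $x_1\sim U[\u x_1,\o x_1]$ is independent of $(x_0,t)$ and $u\Soc$ is linear in $t$, society's contribution from a cell with threshold $\theta$ and mean test score $\mu$ (the mean of $t$ in the cell) is $\E[(x_1+\mu)\,\text{(admitted)}]$. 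The central computation is that, writing $\theta=-\mu+\eta$ so that $\eta$ is the signed gap between the college's threshold and society's information-appropriate ideal $-\mu$, this contribution equals $C(\mu)-\eta^2/(2R)$ with $R:=\o x_1-\u x_1$ and $C(\mu)$ a convex quadratic depending only on $\mu$. Hence society's welfare in any regime is a benchmark $\sum_{\text{cells}}\Pr\cdot C(\mu)$ minus a quadratic loss $\tfrac{1}{2R}\sum_{\text{cells}}\Pr\cdot\eta^2$, and crucially the per-cell loss depends only on the gap $\eta$, not on $\mu$. I would first verify the gaps: with affirmative action allowed they are $\eta_g=(c-\beta)/(1+\delta)$ and $\eta_r=c/(1+\delta)$, identical across test scores and identical in the mandatory and blind regimes; banning replaces them by the $t$-dependent $\eta_t=(c-\beta P^t_g)/(1+\delta)$ under mandatory and by the single gap $(c-\beta q)/(1+\delta)$ under blind.

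With this reduction each claim becomes an inequality between weighted sums of squares that I would settle by the law of total variance. For part \ref{part:socpayofftest} (mandatory preferred to blind) with affirmative action allowed, the loss terms are literally identical across $W_1$ and $W_2$, so $W_1-W_2$ is the benchmark difference; because $C$ is convex and both $t$ and $p_{x_0}$ have mean $\bar p:=\E[t]$, this equals $\tfrac{1}{2R}(\mathrm{Var}(t)-\mathrm{Var}(p_{x_0}))\ge 0$, the within-group variance of $t$. With affirmative action banned the gaps $\eta_t$ have common mean $(c-\beta q)/(1+\delta)$, so $W_3-W_4$ reduces to a benchmark gain $\bar p(1-\bar p)$ minus the extra dispersion $\pi_0\pi_1(\beta\Delta/(1+\delta))^2$ of the gaps; since $\bar p(1-\bar p)=\pi_0\pi_1$, it equals $\tfrac{\pi_0\pi_1}{2R}\big(1-(\beta\Delta/(1+\delta))^2\big)>0$ using the maintained $\beta\Delta<1$. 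Part \ref{part:socpayoffaa} under mandatory is the same idea run the other way: $\eta_t=P^t_g\eta_g+(1-P^t_g)\eta_r$ is a convex combination of $\eta_g,\eta_r$ with the correct weights, so Jensen gives $W_3\ge W_1$ (equivalently $\Delta(p_r-p_g)\le1$, again total variance).

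For part \ref{part:socpayoffaa} under blind, the same bookkeeping gives $W_4-W_2=\tfrac{q(1-q)}{2R}\big(\beta^2/(1+\delta)^2-(p_g-p_r)^2\big)$: both regimes share the benchmark $qC(p_g)+(1-q)C(p_r)$ and the gap means coincide, so the sign is governed by comparing $\beta/(1+\delta)$ with $|p_g-p_r|$. For part \ref{part:stackelberg} I would combine the equilibrium testing choices: \autoref{prop:aaallowed} yields outcome $W_1$ when affirmative action is allowed, while \autoref{prop:aabanned} yields $W_3$ for $\delta$ below a cutoff $\u\delta$ and $W_4$ above it, with $\u\delta>0$ exactly when $\beta\Delta\in(1/2,1)$. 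Since $W_3\ge W_1$ by part \ref{part:socpayoffaa}, banning can hurt society only when the college switches to test blind, so the whole question is the sign of $W_1-W_4=\tfrac{1}{2R}\big(\bar p(1-\bar p)-q(1-q)\beta^2/(1+\delta)^2\big)$, which is negative exactly when $\delta>\bar\delta:=\beta\sqrt{q(1-q)/(\bar p(1-\bar p))}-1$; I would then show $\u\delta\le\bar\delta$, so that $\delta>\bar\delta$ forces test blind and strict harm while $\delta<\bar\delta$ gives no harm.

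The main obstacle is not any single inequality but getting the reduction exactly right and its consequences ordered correctly. Two points need care. First, the blind case of part \ref{part:socpayoffaa} collapses to $|p_g-p_r|\le\beta/(1+\delta)$; establishing this (or isolating the conditions under which the intended conclusion holds) is the delicate step of that part, since unlike the mandatory case the test score already being hidden means banning $x_0$ also discards information that society values, so the comparison is genuinely a contest between the bonus $\beta$ and the cross-group score spread $|p_g-p_r|$. Second, part \ref{part:stackelberg}'s ordering $\u\delta\le\bar\delta$ requires comparing the college's indifference cutoff from \autoref{prop:aabanned} with society's harm cutoff $\bar\delta$; here I would expect to need the identity $\Delta=q(1-q)(p_r-p_g)/(\bar p(1-\bar p))$ to pass between the two parametrizations and to confirm both cutoffs lie in $(0,\infty)$ when $\beta\Delta\in(1/2,1)$. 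Throughout, the quadratic-loss representation is only exact while all thresholds are interior to $[\u x_1,\o x_1]$, which the assumptions $\u x_1<c-\beta-1$ and $\o x_1>c$ guarantee; I would record this at the outset so that every comparison legitimately reduces to a clean total-variance statement.
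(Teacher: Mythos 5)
Your reduction is, in substance, the paper's own argument in different clothing: the paper also reduces society's loss at each information cell to a quadratic $\frac{f}{2}r^2$ in the cutoff gap (its equation \eqref{eq:squareloss}) and runs your total-variance comparisons in the language of mean-preserving spreads. Your gap bookkeeping is correct, and your closed forms---identical loss terms in part \ref{part:socpayofftest} under affirmative action allowed, $W_3-W_4\propto ET(1-ET)\bigl(1-(\beta\Delta/(1+\delta))^2\bigr)$ under the ban, the Jensen/MPS argument for part \ref{part:socpayoffaa} under test mandatory, and $W_1-W_4\propto ET(1-ET)-q(1-q)\beta^2/(1+\delta)^2$ with harm cutoff $\delta'=\beta\sqrt{q(1-q)/(ET(1-ET))}-1$---all agree with the paper's computations \eqref{eq:soclossNoNo}, \eqref{eq:soclossYesYes}, \eqref{eq:MandAAbanDist}, \eqref{eq:BlindAAbanDist}, as does your identity $\Delta=q(1-q)(p_r-p_g)/\bigl(ET(1-ET)\bigr)$. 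But two of your planned steps would not go through as written.

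First, in part \ref{part:stackelberg} you define $\bar\delta:=\delta'$ and propose to prove $\u\delta\le\bar\delta$. That inequality is false in general: $\u\delta=\frac{(\beta\Delta)^2}{2\beta\Delta-1}-1\to\infty$ as $\beta\Delta\downarrow 1/2$, while $\delta'$ depends only on $\beta$ and the score distribution and stays bounded, so $\u\delta>\delta'$ is possible. The paper does not need this ordering: it sets $\o\delta:=\max\{\u\delta,\delta'\}$ (its footnote explicitly flags that $\u\delta=\o\delta$ can occur) and disposes of $\delta\in(\delta',\u\delta)$ by noting the college there still chooses test mandatory under the ban, so banning helps society by the \emph{mandatory} case of part \ref{part:socpayoffaa}. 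Adopt the max and your part \ref{part:stackelberg} closes, since---like the paper---you invoke part \ref{part:socpayoffaa} only in its mandatory incarnation ($W_3\ge W_1$).

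Second, your hesitation about part \ref{part:socpayoffaa} under test blind is not excess caution: your formula $W_4-W_2=\frac{q(1-q)}{2R}\bigl(\beta^2/(1+\delta)^2-(p_g-p_r)^2\bigr)$ is correct, and its sign is genuinely ambiguous under the maintained assumptions, so the claim as stated can fail. For instance, $q=1/2$, $p_g=0.1$, $p_r=0.9$, $\beta=1$, $c=1/2$, $\delta=10$ satisfies $\beta>c>0$, $\beta\Delta=0.8<1$, and all interiority requirements, yet $\beta/(1+\delta)=1/11<0.8=p_r-p_g$, so society strictly prefers \emph{allowing} affirmative action under test blind. The paper's one-line proof of part \ref{part:socpayoffaa} (``the distribution of cutoffs at each test score when affirmative action is allowed is a mean-preserving spread\ldots'') is valid under test mandatory, where the conditional mean of the group-specific gaps at each $t$ equals the pooled gap; under test blind it breaks, because the group-specific cutoffs embed the group means $p_{x_0}$, which are informative about $t$, and the ban discards that information, so the conditional means no longer line up. In other words, you have not merely left a gap in your own write-up---your computation locates one in the paper: the blind case of part \ref{part:socpayoffaa} requires the extra condition $\beta/(1+\delta)\ge|p_g-p_r|$. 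Since neither part \ref{part:stackelberg} nor the proposition's footnote uses part \ref{part:socpayoffaa} in the blind case, the rest of the result is unaffected.
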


The first two parts of the proposition are intuitive, since society does not want the admission decision to depend on whether a student is red or green (which suggests part \ref{part:socpayoffaa}) but does want the decision to depend on the test score (which suggests part \ref{part:socpayofftest}). If society could choose both the testing and affirmative action regimes, it would ban affirmative action and choose test mandatory. However, part \ref{part:stackelberg} of the proposition cautions that if society chooses the affirmative action regime and the college subsequently chooses the testing regime, society can be worse off by banning affirmative action. Specifically, when $\delta$ is large enough, banning affirmative action backfires because the college's response of going test optional results in a student pool that society likes less than under test mandatory and affirmative action allowed. Indeed, as $\delta$ gets arbitrarily large, society's payoff is arbitrarily close to society's first best when affirmative action is allowed and there is mandatory testing, while it is bounded away when affirmative is banned and the college goes test blind. But when $\delta$ is intermediate (between the thresholds $\underline \delta$ and $\overline \delta$ in \autoref{prop:society} part \ref{part:stackelberg}), society is better off by banning affirmative even though it results in the college going test blind.\footnote{It is possible that $\u \delta = \o \delta$, in which case whenever a ban on affirmative action leads to test optional, society is harmed by the affirmative-action ban.}

\subsection{Proofs for Results on Banning Affirmative Action} 

As a preliminary observation, we can write the college's loss relative to first best as its allocative loss plus the cost of social pressure. At a given $(x_0, t)$ pair of test scores and group memberships, the assumption of a uniform distribution over $x_1$ implies that the college's allocative loss depends only on the difference between the college's chosen $x_1$-cutoff for admission and the college's ideal $x_1$ cutoff. Specifically, let $f := \frac{1}{\o x_1 - \u x_1}$ be the (constant) density of the $x_1$ distribution on its support. If the college's chosen cutoff is $r$ above its ideal cutoff, then its allocative loss on this $(x_0, t)$ pair is 
\begin{align} \label{eq:squareloss}
\int_0^r f x \mathrm{d}x = \frac{f}{2} r^2.
\end{align}
Society's (allocative) loss is given by the same formula, when the chosen cutoff is $r$ above society's preferred cutoff.

\begin{proof}[Proof of \autoref{prop:aaallowed}]
Suppose that affirmative action is allowed. Here, there is no interaction between the college's decisions at different realizations of $x_0$. So, it suffices to show that test mandatory would be preferred to test blind for any fixed $x_0=x'_0$ in $\{r, b\}$. 

Fixing $x_0=x'_0$, let $h := u^c(x'_0, x_1, t) - u^s(x'_0, x_1, t) = \beta \ind_{x'_0 = g} -c$ be the difference between the college's and society's utility for admitting a student of type $x_0=x'_0$, which does not depend on $x_1$ or $t$. It then holds that $u^c(x'_0, x_1, t) - u^*(x'_0, x_1, t)  = \frac{\delta}{1+\delta}h$, and that $u^*(x'_0, x_1, t) - u^s(x'_0, x_1, t) = \frac{1}{1+\delta}h$. 
Given its information, the college sets $x_1$ admissions cutoffs at the value of $x_1$ setting the expectation of $u^*(x'_0, x_1, t)$ to $0$. Note that the college's ideal $x_1$-cutoff for students in group $x_0=x'_0$ with test score $t$ is  $-t-h$, whereas society's ideal $x_1$-cutoff is $-t$.

\paragraph{The college's loss under test mandatory.}

At $(x'_0, t)$, the college's chosen $x_1$-cutoff for admission is $ \frac{\delta}{1+\delta}h$ above its ideal point, yielding allocative loss (from \eqref{eq:squareloss}) of 
\begin{align} \label{eq:allowmandalloc}
\frac{f}{2} \frac{h^2 \delta^2}{(1+\delta)^2}.
\end{align}
Similarly, the college's chosen $x_1$-cutoff for admission is $ - \frac{1}{1+\delta} h$ above society's ideal point, leading to an allocative loss for society of $\frac{f}{2} \frac{h^2}{(1+\delta)^2}$. The college then pays a social pressure cost equal to $\delta$ times that, or 
\begin{align}\frac{f}{2} \frac{\delta h^2}{(1+\delta)^2}. \label{eq:allowmandsoc} 
\end{align}
Both of these expressions are independent of $t$, meaning that these expressions also represent the college's losses averaged over test scores.

The college's loss under test mandatory, for students with $x_0=x'_0$, is the sum of \eqref{eq:allowmandalloc} and \eqref{eq:allowmandsoc}.

\paragraph{The college's loss under test blind.}

With unobservable test scores, the players evaluate students of type $x_0=x'_0$ as if they have the expected test score of  $p_{x'_0}$. The college's chosen $x_1$ cutoff for students of type $x_0=x'_0$ sets $u^*(x'_0, x_1, p_{x'_0})$ to $0$, i.e., a cutoff of $x_1=-p_{x'_0} - \frac{h}{1+\delta}$.

To calculate the college's allocative losses, we compare the college's chosen (test-independent) $x_1$ admissions cutoffs to its (test-dependent) ideal cutoffs. Recall that the college's ideal cutoff at test score $t$ is $x_1 = -t - h$. So at $t=1$, the college's chosen cutoff is $1-p_{x'_0} + \frac{\delta}{1+\delta} h$ above its ideal point; at $t=0$, the college's chosen cutoff is $-p_{x'_0} + \frac{\delta}{1+\delta}h$ above its ideal point.
The college's expected allocative loss over test scores, once again plugging into \eqref{eq:squareloss}, is therefore given by
\begin{align}
p_{x'_0} \frac{f}{2} \left (1-p_{x'_0} + \frac{\delta}{1+\delta}h\right)^2  +(1-p_{x'_0}) \frac{f}{2} \left(-p_{x'_0} + \frac{\delta}{1+\delta}h\right)^2 \notag \\ = \frac{f}{2} \frac{h^2 \delta^2}{(1+\delta)^2} + \frac{f}{2} p_{x'_0}(1-p_{x'_0}). \label{eq:allowoptalloc}
\end{align}

To calculate social costs, we compare the college's chosen $x_1$ admissions cutoff not to society's ideal cutoff, but to society's preferred cutoff given that test scores are not observed. Society's preferred $x_1$-cutoff is given by $-p_{x'_0}$. The chosen cutoff is $- \frac{h}{1+\delta}$ above society's preferred cutoff. We can now plug into \eqref{eq:squareloss} to calculate society's loss relative to its preferred cutoff (given its information) as $\frac{f}{2} \frac{h^2}{(1+\delta)^2}$. The college's social pressure cost is $\delta$ times that, or 
\begin{align}\frac{f}{2} \frac{\delta h^2}{(1+\delta)^2}. \label{eq:allowoptsoc} \end{align}

The college's  loss under test blind, for students with $x_0=x'_0$, is the sum of \eqref{eq:allowoptalloc} and \eqref{eq:allowoptsoc}. 

\paragraph{Comparison.}

Comparing expressions \eqref{eq:allowmandsoc} and \eqref{eq:allowoptsoc}, the social pressure cost under test blind is identical to that under test mandatory. Comparing expressions \eqref{eq:allowmandalloc} and \eqref{eq:allowoptalloc}, the allocative loss is higher under test blind. Hence, the college prefers test mandatory.
\end{proof}

\begin{proof}[Proof of \autoref{prop:aabanned}]
Suppose that affirmative action is banned.  Let $ET := \mathbb{E}[t] = q p_r + (1-q) p_g$ be the average test score in the population, i.e., the share with test score $t=1$. Recall that $P_g^t = Pr(x_0=g|t)$. We will now calculate the college's loss in each testing regime.

In each case, we will evaluate the college's allocative loss relative to a benchmark where the college must make decisions independently of the unobservable $x_0$ type. The college's ideal cutoff at test score $t$, given that it must pool together students across the two $x_0$ types, is  $-t - \beta P_g^t +c$. 

\paragraph{The college's loss under test mandatory.} Society's ideal $x_1$-cutoff for admitting a student of with test score $t$ is $-t$. The college's chosen cutoff, setting the expected ex post utility to 0, is $-t - \frac{1}{1+\delta} (\beta P_g^t - c)$.

To calculate the allocative loss, observe that the college's chosen cutoff is $\frac{\delta}{1+\delta}(\beta P_g^t - c)$ above its ideal cutoff at test score $t$. Plugging into \eqref{eq:squareloss}, its allocative loss across the two test scores is given by
\begin{align} \label{eq:testsnoaaalloc}
& (1-ET) \frac{f}{2} \left(\frac{\delta}{1+\delta}(\beta P_g^0 - c)\right)^2 + ET \frac{f}{2} \left(\frac{\delta}{1+\delta}(\beta P_g^1 - c)\right)^2.
\end{align}

To calculate the loss due to social pressure, observe that the chosen $x_1$-cutoff is $-\frac{1}{1+\delta}(\beta P_g^t - c)$ above society's preferred cutoff. The college's expected loss due to social pressure (plugging this difference into \eqref{eq:squareloss} for each test score, taking expectation over test scores to find society's loss, and then multiplying by $\delta$) is therefore
\begin{align}\label{eq:testsnoaasoc} 
& \delta (1-ET)\frac{f}{2} \left(\frac{1}{1+\delta} (\beta P_g^0 -c)\right)^2 + \delta ET \frac{f}{2} \left(\frac{1}{1+\delta} (\beta P_g^1 -c)\right)^2 
.
\end{align}

The college's total loss is \eqref{eq:testsnoaaalloc} plus \eqref{eq:testsnoaasoc}. 

\paragraph{The college's loss under test blind.} The average test score is $ET$, and so society's preferred $x_1$-cutoff is $-ET$. The college's chosen cutoff, setting the expected ex post utility to 0, is $-ET  - \frac{1}{1+\delta}(\beta q -c) $, where $q$ is the probability of $x_0=g$.

Again, we calculate the college's allocative loss relative to its ideal point with observable $t$ but unobservable $x_0$. At test score $t$, the chosen cutoff minus the ideal cutoff is 
\begin{align*} 
t - ET + \beta P_g^t - \frac{q \beta}{1+\delta} - \frac{c \delta }{1+\delta}
\end{align*}
Plugging into \eqref{eq:squareloss} and taking the expectation across test scores, the college's allocative loss is  given by
\begin{align} \label{eq:notestsnoaaalloc}
 (1-ET) &\frac{f}{2}\left ( - ET + \beta P_g^0 - \frac{q \beta}{1+\delta} - \frac{c \delta }{1+\delta}\right)^2
+ ET \frac{f}{2} \left( 1 - ET + \beta P_g^1 - \frac{q \beta}{1+\delta} - \frac{c \delta }{1+\delta} \right)^2.
\end{align}

The difference between the college's chosen cutoff and society's preferred cutoff is \mbox{$- \frac{1}{1+\delta} (\beta q - c)$}. Plugging into \eqref{eq:squareloss} and multiplying by $\delta$, the college's loss from social pressure is
\begin{align}
\frac{f}{2} \frac{ \delta (\beta q - c)^2}{(1+\delta)^2}. \label{eq:notestsnoaasoc}
\end{align}

The college's total loss is \eqref{eq:notestsnoaaalloc} plus \eqref{eq:notestsnoaasoc}.

\paragraph{Comparison.} 
The net benefit of choosing test blind rather than test mandatory is given by the loss from test mandatory minus the loss from test blind, i.e.,
\[\eqref{eq:testsnoaaalloc} + \eqref{eq:testsnoaasoc} - \eqref{eq:notestsnoaaalloc} - \eqref{eq:notestsnoaasoc}.\] Substituting in $q = (ET)P_g^1 + (1-ET) P_g^0$ and $\Delta = P_g^0 - P_g^1$ and then simplifying, we can rewrite this net benefit as
\[
\frac{f}{2}\frac{ET (1-ET)}{1+ \delta} \left( (1+\delta) (2 \beta \Delta - 1) - (\beta \Delta)^2\right).
\]
The above expression is weakly positive if and only if $(1+\delta) (2 \beta \Delta - 1) \ge (\beta \Delta)^2$.
\end{proof}

\begin{proof}[Proof of \autoref{cor:aacompstats}.]
Suppose that affirmative action is banned. \autoref{prop:aabanned} establishes that the college prefers test blind to test mandatory if and only if 
\begin{align}(1+\delta)(2 \beta \Delta -1) \ge (\beta \Delta)^2. \label{eq:whentb} \end{align}
Recall we maintain the assumptions that $\beta>0$, $\beta \Delta <1$, and for this corollary, $\Delta>0$. We prove each part of the corollary in turn:

\begin{enumerate}

\item Rewriting \eqref{eq:whentb}, the college prefers test blind if and only if 
\[-\Delta^2 \beta^2 +2 \Delta (1+\delta) \beta - (1+\delta)\ge 0.\]
The LHS is a concave quadratic that is negative at $\beta = \frac{1}{2 \Delta}$ (equal to $-1/4$) and positive at $\beta = \frac{1}{\Delta}$ (equal to $ \delta$). Hence, there exists $\beta^* \in (\frac{1}{2 \Delta}, \frac{1}{\Delta})$ such that the college prefers test blind when $\beta > \beta^*$ and test mandatory when $\beta < \beta^*$. Using the quadratic formula, $\beta^* = \frac{1+ \delta - \sqrt{\delta (1+\delta)}}{\Delta}$.

\item Since \eqref{eq:whentb} is symmetric with respect to $\beta$ and $\Delta$, the argument of the previous part goes through unchanged after swapping $\beta$ and $\Delta$. We get $\Delta^* = \frac{1+\delta - \sqrt{\delta (1+\delta)}}{\beta}$.

\item If $\beta \Delta \in (0,1/2)$, then the LHS of \eqref{eq:whentb} is nonpositive and the RHS is strictly positive, implying that test mandatory is optimal.

If $\beta \Delta > 1/2$, then we can rewrite \eqref{eq:whentb} as $\delta \ge \frac{(1-\beta \Delta)^2}{2 \beta \Delta -1}$, and hence the result holds for $\delta^*  = \frac{(1-\beta \Delta)^2}{2 \beta \Delta -1}>0$. \qedhere
\end{enumerate}
\end{proof}

\begin{proof}[Proof of \autoref{prop:society}.]

As in \eqref{eq:squareloss}, at a given $(x_0, t)$, society's loss relative to its first best when the college's chosen $x_1$-cutoff for admission is $r$ above society's ideal cutoff is $\int_0^r f x \mathrm{d}x = \frac{f}{2} r^2$.  Society's expected loss across all values of $x_0$ and $t$ is equal to the expectation of $ \frac{f}{2} r^2$ over the distribution of $r$, with $r$ the difference between the chosen cutoff (which may depend on $x_0$ and $t$) and society's ideal cutoff (which depends only on $t$). 
Since the loss $\frac{f}{2} r^2$ is convex in $r$, mean-preserving spreads in the distribution of these cutoff differences make society worse off.

\underline{Part \ref{part:socpayoffaa}}. Fix any testing regime. The distribution of cutoffs at each test score when affirmative action is allowed is a mean-preserving spread of the distribution when affirmative action is banned. Hence, society prefers banning affirmative action.

\underline{Part \ref{part:socpayofftest}}. First, suppose that affirmative action is allowed. Fix some type $x_0=x'_0$, at which the college has a utility bonus of $h := u^c (x'_0,x_1, t) - u^s(x'_0, x_1, t) = \beta \ind_{x'_0=g}-c$ relative to society. Under test mandatory,  at each test score, the chosen $x_1$-cutoff is $\frac{h}{1+\delta}$ above society's ideal cutoff. Under test blind, at $t=1$, the chosen cutoff is $1-p_{x'_0} +\frac{h}{1+\delta}$ above society's cutoff; and at $t=0$, the chosen cutoff is $-p_{x'_0} +\frac{h}{1+\delta}$ above society's cutoff. Hence, under test blind, at each type $x'_0$, the distribution of the chosen cutoff minus society's cutoff is given by
\[\begin{cases}
1-p_{x'_0} +\frac{h}{1+\delta} &\text{with probability } p_{x'_0} \\
-p_{x'_0} +\frac{h}{1+\delta} &\text{with probability } 1- p_{x'_0}.
\end{cases}\]
This distribution is a mean-preserving spread of the constant $\frac{h}{1+\delta}$. Hence, society is worse off under test blind for each realization $x'_0$ of $x_0$, and so is worse off in expectation.

Next, suppose that affirmative action is banned. As also defined in the proof of \autoref{prop:aabanned}, we let $ET := \mathbb{E}[t] = q p_r + (1-q) p_g$ denote the average test score in the population, i.e., the share of students with test score $t=1$. At test score $t$, the college's ideal $x_1$-cutoff is $-t - \beta P_g^t +c$ (recall $P_g^t = \Pr(x_0=g | t)$), and society's ideal $x_1$-cutoff is $-t$.

Under test mandatory with affirmative action banned, the college's chosen $x_1$-cutoff is $- \frac{1}{1+\delta} ( \beta P_g^t-c)$ above society's ideal point at test score $t$. That is, a share $ET$ of students have cutoffs $- \frac{1}{1+\delta} ( \beta P_g^{1}-c)$ above society's ideal point, and a share $1-ET$ have cutoffs $ -\frac{1}{1+\delta} (\beta P_g^{0}-c)$ above. Plugging in $q = (ET)P_g^1 + (1-ET) P_g^0$ and $\Delta = P_g^0 - P_g^1$, some algebra yields that the distribution of chosen cutoffs minus society's ideal cutoffs is
\begin{align} \label{eq:MandAAbanDist}
\begin{cases}
-\frac{1}{1+\delta} (\beta q -c) + (1-ET)\frac{\beta \Delta}{1+\delta} &\text{with probability } ET
\\
-\frac{1}{1+\delta} (\beta q - c) - ET \frac{ \beta \Delta }{1+\delta} &\text{with probability } 1- ET.
\end{cases}
\end{align}

Under test blind with affirmative action banned, the college's chosen $x_1$ cutoff is $-ET -\frac{1}{1+\delta}(\beta q-c) $. This means that for the $ET$ share of students with $t=1$, the chosen $x_1$-cutoff is $-\frac{1}{1+\delta}(\beta q-c) + (1 - ET)$ above society's ideal cutoff of $-1$; for the $1-ET$ share with $t=0$, the chosen cutoff is $-\frac{1}{1+\delta}(\beta q-c ) - ET $ above society's ideal cutoff of $0$. That is, the distribution of chosen cutoffs minus society's ideal cutoffs is
\begin{align} \label{eq:BlindAAbanDist}
\begin{cases}
-\frac{1}{1+\delta} (\beta q -c) + (1-ET) &\text{with probability } ET
\\
\frac{1}{1+\delta} (\beta q- c) - ET  &\text{with probability } 1- ET.
\end{cases}
\end{align}
Since $\beta \Delta < 1$ (by assumption) and $1+\delta>1$, the distribution in \eqref{eq:BlindAAbanDist} is a mean-preserving spread of that in \eqref{eq:MandAAbanDist}. Hence, when affirmative action is banned, society prefers test mandatory to test blind. 

\underline{Part \ref{part:stackelberg}}. From \autoref{prop:aabanned}, if $(1+\delta)(2\beta \Delta-1)<(\beta  \Delta)^2$, then the college chooses test mandatory under an affirmative action ban. If $(1+\delta)(2\beta \Delta-1)>(\beta  \Delta)^2$, which implies $\beta \Delta >1/2$, the college chooses test blind under an affirmative action ban. 

So, when $\beta \Delta \in (0, 1/2]$, society prefers to ban affirmative action: it prefers test mandatory and no affirmative action to test mandatory with affirmative action (by part \ref{part:socpayoffaa}).

Now  suppose that $\beta \Delta > 1/2$. Let $\underline \delta:= \frac{(\beta \Delta)^2}{2 \beta \Delta - 1}-1$ be the solution to  $(1+\delta)(2\beta \Delta-1)=(\beta  \Delta)^2$. For $\delta < \u \delta$, the college chooses test mandatory, in which case society prefers to ban affirmative action. For $\delta>\u \delta$, the college chooses test blind.
In this case, we need to compare society's payoff of test mandatory with affirmative action versus test blind without affirmative action.

The distribution of chosen $x_1$-cutoffs minus society ideal cutoffs under test mandatory with affirmative action is 
\[
\begin{cases}
\frac{\beta-c}{1+\delta}  &\text{with probability } q
\\
\frac{-c}{1+\delta} &\text{with probability } 1-q.
\end{cases}
\]
Society's corresponding payoff loss is
\begin{align} \label{eq:soclossNoNo} \frac{f}{2 (1+\delta)^2} \left(c^2 - 2 c q \beta + q \beta^2\right).\end{align}

The distribution of cutoffs minus society ideal points under test blind without affirmative action is given by \eqref{eq:BlindAAbanDist}. Society's payoff loss is correspondingly
\begin{align}\frac{f}{2 (1+\delta)^2} \left((\beta q - c)^2 + (1-ET) ET (1+\delta)^2\right)\label{eq:soclossYesYes} \end{align}
with $ET = q p_g + (1-q) p_r$.

The sign of\eqref{eq:soclossYesYes} minus \eqref{eq:soclossNoNo} tells us whether society prefers test mandatory with affirmative action or test blind without affirmative action. The sign of that difference is the same as the sign of 
\mbox{$ET(1-ET)(1+\delta)^2 - q (1-q) \beta^2$}. 
This expression equals zero when $\delta$ equals $$\delta ' := \beta \sqrt{\frac{q (1-q)}{ET(1-ET)}}-1.$$ 
When $\delta > \delta'$, society prefers test mandatory with affirmative action to test blind without affirmative action; when $\delta < \delta'$, the preference is reversed.

Finally, let $\o \delta := \max \{\u \delta, \delta'\}$. We now see that (i) when $\delta > \u \delta$, the college chooses test blind if affirmative action is banned; and (ii) taking into account the college's response in choosing its testing regime, society prefers to ban affirmative action if $\delta < \o \delta$, and prefers to allow affirmative action if $\delta > \o \delta$.
\end{proof}

\section{Competition Examples}
\label{app:competition}

This section provides three numerical examples to substantiate the discussion in \autoref{sec:competition} about competition. 

All three examples have two colleges. There is a single observable, and thus we omit the dependence of all variables on $x$. At this observable, students have test scores uniformly distributed between 0 and 100. 
Two identical colleges have underlying utility $u^c(t) = t-\underline t^c$; society has utility $u^s(t) =t-\underline t^s$; and the colleges place a weight $\delta = 1$ on social pressure, implying ex-post utilities $u^*(t)=t-\underline t^*$ with $\underline t^* = (\underline t^c+\underline t^s)/2$. A test-optional college is restricted to impute $\tau = 50$, the average test score, for nonsubmitters. If admitted by both colleges, a student chooses betwen them uniformly at random.

\begin{example}[Complements due to adverse selection]
Let $\underline t^c=5$ and $\underline t^s=25$, implying $\underline t^*=15$. Here, society is more selective than the college. Our calculations below  will establish strategic complementarity. That is, when a college's competitor is test mandatory, the college prefers to be test mandatory; but when a college's competitor is test optional, the college prefers to be test optional.

A test-mandatory college admits students with $t>\underline t^* = 15$ and rejects students with $t \le 15$. Assume that a test-optional college admits nonsubmitters with $t \in [0,50]$, which is optimal so long as the yield-weighted average test score of nonsubmitters is above $\underline t^*=15$, as will be the case. The test-optional college also admits the submitting students with $t>50$.

A test-mandatory college facing another test-mandatory college has yield of $1/2$ for all of the students it admits, since the other college makes identical admission decisions. The college then gets underlying utility of $\frac{1}{2}\int_{15}^{100} \frac{1}{100}(t-5) \mathrm dt = 22.3125$, social pressure costs of $\frac{1}{2}\int_{15}^{25}\frac{1}{100}(25-t)\mathrm t = .25$, and a net payoff of $ \simeq 22.1$. 

A test-mandatory college facing a test-optional college also has yield of $1/2$ for all of the students it admits, because the other college admits all students. So its payoff is also $\simeq22.1$.

A test-optional college facing another test-optional college has yield of $1/2$ for all students. The yield-weighted average test score for nonsubmitters is just the unweighted expectation of $25$. The college's underlying utility from admitting every student is $\frac{1}{2}\int_{0}^{100}\frac{1}{100}(t-5)\mathrm dt  = 22.5$, and social pressure costs are 0. So its payoff is $22.5$.

Finally, a test-optional college facing a test-mandatory college has a yield of $1/2$ for students with $t>15$, and a yield of $1$ for students with $t\le 15$. The yield-weighted average test score for nonsubmitters is $20.9615$.\footnote{The average test score between 0 and 15 is 7.5; the average test score between 15 and 50 is 32.5; and the weighted average, putting a weight of $1/2$ on test scores between 15 and 50, is $(7.5 \cdot (15-0) + 1/2 \cdot 32.5 \cdot (50-15)) / ((15-0) + 1/2 \cdot (50-15))$.} The college's underlying utility from admitting every student is $22.5$, and social pressure costs are $(25-20.9615) \cdot \frac{1}{100} \cdot  (1 \cdot (15-0) + \frac{1}{2}(50-15))=1.3125$. Its payoff is $\simeq 21.2$. 

We see that when a college's competitor is test mandatory, the college prefers to be test mandatory ($22.1>21.2$). When a college's competitor is test optional, the college prefers to be test optional ($22.5>22.1$). We also see that a test-optional college prefers its competitor to be test-optional ($22.5>21.2$).
 
\end{example}

\begin{example}[Substitutes due to adverse selection] \label{ex:substitutes_adverse}
Let $\underline t^c=50$ and $\underline t^s=20$, implying $\underline t^*=35$. Here, the college is more selective than society. Our calculations below  will establish strategic substitutability. That is, when a college's competitor is test mandatory, the college prefers to be test optional; but when a college's competitor is test optional, the college prefers to be test mandatory.

A college that is test mandatory admits students with $t>\underline t^* = 35$ and rejects students with $t \le 35$. Assume that a test-optional college rejects nonsubmitters with $t \in [0,50]$, which is optimal so long as the yield-weighted average test score of nonsubmitters is below $\underline t^*=35$, as will be the case. The test-optional college also admits the submitting students with $t>50$.

A test-mandatory college facing another test-mandatory college has yield of $1/2$ for students with $t>35$ and yield of $1$ for students with $t\le 35$, since the other college makes identical admission decisions. The college then gets underlying utility of $\frac{1}{2}\int_{35}^{100} \frac{1}{100}(t-50) \mathrm dt = 5.6875$, social pressure costs of $\int_{20}^{35}\frac{1}{100}(t-20)\mathrm dt = 1.125$, and a net payoff of $\simeq 4.6$. 

A test-mandatory college facing a test-optional college has a yield of $1/2$ for students with $t>50$ and a yield of 1 for students with $t\le 50$. So it gets underlying utility of $\int_{35}^{50} \frac{1}{100}(t-50) \mathrm dt+\frac{1}{2}\int_{50}^{100} \frac{1}{100}(t-50) \mathrm dt = 5.125$, social pressure costs of $\int_{20}^{35}\frac{1}{100}(t-20) \mathrm dt = 1.125$, and a net payoff of $4$. 

A test-optional college facing another test-optional college has a yield of $1/2$ for students with $t>50$ and yield of 1 for students with $t\le 50$. The yield-weighted average test score for nonsubmitters is just the unweighted expectation of $25$. The college's underlying utility from rejecting nonsubmitters and accepting submitters with $t>50$ is $\frac{1}{2}\int_{50}^{100}\frac{1}{100}(t-50)\mathrm dt  = 6.25$, and social pressure costs are $\frac{1}{100} (25-20) (50-0) = 2.5$. So its payoff is $3.75$.

Finally, a test-optional college facing a test-mandatory college has a yield of $1/2$ for students with $t>35$, and a yield of $1$ for students with $t\le 35$. The yield-weighted average test score for nonsubmitters is $21.9118$.\footnote{The average test score between 0 and 35 is 17.5; the average test score between 35 and 50 is 42.5; and the weighted average, putting a weight of $1/2$ on test scores between 35 and 50, is $(17.5 \cdot (35-0) + 1/2 \cdot 42.5 \cdot (50-35)) / ((35-0) + 1/2 \cdot (50-35))$.} The college's underlying utility from rejecting nonsubmitters and accepting submitters is $6.25$, as above, and social pressure costs are $(21.9118-20) \cdot \frac{1}{100} \cdot (1\cdot (35-0) + 1/2 \cdot (50-35))=.812515$. Its payoff is $\simeq 5.4$. 

We see that when a college's competitor is test mandatory, the college prefers to be test optional at this observable ($5.4>4.6$). When a college's competitor is test optional, the college prefers to be test mandatory ($4>3.75$). We also see that a test-optional college prefers its competitor to be test-mandatory ($5.4>3.75$).
\end{example}

\begin{example}[Substitutes due to cherry picking]
Let $\underline t^c=40$ and $\underline t^s=25$, implying $\underline t^*=32.5$. Here, the college is again more selective than society. As in \autoref{ex:substitutes_adverse}, our calculations below will establish strategic substitutability, but the mechanism now owes to ``cherry-picking'' rather than adverse selection.

A college that is test mandatory admits students with $t>\underline t^* = 32.5$ and rejects students with $t \le 32.5$. As in \autoref{ex:substitutes_adverse}, a test-optional college rejects nonsubmitters with $t \in [0,50]$ and admits the submitting students with $t>50$.

In contrast to \autoref{ex:substitutes_adverse}, the payoff of a test-optional college is now independent of its competitor's testing regime: regardless of whether the competitior is test mandatory or test optional, the college gets a yield of $1/2$ for the submitters that it admits, and it faces no social pressure costs for the nonsubmitters that it rejects. (Society's bar is 25; and the yield-weighted average test score is $25$ when the competitor is test optional, and it is below $25$ when the competitor is test mandatory.) The test-optional college's payoff is thus $\frac{1}{2}\int_{50}^{100}\frac{1}{100}(t-40)\mathrm dt  = 8.75$.

A test-mandatory college facing a test-mandatory competitor has a yield of $1/2$ for students with $t>32.5$ and a yield of $1$ for students with $t\le 32.5$, since the other college makes identical admission decisions. The college then gets underlying utility of $\frac{1}{2}\int_{32.5}^{100} \frac{1}{100}(t-40) \mathrm dt = 8.86$, social pressure costs of $\int_{25}^{32.5}\frac{1}{100}(t-25)\mathrm dt = 0.28$, and a net payoff of $\simeq 8.58$. 

A test-mandatory college facing a test-optional competitor has a yield of $1/2$ for students with $t>50$ and yield of 1 for students with $t\le 50$. So it gets underlying utility of $\int_{32.5}^{50} \frac{1}{100}(t-40) \mathrm dt+\frac{1}{2}\int_{50}^{100} \frac{1}{100}(t-40) \mathrm dt = 8.97$, social pressure costs of $\int_{25}^{32.5}\frac{1}{100}(t-25)\mathrm dt = 0.28$, and a net payoff of $8.79$. 

We see that, as in \autoref{ex:substitutes_adverse},  when a college's competitor is test mandatory, this college prefers to be test optional ($8.75>8.58$). When a college's competitor is test optional, this college prefers to be test mandatory ($8.79>8.75$). We also see that a test-mandatory college prefers its competitor to be test-optional ($8.79>8.58$), because that allows it to cherry-pick the students with $t \in(32.5,50)$---whom it wants to admit, on average---without competition.  
\end{example}

\end{document}